\newcommand{\IQPV}{\rm IQPV}
\newcommand{\nm}{d_1}
\newcommand{\nmZ}{d_2}
\newcommand{\AI}{A{\rm I}}
\newcommand{\AII}{A{\rm I\!I}}
\newcommand{\AIII}{A{\rm I\!I\!I}}
\newcommand{\CI}{C{\rm I}}
\newcommand{\CII}{C{\rm I\!I}}
\newcommand{\BDI}{BD{\rm I}}
\newcommand{\DIII}{D{\rm I\!I\!I}}
\newcounter{seqnumb}[section]
\newtheorem{theorem}[seqnumb]{Theorem}
\newtheorem{lemma}[seqnumb]{Lemma}
\newtheorem{proposition}[seqnumb]{Proposition}
\newtheorem{corollary}[seqnumb]{Corollary}
\newtheorem{definition}[seqnumb]{Definition}
\theoremstyle{remark}
\newtheorem{remark}[seqnumb]{Remark}
\newtheorem{example}[seqnumb]{Example}
\title[$\mathbb{Z}_2$ equivariant quasi-particle vacua]
{Bott periodicity for \\ $\mathbb{Z}_2$ symmetric ground states of \\ gapped free-fermion systems}
\author{R.\ Kennedy \& M.R. Zirnbauer}
\date{September 6, 2014}
\begin{document}
\maketitle

\begin{abstract}
Building on the symmetry classification of disordered fermions, we give a proof of the proposal by Kitaev, and others, for a ``Bott clock'' topological classification of free-fermion ground states of gapped systems with symmetries. Our approach differs from previous ones in that (i) we work in the standard framework of Hermitian quantum mechanics over the complex numbers, (ii) we directly formulate a mathematical model for ground states rather than spectrally flattened Hamiltonians, and (iii) we use homotopy-theoretic tools rather than $K$-theory. Key to our proof is a natural transformation that squares to the standard Bott map and relates the ground state of a $d$-dimensional system in symmetry class $s$ to the ground state of a $(d+1)$-dimensional system in symmetry class $s+1$. This relation gives a new vantage point on topological insulators and superconductors.
\end{abstract}

\section{Introduction}

In this article we address the following problem of mathematical physics. (We first formulate the mathematical problem as such, and then indicate its origin in physics.) Let there be a Hermitian vector space $W \equiv (\mathbb{C}^{2n}, \left\langle \cdot , \cdot \right\rangle)$ with $n$ a sufficiently large integer, and let $W$ have the additional structure of a non-degenerate symmetric complex bilinear form $\{ \cdot , \cdot \}$. Assume that $W$ carries an action by operators $J_1 , \ldots, J_s$ that satisfy the Clifford algebra relations
\begin{equation}\label{Eq:Cliff-s}
    J_l\, J_m + J_m\, J_l = - 2 \delta_{lm}\, \mathrm{Id}_W \quad (l,m = 1, \ldots, s)
\end{equation}
and preserve $\langle \, , \, \rangle$ as well as $\{ \, , \, \}$. Moreover, let $M$ be a $d$-dimensional manifold, namely momentum space or phase space, with an involution $\tau : \; M \to M$ whose physical meaning is momentum inversion. Our objects of interest then are rank-$n$ complex vector bundles $\pi : \; \mathcal{A} \to M$ whose fibers $A_k = \pi^{-1}(k) \subset W$ are constrained for all $k \in M$ by the conditions
\begin{equation}\label{Eq:VB-conds}
    0 = \{ A_k , A_{\tau(k)} \} = \left\langle A_k , J_1 A_k \right\rangle = \ldots = \left\langle A_k , J_s\, A_k \right\rangle .
\end{equation}
We refer to them as vector bundles of symmetry class $s$, or class $s$ for short. The goal is to give a homotopy classification for the ``classifying maps'' $k \mapsto A_k$ of such vector bundles. In the present paper we achieve this goal for the case of the $d$-sphere, $M = \mathrm{S}^d$, and for $n$ large enough relative to $d$. A companion paper \cite{KG14} deals with the case of $M = \mathrm{T}^d$.

\subsection{Motivation}

We now explain briefly how the posed problem arises from a situation of current interest in condensed matter physics. Our objects of study are systems of ``free fermions''; more precisely, systems of fermions described in the Hartree-Fock-Bogoliubov (HFB) mean-field approximation by any Hamiltonian which is quadratic in the operators creating or annihilating a single particle. The many-particle ground state of such a system is called an HFB mean-field ground state, or free-fermion ground state, or quasi-particle vacuum. The homotopy theory developed in the present paper addresses those cases where the Hamiltonian commutes with a group $\Gamma$ of translations in real space so that momentum is conserved. In such a situation, the HFB mean-field ground state is a product state factoring in the single-particle momentum $k \in \widehat{\Gamma} \equiv M$ and is determined uniquely by its collection $\{ A_k \}_{k \in M}$ of spaces of quasi-particle annihilation operators. (An element of the complex vector space $A_k$ is an operator that annihilates a fermion in a state with momentum $k$ or is an operator that creates a fermion in a state of momentum $-k \equiv \tau(k)$ or is any complex linear combination of these two types of operator.) The constant $n = \dim A_k$ equals the sum of the number of conduction and valence bands of the fermion system. We assume that the Hamiltonian has finite range in position space and is gapped, describing a band insulator or gapped superconductor. These assumptions ensure that the vector spaces $A_k$ depend continuously on the momentum $k$ and thus constitute a vector bundle, $\mathcal{A}$, over $M$.

The rank-$n$ complex vector bundles $\mathcal{A} \to M$ arising in this way come with some extra structure as formulated in (\ref{Eq:VB-conds}). Firstly, the condition $\{ A_k , A_{\tau(k)} \} = 0$ on $\tau$-opposite fibers expresses the fundamental property of Fermi statistics that all operators of a set of annihilation operators must have vanishing anti-commutators with one another. Secondly, the Hermitian orthogonality conditions in (\ref{Eq:VB-conds}) express the consequences of certain symmetries that constrain the Hamiltonian of the gapped system and hence translate into symmetries of the ground state. More precisely, for $s = 0$ there exist no symmetries (other than translations). For $s = 1$, the system has one anti-unitary symmetry, $T$, namely the operation of time reversal. If $\gamma$ denotes Hermitian conjugation, which is a complex anti-linear operation exchanging creation and annihilation operators, then $T$-invariance of the quasi-particle vacuum is concisely expressed by the condition $\left\langle A_k , J_1 A_k \right\rangle = 0$ with $J_1 = \gamma\, T$. We assume that our fermions have half-integer spin, so that $T^2 = - \mathrm{Id}$ and $J_1^2 = - \mathrm{Id}$. For $s = 2$, conservation of particle number enters as an additional, unitary symmetry. This is expressed by the operator $Q$ for charge or particle number. If the Hamiltonian commutes with $Q$, the quasi-particle vacuum is an eigenstate of $Q$. By conservation of momentum, this property can be expressed by the condition $\langle A_k , J_2\, A_k \rangle = 0$ with $J_2 = \mathrm{i} Q J_1$. For $s = 3$, we add another anti-unitary symmetry, namely particle-hole conjugation $C$, which leads to a third condition, $\langle A_k , J_3\, A_k \rangle = 0$. All operators $J_1, J_2, \ldots$ are unitary and preserve the bracket $\{ \, , \, \}$ encoding the canonical anti-commutation relations for Fock operators. Moreover, they obey the Clifford algebra relations (\ref{Eq:Cliff-s}). To go beyond $s = 3$, we observe that four pseudo-symmetries $J_1, J_2, J_3, J_4$ have the same effect on $A_k$ as the quaternion algebra of $\mathrm{SU}_2$ spin-rotation symmetries, by a result known as $(1,1)$ periodicity; see Section \ref{sect:setting} for the details. This should suffice for now to motivate the mathematical setting sketched in Eqs.\ (\ref{Eq:Cliff-s}) and (\ref{Eq:VB-conds}).

\subsection{Relation to previous work}

The goal of the present paper is to give a homotopy-theoretic classification of symmetry-protected topological phases of gapped free-fermion systems with symmetries as sketched above. The investigation of this classification problem was pioneered by Schnyder, Ryu, Furusaki, and Ludwig \cite{SRFL2008} who observed that there exist, in every space dimension, 5 symmetry classes (among the 10 classes of the ``Tenfold Way'' of disordered fermions \cite{AZ,HHZ}) that house topological insulators or superconductors robust to disorder. Building on this observation, Kitaev recognized the mathematical principle behind the emerging pattern, which he named the ``Periodic Table for topological insulators and superconductors'' \cite{kitaev}. He understood that the constraints due to physical symmetries can be formulated as an extension problem for the Clifford algebra (\ref{Eq:Cliff-s}) of what we propose to call ``pseudo-symmetries'', and he saw the close connection with a mathematical phenomenon known as Bott periodicity. He also advocated $K$-theoretic methods as a tool to compute the topological invariants characterizing the different symmetry-protected topological (SPT) phases. A pedagogical discussion of some points outlined by Kitaev was offered by Stone, Chiu, and Roy \cite{StoneEtAl}. Symmetry aspects were elaborated by Abramovici and Kalugin \cite{AK}.

A remarkable extension of the Bott-type periodicity phenomenon for free-fermion SPT phases was proposed by Teo and Kane \cite{TeoKane}, who introduced position-like dimensions (associated with defects) in addition to the momentum-like dimensions considered in earlier work. Freedman, Hastings, Nayak, Qi, Walker, and Wang \cite{FHNQWW} developed this idea further and pointed out that their results lead to a mathematical proof of Kitaev's Periodic Table if one assumes (with referral to unpublished notes by Kitaev) that gapped lattice Hamiltonians are stably equivalent to Dirac Hamiltonians with a spatially varying mass term. Fidkowski and Kitaev \cite{FidkowskiKitaev} gave a complete classification of one-dimensional systems including interactions. A recent treatise on the subject at large is by Freed and Moore \cite{FreedMoore}, who set up a comprehensive framework based on the Galilean group and review the relevant notions of twisted equivariant $K$-theory, an algebraic variant of which is treated in \cite{Thiang}.

Let us now highlight the main differences between our work and the current literature. Firstly, to the extent that only the static properties (as opposed to the dynamical response) of the physical system are under investigation, the classification problem at hand is a problem of classifying \emph{ground states} -- that, in any case, is how we view it. Thus we never make any direct reference to a Hamiltonian. (Aside from a locality condition to ensure the continuity of the vector bundle $\mathcal{A} \to M$, the only information we need about the Hamiltonian is its symmetry class, as this determines the symmetry class of the ground state.) In particular, there will be no need for any process of ``flattening'' the Hamiltonian in this paper.

Secondly, all our symmetries are \emph{true} symmetries in the sense that they \emph{commute} with the Hamiltonian and leave the ground state invariant. ``Symmetries'' that anti-commute with the Hamiltonian (such as chirality for the massless Dirac operator) do not appear in our work.

Thirdly, a crucial element of our approach is that we work over the complex number field throughout. As a matter of fact, the vector bundles singled out by the constraints (\ref{Eq:VB-conds}) are \emph{complex}, i.e.\ their fibers are complex vector spaces. While some of them can be regarded as Real vector bundles in the sense of Atiyah \cite{atiyah}, others cannot be. Moreover, although the operator of Hermitian conjugation does single out a real (or Majorana) subspace $\mathbb{R}^{2n} \subset \mathbb{C}^{2n} = W$, one of our discoveries is that one should keep this real structure flexible in order to attain the best overall perspective. (In fact, the formulation and proof of our results employs two different notions of taking the complex conjugate!)

Finally, and most importantly, our work differs from the work of other authors by the principle of topological classification used. Starting with Kitaev \cite{kitaev}, most of the past and present literature has relied on the algebraic tools of $K$-theory to compute the topological invariants given by stable isomorphism classes of vector bundles. An exception is the approach in \cite{deNittis-AI,deNittis-AII}, where ordinary (as opposed to stable) isomorphism classes of vector bundles are computed for two of the ten symmetry classes. (These are the classes $A$I and $A$II, which are special in that they permit a description of ground states by Real and Quaternionic vector bundles, respectively. In the present paper we will encounter vector bundles of a more general kind.) In contradistinction, the present work uses homotopy-theoretic methods to establish a \emph{homotopy} classification for the classifying maps of the vector bundles $\mathcal{A} \to M$. It has to be emphasized that the equivalence relation of homotopy is finer than that of (ordinary or stable) isomorphisms of vector bundles: ordinary isomorphism classes are recovered for a large number of valence bands, while stable isomorphism classes are recovered if both the valence and the conduction bands are large in number.

\subsection{Results}

The following is a summary of the progress made in the present paper.

Our first result is a demonstration from first principles as to why the eight ``real'' symmetry classes of the Tenfold Way are to be put in the particular sequence featured by Kitaev's Periodic Table. Without invoking the after-the-fact reason of Bott-type periodicity, we build up Kitaev's sequence recursively by imposing true physical symmetries in a distinguished order. In this recursive process, each physical symmetry translates to a so-called pseudo-symmetry $J_l\,$, which in turn gives rise to one of the $s$ conditions $\langle A_k , J_l\, A_k \rangle = 0$ of Eq.\ (\ref{Eq:VB-conds}).

Our second result is an invariant and universal description of a map put forward by Teo and Kane \cite{TeoKane}, taking a ground state of (symmetry) class $s$ in $d$ dimensions and turning it into a ground state of class $s+1$ in $d+1$ dimensions. To state this more precisely, let $C_s(n)$ denote the classifying space for vector bundles of class $s$ with $W = \mathbb{C}^{2n}$. For each $s$, the bilinear form $\{ \, , \, \}$ determines an involution $\tau_s$ that sends any $A \in C_s(n)$ to its annihilator $A^\perp \in C_s(n)$ given by $\{ A , A^\perp \} = 0$. The first equation in (\ref{Eq:VB-conds}) is then restated as the condition $\tau_s (A_k) = A_{ \tau(k)}$, which can be interpreted as saying that the classifying map $k \mapsto A_k$ is equivariant with respect to a group $\mathbb{Z}_2$ whose non-trivial element acts on $M$ and $C_s(n)$ by $\tau$ and $\tau_s$ respectively.

Now, following the original paper of Bott \cite{Bott1959} we assign to every point $A$ of $C_s(n) \simeq C_{s+2}(2n)$ a minimal geodesic $[0,1] \ni t \mapsto \beta_t(A)$ joining some distinguished point of $C_{s+1} (2n)$ to its antipode. The operation of forming the geodesic can be concatenated with the classifying map $k \mapsto A_k\,$, and it preserves $\mathbb{Z}_2 $-equivariance in the sense that $\tau_{s+1}(\beta_t(A)) = \beta_{1-t} (\tau_s(A))$. The final outcome of this invariant and universal construction is what we call the ``diagonal map'', taking a vector bundle $\mathcal{A} \to M$ of class $s$ and transforming it into another vector bundle $\tilde{\mathcal{A}} \to \tilde{S} M$ of class $s+1$, where $\tilde{S} M$ denotes the momentum-type suspension of $M$. The diagonal map is, in a certain sense, a ``square root'' of the original Bott map. Indeed, a key step of our treatment is to take the square of the diagonal map and show that the outcome, properly understood, is the Bott map.

The diagonal map induces a mapping in homotopy --- to be precise: a mapping between homotopy classes of base-point preserving and $\mathbb{Z}_2 $-equivariant classifying maps; or, in formulas: from $[M, C_s(n)]_\ast^{ \mathbb{Z}_2}$ to $[\tilde{S} M, C_{s+1}(2n) ]_\ast^{ \mathbb{Z}_2}\,$; or, physically speaking: between SPT phases of gapped free fermions in adjacent symmetry classes and dimensions.

Our third and main achievement is a homotopy-theoretic proof that this map is bijective under favorable conditions. The precise statements are laid down in Theorems \ref{thm:7.1} and \ref{thm:7.2}. To give a quick summary, let $M$ be a path-connected $\mathbb{Z}_2$-CW complex with base point $k_\ast$ fixed by the $\mathbb{Z}_2$-action, and let $A_\ast \in C_s(n) \subset C_{s-1}(n) \simeq C_{s+1}(2n)$ be a target-space base point also fixed by the $\mathbb{Z}_2$-action. Then if $\dim M \ll n$ there exist two bijections,
\begin{equation}\label{eq:main-result}
    [S M, C_{s-1}(n) ]_\ast^{\mathbb{Z}_2} \simeq [M, C_s(n)]_\ast^{ \mathbb{Z}_2} \simeq [\tilde{S} M, C_{s+1}(2n) ]_\ast^{ \mathbb{Z}_2} \,,
\end{equation}
between homotopy classes of base-point preserving and $\mathbb{Z}_2$-equivariant maps. The left one, where $S M$ denotes the usual suspension (which is position-type, i.e., the space direction added in going from $M$ to $S M$ is acted upon trivially by the extension of the involution $\tau$), follows rather directly from the Bott periodicity theorems, by employing the $G$-Whitehead Theorem for $G = \mathbb{Z}_2$ in order to transcribe these classical results to our setting.

The right bijection is more difficult to establish. While a direct approach using Morse theory might be feasible, our strategy of proof is to establish a connection with the classical Bott isomorphism relating the stable homotopy groups of symmetric spaces. For this we first consider the special set of symmetry indices $s \in 4\mathbb{N}-2$. In these cases, there exists a certain fibration that connects our diagonal map with the standard Bott map; thus, the right bijection in (\ref{eq:main-result}) follows from the left bijection by an isomorphism due to the projection map $p$ of the fibration. (Applying $p$ amounts to taking a square, which is the squaring operation alluded to earlier.) Sadly, for $s \notin 4\mathbb{N}-2$ the said fibration is not available for finite $n$, although it does exist in the $K$-theory limit of infinite $n$ \cite{Giffen}. Therefore, we need to employ an additional argument to complete the proof. Adapting an idea of Teo and Kane \cite{TeoKane}, we make use of generalized momentum spaces $(M, \tau)$ with position-like as well as momentum-like directions. We then use the left bijection in Eq.\ (\ref{eq:main-result}) to dial the symmetry index $s$ to a value where the isomorphism by $p$ applies. The desired result now follows from $S^r \tilde{S} M = \tilde{S} S^r M$.

If $M$ is a sphere carrying any $\mathbb{Z}_2$-action, the bijections in Eq.\ (\ref{eq:main-result}) reproduce the result of \cite{TeoKane}, which in turn generalizes the Periodic Table presented in \cite{kitaev}.

In the results stated above, we gave ourselves the luxury of making the simplifying assumption $d = \dim M \ll n$. Our final result are practically useful bounds on $d$ (as a function of $n$) delineating the region of validity of Kitaev's Periodic Table and its generalization by Teo and Kane. The method of derivation used is stable inclusion of symmetric spaces.

\subsection{Plan of paper}

This paper is organized as follows. In Section \ref{sect:setting} we set up the vector-bundle description of translation-invariant ground states of gapped free-fermion systems for all symmetry classes, starting with class $s = 0$ (no symmetries; also known as class $D$) and increasing the number of (pseudo-)symmetries up to $s = 7$. The passage from vector bundles to an equivalent description by classifying maps is made in Section
\ref{sect:ClassMaps}. There we also give a number of examples illustrating the difference between the topological classification by homotopy classes of classifying maps, isomorphism classes of vector bundles, and the stable equivalence of $K$-theory. In Section \ref{sect:diag-map} we formulate the diagonal map and illustrate it at the special example of making the steps from $(s=0,d=0)$ to $(1,1)$ and further to $(2,2)$.

The role of Section \ref{sect:5} is to collect the results of homotopy theory relevant to our problem. We state the $G$-Whitehead Theorem and recall the classical Bott periodicity theorem in the complex and real settings. We also exploit the property of $\mathbb{Z}_2$-equivariance to reformulate homotopy as relative homotopy. By using all this information, we prove in Section \ref{sect:6} that the diagonal map induces a bijection in homotopy for the symmetry classes $s = 2$ and $s = 6$. In Section \ref{sect:7} we extend and complete the argument so as to cover all classes $s$. The final Section \ref{sect:8} presents the precise bounds on stability.

\section{From symmetries to vector bundles}\label{sect:setting}
\setcounter{equation}{0}

We begin with some notation and language. A quasi-particle vacuum (or free-fermion ground state, or Hartree-Fock-Bogoliubov mean-field ground state) is a state in Fock space which is annihilated by a set of (quasi-)particle annihilation operators. Two well-known examples are Hartree-Fock ground states, which have a definite particle number, and the paired states of the BCS theory of superconductivity.

To give a precise description of such many-fermion ground states, we set out from the formalism of second quantization. We assume that our translation-invariant physical system (with momentum space $M$) is built from a unit cell of Hilbert space dimension $n$. Single-particle states are then characterized by their momentum $k \in M$ and a band index $j = 1, \ldots, n$. The single-particle creation and annihilation operators (denoted by $c_{k,j}^\dagger$ resp.\ $c_{k,j}^{\vphantom{\dagger}}$ and called Fock operators for short) obey the canonical anti-commutation relations
\begin{equation}\label{eq0:CAR}
    \begin{split}
    &c_{k,\,i}^{\vphantom{\dagger}} \, c_{k^\prime,\,j}^{ \vphantom{\dagger}} + c_{k^\prime,\,j}^{\vphantom{\dagger}}\, c_{k,\,i}^{\vphantom{\dagger}} = 0, \quad c_{k,\,i}^\dagger \, c_{k^\prime,\,j}^\dagger + c_{k^\prime,\,j}^\dagger \, c_{k, \,i}^\dagger = 0 , \cr &c_{k,\,i}^\dagger \, c_{k^\prime,\,j}^{ \vphantom{\dagger}} + c_{k^\prime,\,j}^{\vphantom{\dagger}} \, c_{k,\,i}^\dagger = \delta_{ij} \, \delta(k-k^\prime) .
    \end{split}
\end{equation}
Organizing Fock operators by the momentum quantum number, we define $W_k$ as the vector space spanned by the Fock operators that lower the momentum by $k$. Thus
\begin{equation}\label{eq:2.2}
    W_k = U_k \oplus V_{-k}
\end{equation}
where $U_k$ is the space of single-particle annihilation operators for momentum $k$, while $V_{-k}$ is the space of single-particle creation operators for momentum $-k$.

From now on, we are going to denote the operation of inverting the momentum by
\begin{equation}
    \tau : \; M \to M , \quad k \mapsto -k .
\end{equation}
This is done in order to prepare the ground for a later modification of the involution $\tau$. (In fact, we will eventually consider ``momentum'' spaces $M$ where some of the components of $k$ are \emph{position}-like instead of momentum-like.) In the present section, we always have $\tau(k) \equiv -k$, and we will take the liberty of frequently writing $-k$ instead of $\tau(k)$ for better clarity of the notation.

In terms of the basis $c_{k,j}\,, c_{k,j}^\dagger\,$, the elements $\psi \in W_k$ are expressed as
\begin{equation}\label{eq:2.3}
    \psi = \sum_{j=1}^n \big( u_j^{\vphantom{\dagger}}\, c_{k,\,j}^{ \vphantom{\dagger}} + v_j^{\vphantom{\dagger}}\, c_{-k,\,j}^\dagger \big) \in U_k \oplus V_{\tau(k)} \,
\end{equation}
with coefficients $u_j\,, v_j \in \mathbb{C}$. The vector spaces $W_k$ are complex, and they all have the same dimension $2n$ independent of $k$. In fact, they are canonically isomorphic (by unitary momentum-boost operators taken from the Heisenberg group), and we often omit the momentum quantum number and write simply $W_k \equiv W \equiv \mathbb{C}^{2n}$. The number $n$ is referred to as the (total) number of (valence and conduction) bands. One may think of the collection of vector spaces $\{ W_k \}_{k \in M}$ as a complex vector bundle, say $\mathcal{W}$, over the momentum space $M$. This bundle is trivial in our setting: $\mathcal{W} \simeq M \times W$. It could, however, be non-trivial in a low-energy effective theory where the bands far from the Fermi surface have been discarded. In any case, $\mathcal{W}$ has non-trivial subvector bundles, and these are the objects of our interest.

We now highlight some important structures on the vector spaces $W_k\,$. First of all, the canonical anti-commutation relations (CAR) for fermionic Fock operators induce for all $k \in M$ a pairing between $W_k$ and $W_{\tau(k)}\,$, i.e.\ a non-degenerate bilinear form
\begin{equation}\label{Eq:CAR}
    \{ \cdot , \cdot \} : \quad W_{\tau(k)} \otimes W_k \to \mathbb{C} ,
\end{equation}
by dropping the $\delta$-function $\delta(k-k^\prime)$ in Eq.\ (\ref{eq0:CAR}). This pairing has the property of being symmetric for $\tau$-invariant momenta $\tau(k) = k$. We refer to it as the CAR pairing. Expressing $\psi \in W_{\tau(k)}$ and $\psi^\prime \in W_k$ as in Eq.\ (\ref{eq:2.3}) we have
\begin{equation}
    \{ \psi , \psi^\prime \} = \sum_{j = 1}^n ( u_j^{\vphantom{\prime}} \, v_j^\prime + v_j^{\vphantom{\prime}} \, u_j^\prime ).
\end{equation}

Next, Fock space comes equipped with a Hermitian scalar product, which determines an operation of Hermitian conjugation. Since Hermitian conjugation in Fock space takes operators that remove momentum $k$ into operators that create momentum $k$, it induces a complex anti-linear involution
\begin{equation}
    \gamma : \quad W_k \to W_{\tau(k)} \quad (\gamma^{\,2} = \mathrm{Id})
\end{equation}
for all $k \in M$. By combining this $\gamma$-operation with the CAR pairing between $W_{\tau(k)}$ and $W_k$, we get a Hermitian scalar product on each vector space $W_k$:
\begin{equation}
    \left\langle \cdot , \cdot \right\rangle : \quad W_k \times W_k \to \mathbb{C} .
\end{equation}
Its expression in components is
\begin{equation}\label{eq-mz:2.9}
    \left\langle \psi , \psi^{\,\prime} \right\rangle := \{\gamma\psi, \psi^{\,\prime}\} = \sum\nolimits_j \big(\bar{u}_j^{\vphantom{\prime}} \, u^{\,\prime}_j + \bar{v}_j^{\vphantom{\prime}}\, v^{\,\prime}_j \big) .
\end{equation}

In summary, the set $\{ W_k \}_{k \in M}$ is a trivial bundle $\mathcal{W}$ of canonically isomorphic Hermitian vector spaces $W_k \equiv W \equiv \mathbb{C}^{2n}$. It comes with the extra structure given by the pairing (\ref{Eq:CAR}).

We are now in a position to formalize the type of free-fermion or Hartree-Fock-Bogoliubov mean-field ground state addressed in the present paper. In the following definition, the abbreviation $\IQPV$ stands for a \underline{q}uasi-\underline{p}article \underline{v}acuum with the property of being the trans\-lation-\underline{i}nvariant ground state of an \underline{i}nsulator (or gapped system).
\begin{definition}\label{Eq:Def-FundBund}
By an $\IQPV$ we mean a complex subvector bundle $\mathcal{A} \stackrel{\pi}{\to} M$ of fibers $\pi^{-1}(k) \equiv A_k \subset W_k = \mathbb{C}^{2n}$ of dimension $n$ such that all pairs $A_k \,, A_{\tau(k)}$ of $\tau$-opposite fibers annihilate one another with respect to the CAR pairing:
\begin{equation}\label{Eq:pair0}
    \forall k \in M : \quad \{ A_{\tau(k)} , A_k \} = 0 .
\end{equation}
\end{definition}
\begin{remark}
Physically speaking, the vector space $A_k \subset W_k$ is spanned by the quasi-particle operators of momentum $k$ which annihilate the quasi-particle vacuum. The Fock space description $\vert \IQPV \rangle$ of the quasi-particle vacuum is recovered \cite{BalianBrezin} by choosing a basis $\widetilde{c}_1(k), \ldots, \widetilde{c}_n(k)$ of $A_k$ for each $k$ and applying their product to a suitable reference state:
\begin{equation*}
    \vert \text{IQPV} \rangle := \prod\nolimits_k \widetilde{c}_1(k) \cdots \widetilde{c}_n(k) \, \vert \text{ref} \rangle .
\end{equation*}
The condition (\ref{Eq:pair0}) expresses the fact that all annihilation operators must have vanishing anti-commutators with each other. We refer to (\ref{Eq:pair0}) as the Fermi constraint.
\end{remark}
\begin{remark}\label{rem:2.2}
There exist two different notions of orthogonality on our universal vector space $W = \mathbb{C}^{2n}$. Firstly, given the CAR pairing (or bracket) $\{ \, , \, \}$, any complex linear subspace $L \subset W$ determines a complex linear subspace $L^\perp \subset W$ by
\begin{equation}
    L^\perp = \{ \psi \in W \mid \forall \psi^\prime \in L : \; \{ \psi , \psi^\prime \} = 0 \} .
\end{equation}
We will often use the $\perp$-operation to express the Fermi constraint (\ref{Eq:pair0}) as $A_k^\perp = A_{\tau(k)}$ (for all $k \in M$).
Secondly, given the Hermitian structure $\langle \; , \, \rangle$, the orthogonal complement $L^\mathrm{c}$ of $L$ is defined by
\begin{equation}
    L^\mathrm{c} = \{ \psi \in W \mid \forall \psi^\prime \in L : \; \langle \psi , \psi^\prime \rangle = 0 \} .
\end{equation}
For present use, we note that the two notions of orthogonality are connected by
\begin{equation}\label{eq:compatible}
    \gamma L^\perp = L^\mathrm{c} ,
\end{equation}
as a consequence of the relation $\langle \gamma \psi , \psi^\prime \rangle = \{ \psi , \psi^\prime \}$.
\end{remark}

In the remainder of this section we will impose various symmetries which
centralize the translation group: first time reversal $T$; then particle number $Q$; then particle-hole conjugation $C$; and so on. The optimal order in which to arrange these symmetries was first understood by Kitaev \cite{kitaev}; we therefore call it the Kitaev sequence.

All of the symmetries $T$, $Q$, $C$, etc., will have the status of \emph{true} symmetries (i.e., they \emph{commute} with the Hamiltonians of the appropriate symmetry class; never do they anti-commute). In particular, our operator $C$ of particle-hole conjugation commutes with a particle-hole symmetric Hamiltonian:
\begin{equation*}
    H = C H\, C^{-1} .
\end{equation*}
We emphasize this systematic and rigid feature, as it sets our approach apart from what is usually done in the current literature, with notable exceptions being \cite{AK,FHNQWW}.

The resulting free-fermion ground states with symmetries all turn out to fit neatly into the following mathematical framework. Please be advised that the process of implementing the framework will convert true physical symmetries into ``pseudo-symmetries''.
\begin{definition}\label{def:2.2}
By an $\IQPV$ {\rm of class} $s$ ($s = 0, 1, 2, \ldots$) we mean a rank-$n$ complex subvector bundle $\mathcal{A} \stackrel{\pi}{\to} M$ as described in Def.\ \ref{Eq:Def-FundBund} but with the fibers $\pi^{-1}(k) \equiv A_k \subset W \simeq \mathbb{C}^{2n}$ constrained by the pseudo-symmetry conditions
\begin{equation}\label{eq:2.7}
    \forall k \in M : \quad J_1 A_k = \ldots = J_s\, A_k = A_k^\mathrm{c}\,,
\end{equation}
where the complex linear operators $J_1, \ldots, J_s : \; W \to W$ satisfy the Clifford algebra relations (\ref{Eq:Cliff-s}) and each operator $J_l$ ($l = 1, \ldots, s$) is an orthogonal unitary transformation of $W$.
\end{definition}
\begin{remark}
We speak of pseudo-symmetries because the $J_1, \ldots, J_s$ send $A_k$ to its orthogonal complement $A_k^\mathrm{c}\,$, whereas true (unitary) symmetries would leave $A_k$ invariant. For $s = 0$ the conditions (\ref{eq:2.7}) are understood to be void.
\end{remark}
\begin{remark}\label{remark:orthogonalunitary}
An orthogonal unitary transformation $J$ of $W$ is a $\mathbb{C}$-linear operator with the properties
\begin{equation*}
    \langle J \psi , J \psi^\prime \rangle =
    \langle \psi , \psi^\prime \rangle \quad \text{and} \quad
    \{ J \psi , J \psi^\prime \} = \{ \psi , \psi^\prime \}
\end{equation*}
for all $\psi, \psi^\prime \in W$. The condition $J \, A_k = A_k^\mathrm{c}$ is equivalent to $\left\langle A_k \,, J \, A_k \right\rangle = 0$; cf.\ Eq.\ (\ref{Eq:VB-conds}). It is also equivalent to the condition $H(k)\, J + J\, H(k) = 0$ for
\begin{equation}
    H(k) = - \Pi_{A_k} + \Pi_{A_k^\mathrm{c}} \,.
\end{equation}
The operator $H(k)$ is commonly referred to as the \textit{flattened Hamiltonian}, as it may be viewed as a Hamiltonian with energies $\pm 1$ independent of $k$. It is a unitary transformation which is not orthogonal in general, but rather satisfies
\begin{equation}
    \{ H(k) \psi , H(-k) \psi^\prime \} = \{ \psi , \psi^\prime \}
\end{equation}
for all $\psi, \psi^\prime \in W$. The notion of flattened Hamiltonian is used in \cite{kitaev,StoneEtAl}, along with an orthonormal basis of $W$ consisting of $\gamma$-fixed vectors. In this ``Majorana'' basis, all orthogonal unitary transformations are expressed as real orthogonal matrices.
\end{remark}
\begin{remark}
Based on the Kitaev sequence, Definition \ref{def:2.2} arranges for the $\IQPV$s of class $s+1$ to be contained in those of class $s$. The existence of such an inclusion has invited attempts \cite{StoneEtAl} to transcribe the  classical result of real Bott periodicity \cite{Bott1959,milnor} so as to derive the desired homotopy classification. In the present paper we pick up on this attempt and show that it can be brought to fruition by invoking additional information.
\end{remark}

As a final remark, let us elaborate on a comment made in the introductory section. In our setting and language, a Real vector bundle in the sense of Atiyah \cite{atiyah}, or Quaternionic vector bundle in the sense of \cite{deNittis-AII}, would be a complex vector bundle $\mathcal{A} \to M$ with a $\mathbb{C}$-anti-linear projective involution ($T^2 = \pm 1$) mapping the fiber over $k$ to the fiber over $\tau(k) = - k$. Our vector bundles are \emph{not} of this kind in general. Indeed, for $s = 0$ we do have the $\perp$-operation determining the vector space $A_{\tau(k)} = A_k^\perp$ as the annihilator space of $A_k\,$, yet there exists no canonical map taking the individual vectors in $A_k$ to vectors in $A_{\tau(k)}$.

Table \ref{table:symmetries} gives a quick summary of the systematic structure developed in the remainder of this Section (\ref{sect:D}--\ref{sect:BDI}). Readers who are prepared to take the systematics for granted may want to take a look at Section \ref{sect:s-geq-3} and then proceed directly to Section \ref{sect:ClassMaps}.

\begin{table}
\begin{center}
\begin{tabular}{l|l|l|l}
class   &symmetries &$s$    &pseudo-syms.\\
\hline
$D$     &none		&$0$		&Fermi constraint\\
$\DIII$     &$T$ (time reversal)		&$1$		&$J_1 = \gamma\, T$\\
$\AII$     &$T, Q$ (charge)		&$2$	&$J_2 = \mathrm{i} \gamma\, T Q$ \\
$\CII$     &$T, Q$, $C$ (ph-conj.)		&$3$		&$J_3 = \mathrm{i} \gamma \, C Q$\\
\hline
$C$     &$S_1$, $S_2$, $S_3$ (spin rot.)		&$4$		&see text\\
$\CI$     &$S_1$, $S_2$, $S_3$, $T$		&$5$		&\\
$\AI$     &$S_1$, $S_2$, $S_3$, $T$, $Q$ 		&$6$		&\\
$\BDI$     &$S_1$, $S_2$, $S_3$, $T$, $Q$, $C$	&$7$		&
\end{tabular}
\end{center}
\vspace{10pt}
\caption{Overview of the symmetries and the corresponding pseudo-symmetries that are to be introduced sequentially in the present section.} \label{table:symmetries}
\end{table}

\subsection{Class $s = 0$ (alias $D$)}\label{sect:D}

The first symmetry class to consider is that of $s = 0$. This class is realized by gapped superconductors or superfluids with no symmetries (other than translations); it is commonly referred to as class $D$. By Definition \ref{Eq:Def-FundBund} an $\IQPV$ of class $s = 0$, or translation-invariant free-fermion ground state of a gapped system in symmetry class $D$, is a vector bundle $\mathcal{A} \to M$ with fibers $A_k \subset W \simeq \mathbb{C}^{2n}$ that are complex $n$-dimensional vector spaces subject to the Fermi constraint (\ref{Eq:pair0}) or, equivalently,
\begin{equation}\label{eq:2.8}
    \forall k \in M : \quad A_k^\perp = A_{\tau(k)} ,
\end{equation}
see Remark \ref{rem:2.2}. As will be explained in Section \ref{sect:ClassMaps}, there exists an alternative description of such a vector bundle by a so-called classifying map.

We seize this opportunity to make two comments. For one, the literature on the subject often construes the relation (\ref{eq:2.8}) (or rather, its consequences for the Hamiltonian) as a ``particle-hole symmetry'', although it is actually no more than a fundamental constraint dictated by Fermi statistics -- a point forcefully made in \cite{HHZ}. Note especially that no anti-unitary or complex anti-linear operations are involved in (\ref{eq:2.8}).

Our second comment concerns the language used. Borrowing Cartan's notation for symmetric spaces, the terminology for symmetry classes of disordered fermions was introduced in \cite{AZ}. This was done in the context of mesoscopic metals and superconductors where translation invariance is broken by the presence of disorder. A good fraction of the condensed matter community has adopted the same terminology for the related, but different purpose of classifying translation-invariant ground states (instead of disordered Hamiltonians). This is suboptimal but probably beyond rectification given the developed state of the research field. It is suboptimal because a dictionary is required for the non-expert to translate the terminology into the pertinent mathematics. For example, an $\IQPV$ of class $D$ is determined (see Section \ref{sect:ClassMaps} below) by a $\mathbb{Z}_2$-equivariant mapping $M \to \mathrm{Gr}_n( \mathbb{C}^{2n})$ that maps the $\tau$-fixed points of $M$ to $\mathrm{O}_{2n} / \mathrm{U}_n$ -- a
symmetric space not of type $D$ but of type $\DIII$.

\begin{example}
Consider a single band ($n = 1$) of spinless fermions in one dimension with ground state
\begin{equation*}
    \mathrm{e}^{\; \sum_k z(k)\, c_{-k}^\dagger c_k^\dagger} \, \vert 0 \rangle \propto \prod\nolimits_k \big( u(k) + v(k) c_{-k}^\dagger c_k^\dagger \big) \vert 0 \rangle , \quad z(k) = v(k) / u(k) ,
\end{equation*}
where $z(k) \in \mathbb{C} \cup \{ \infty \}$ and $\vert 0 \rangle$ is the Fock vacuum. This state is annihilated for any $k$ by the quasi-particle operator $u(k) c_k + v(k) c_{-k}^\dagger$. Thus we may regard it as a vector bundle $\mathcal{A} \to M$ with fibers
\begin{equation*}
    A_k = \mathbb{C}\cdot \big( u(k)c_k + v(k)c_{-k}^\dagger \big).
\end{equation*}
The Fermi constraint $A_k^\perp = A_{\tau(k)}$ translates to $u(k) v(-k) + v(k) u(-k) = 0$ or $z(-k) = - z(k)$. For $\tau$-invariant momenta $k_0 = \tau(k_0) = - k_0$ it follows that either $u(k_0)$ or $v(k_0)$ must vanish; hence $z(k_0 = - k_0)$ is either zero or infinite.
\end{example}

\subsection{Class $s = 1$ (alias $\DIII$)}\label{sect:DIII}

We now impose the first symmetry (beyond translations), by requiring that our quasi-particle vacua are invariant under the anti-unitary operator $T$ which reverses the time direction. More precisely, we assume time-reversal symmetry for fermions with half-integer spin, so that $T^2 =-\mathrm{Id}$. (Although $T$ is fundamentally defined on the single-particle Hilbert space and then on Fock space, $T$ here denotes the induced action on single-particle creation and annihilation operators.) The resulting symmetry class is commonly called $\DIII$; it is realized, for example, by superfluid ${}^3{\rm He}$ in the B-phase.

Because time reversal inverts the momentum, it gives us a mapping
\begin{equation}
    T : \; W_k \to W_{\tau(k)} \,,
\end{equation}
which is actually a pair of maps $T : \; U_k \to U_{\tau(k)}$ and $T : \; V_{\tau(k)} \to V_k\,$. This pair is compatible with the CAR pairing (\ref{Eq:CAR}) in the sense that
\begin{equation}
    \{ T \psi , T \psi^\prime \} = \overline{ \{ \psi , \psi^\prime \} }.
\end{equation}
Notice that $T^2 = - \mathrm{Id}$ requires $n$ to be even.

The quasi-particle vacuum encoded in a vector bundle $\mathcal{A} \to M$ is time-reversal invariant if the quasi-particle annihilation operators at momentum $k$ are transformed by $T$ into quasi-particle annihilation operators at momentum $-k = \tau(k)$, i.e.,
\begin{equation}\label{eq:TR-inv}
    T A_k = A_{\tau(k)} .
\end{equation}
Note that even though an anti-unitary operation $T$ is now involved,
the fibers $A_k \subset W$ of the vector bundle $\mathcal{A}$ are still complex -- and it is not useful to fix any real subspace $W_\mathbb{R} \subset W$, as $W_\mathbb{R}$ would have to be (re-)polarized to accommodate the complex vector spaces $A_k\,$.

To bring (\ref{eq:TR-inv}) in line with Eq.\ (\ref{eq:2.7}) of Definition \ref{def:2.2}, we observe that the anti-unitary operator $T$ commutes with the operation $\gamma$ of Hermitian conjugation of Fock operators. Thus by concatenating $T$ with the $\gamma$-operation, we get a complex linear operator
\begin{equation}
    J_1 : \; W_k \to W_k \,, \quad \psi \mapsto (T\circ \gamma) \psi = (\gamma \circ T) \psi ,
\end{equation}
which has square $J_1^2 = - \mathrm{Id}$ since $T^2 = - \mathrm{Id}$ and $\gamma^{\,2} = \mathrm{Id}$. It is easy to check that
\begin{equation*}
    \langle J_1 \psi , J_1 \psi^\prime \rangle = \langle \psi , \psi^\prime \rangle \quad \text{and} \quad \{ J_1 \psi , J_1 \psi^\prime \} = \{\psi ,\psi^\prime \}
\end{equation*}
for all $\psi , \psi^\prime \in W$. Thus $J_1$ is an orthogonal unitary transformation of $W$. Moreover, the true symmetry condition (\ref{eq:TR-inv}) is equivalent to the pseudo-symmetry condition
\begin{equation*}
    J_1 A_k = A_k^\mathrm{c} \,,
\end{equation*}
since we have $\gamma\, A_{\tau(k)} = A_k^\mathrm{c}$ from $A_{\tau(k)} = A_k^\perp$ and the relation (\ref{eq:compatible}). Hence the translation-invariant free-fermion ground state of a gapped superconductor or superfluid in symmetry class $\DIII$ is precisely modeled by an $\IQPV$ of class $s = 1$ in the sense of Definition \ref{def:2.2}. A one-dimensional example of such a ground state is given in Section \ref{sect:4.3}.

\subsection{Class $s = 2$ (alias $\AII$)}\label{sect:AII}

Imposing another symmetry (beyond translation and time-reversal invariance), we now require that our quasi-particle vacua be compatible with the global $\mathrm{U}(1)$ gauge symmetry underlying the law of charge conservation (which is the same as conservation of particle number if all particles carry the same quantum of charge). The resulting symmetry class is commonly called $\AII$. It is realized in band insulators and it hosts, in particular, the so-called quantum spin Hall insulator.

Recall from (\ref{eq:2.2}) the decomposition $W_k = U_k \oplus V_{\tau(k)}$ by particle annihilation and creation operators. The operator $Q$ for charge (or particle number) acts on $U_k \subset W_k$ as $-1$ and on $V_{\tau(k)} \subset W_k$ as $+1$. We say that a quasi-particle vacuum conserves charge (or has fixed particle number) if it is invariant under the action of the $\mathrm{U}(1)$ gauge group of operators $\mathrm{e}^{ \mathrm{i}\theta Q}$; in that case, we prefer to call the quasi-particle vacuum a Hartree-Fock (mean-field) ground state. Since the invariance of a vector space under a one-parameter group is equivalent to the invariance under its generator, we have
\begin{equation}
    \forall k \in M : \quad Q \, A_k = A_k \,.
\end{equation}

To bring this symmetry condition in line with Definition \ref{def:2.2}, we observe that the operator $\mathrm{i} Q$ is unitary and preserves the CAR pairing $\{ \; , \, \}$ since
\begin{equation*}
    \mathrm{i}Q : \quad c \mapsto - \mathrm{i} c \,, \quad c^\dagger \mapsto \mathrm{i} c^\dagger \, ,
\end{equation*}
is an automorphism of the canonical anti-commutation relations (\ref{eq0:CAR}). Moreover, $J_1 = \gamma\, T$ anti-commutes with $Q$ because $T$ preserves the decomposition $W = U \oplus V$ while $\gamma$ swaps the two summands. Therefore, the operator $J_2$ defined by
\begin{equation}
    J_2 := \mathrm{i} Q J_1 = - \mathrm{i} J_1 Q
\end{equation}
has the properties of anti-commuting with $J_1$ and squaring to $- \mathrm{Id}$. Because both $J_1$ and $\mathrm{i}Q$ are orthogonal unitary transformations, so is $J_2\,$. Altogether, we now have two orthogonal unitary generators $J_1, J_2$ satisfying the Clifford algebra relations (\ref{Eq:Cliff-s}) for $s = 2$.

Now recall $J_1 A_k = A_k^\mathrm{c}$ and use $Q\, A_k = A_k$ to do the following computation:
\begin{equation*}
    J_2 A_k = - \mathrm{i} J_1 Q \, A_k = - \mathrm{i} J_1 A_k = - \mathrm{i} A_k^\mathrm{c} = A_k^\mathrm{c} \,.
\end{equation*}
Thus the fibers $A_k$ of a translation-invariant Hartree-Fock ground state of a band insulator in symmetry class $\AII$ are constrained by the pseudo-symmetry conditions
\begin{equation*}
    J_1 A_k = J_2 A_k = A_k^\mathrm{c} \,,
\end{equation*}
reflecting the true symmetry conditions $T A_k = A_{\tau(k)}$ and $Q \, A_k = A_k\,$. This means that such a ground state is an $\IQPV$ of class $s = 2$ in the sense of Definition \ref{def:2.2}.

\subsubsection{Discussion, and class $A$}\label{sect:2.3.1}

Let us add here some discussion to reveal the physical meaning of the ground-state fibers $A_k\,$, as this meaning may be somewhat concealed by our comprehensive framework. The condition $Q \, A_k = A_k$ of conserved particle number forces $A_k$ for all $k$ to be of the form
\begin{equation}
    A_k = A_k^{\rm p} \oplus A_k^{\rm h} \,, \quad A_k^{\rm p} = A_k \cap U_k \,, \quad A_k^{\rm h} = A_k \cap V_{\tau(k)} \,.
\end{equation}
Phrased in physics language, an annihilation operator in the fiber $A_k \subset W_k$ of the Hartree-Fock ground state $\mathcal{A}$ is either an operator that annihilates a particle in an unoccupied state of momentum $k$, or is an operator that annihilates a hole (i.e., creates a particle) in an occupied state of momentum $\tau(k)$. For the physical situation at hand (namely, that of a band insulator) the dimension $n_{\rm p} \equiv \dim A_k^{\rm p}$ is independent of $k$ and is called the number of conduction bands. The dimension $n - n_{\rm p} \equiv n_{\rm h} =  \dim A_k^{\rm h}$ is called the number of valence bands.

Now recall that $J_1 = T \gamma$ and $J_1 A_k = A_k^\mathrm{c}$. Since $\gamma$ maps $U_k$ to $V_k$ and $T$ maps $V_k$ to $V_{\tau(k)}$ we have $J_1 A_k^{\rm p} \subset V_{\tau(k)}\,$. Similarly, $J_1 A_k^{\rm h} \subset U_k \,$. Thus the orthogonality relation $\left\langle A_k , J_1 A_k \right\rangle = 0$ splits into two parts:
\begin{equation*}
    \big\langle A_k^{\rm h} , J_1 A_k^{\rm p} \big\rangle = 0 = \big\langle A_k^{\rm p} , J_1 A_k^{\rm h} \big\rangle .
\end{equation*}
Since $J_1$ is unitary and $J_1^2 = -\mathrm{Id}$, these two equations are not independent but imply one another. Moreover, given one of the two spaces, say $A_k^{\rm h}$, they determine the other space $A_k^{\rm p}$ as the orthogonal complement of $J_1 A_k^{\rm h}$ in $U_k$ (and, turning it around, $A_k^{\rm h}$ as the orthogonal complement of $J_1 A_k^{\rm p}$ in $V_{\tau(k)}$). Thus $A_k = A_k^{\rm p} \oplus A_k^{\rm h}$ is already determined completely by specifying just one of the two components, say $A_k^{\rm h}$. Physically speaking, this means that the number-conserving Hartree-Fock ground states at hand are determined by specifying for each momentum $k$ the space of valence band states. Let us also remark that the vector bundle $\mathcal{A} \to M$ with (reduced) fibers $\pi^{-1}(k) = A_k^{\rm h}$ and anti-unitary symmetry $T : \; A_k^{\rm h} \to A_{\tau(k)}^{\rm h}$ constitutes a Quaternionic vector bundle in the sense of \cite{deNittis-AII}.

We take this opportunity to mention one important symmetry class which lies outside the series $s = 0, 1, \ldots, 7$ considered in this paper -- namely symmetry class $A$, where one imposes the symmetry of $Q$ but \emph{not} that of $T$. What happens in that case? The answer is that one gets a complex vector bundle \emph{without} any additional structure. In fact, the process of imposing the symmetry $Q A_k = A_k$ and reducing from $A_k$ to $A_k^{\rm h}$ simply deletes the Fermi constraint and leaves a rank-$n_{\rm h}$ complex vector bundle with fibers $A_k^{\rm h}$ subject to no symmetry conditions at all. Class $A$ plays an important role in the historical development of the subject, as it hosts the class of systems exhibiting the integer quantum Hall effect, where the role of topology was first discovered and understood.

\subsection{Class $s = 3$ (alias $\CII$)}\label{sect:CII}

Next, we augment time reversal and particle number by a third symmetry: twisted particle-hole symmetry, which takes us to class $\CII$. The operator, $C$, of twisted particle-hole conjugation is an anti-unitary transformation exchanging particle creation with particle annihilation operators (or particles with holes, for short); it is a non-relativistic analog of charge conjugation for Dirac fermions.

In explicit terms, the transformation $C : \; W_k \to W_{\tau(k)}$ consists of a pair of maps
\begin{align*}
    &C : \quad U_k \to V_k \,, \quad \sum_j u_j^{\vphantom{\dagger}} \, c_{k,j}^{\vphantom{\dagger}} \mapsto \sum_{i j} S_{i j}^{ \vphantom{\dagger}} \, \bar{u}_j^{ \vphantom{\dagger}}\, c_{k,\,i}^\dagger \,, \cr
    &C : \quad V_{\tau(k)} \to U_{\tau(k)} \,, \quad \sum_j v_j^{\vphantom{\dagger}} \,c_{-k,j}^\dagger \mapsto \sum_{i j} S_{j\,i}^{ \vphantom{\dagger}}\, \bar{v}_j^{\vphantom{\dagger}} \, c_{-k,\,i}^{\vphantom{\dagger}} \,.
\end{align*}
``Twisting'' refers to the presence of a linear operator $S = S^\dagger = S^{-1} :\; V_k \to V_k$ with transpose $S^T :\; U_k\to U_k\,$. (Recall that for any linear operator $L : \; X \to Y$ one has a canonically defined adjoint or transpose, $L^T : \; Y^\ast \to X^\ast$. Note also that $U_k$ can be regarded as the dual vector space $V_k^\ast$ by the CAR pairing.) In the typical examples offered by physics, $S$ exchanges the conduction and valence bands of a system at half filling. We require that $S$ commutes with $T$. Note the relations
\begin{equation}\label{eq:2.19}
    C^2 = \mathrm{Id}, \quad C T = T C , \quad C \gamma = \gamma \, C \,.
\end{equation}

Now a particle-hole symmetric ground state $\mathcal{A} \to M$ obeys the symmetry condition
\begin{equation}\label{Eq:Vect-CII}
    \forall k \in M : \quad C A_k = A_{\tau(k)} \,.
\end{equation}
To bring this in line with the general scheme, consider the linear operator
\begin{equation}
    J_3 = \mathrm{i} Q \gamma\, C = \mathrm{i} \gamma\, C Q \,,
\end{equation}
which squares to $-\mathrm{Id}$ and is a unitary transformation preserving the CAR pairing of $W$ (because both $\mathrm{i}Q$ and $\gamma\, C$ are). It anti-commutes with both $J_1$ and $J_2$ (because $Q$ does, while $\gamma\, C$ commutes), so we now have the Clifford algebra relations (\ref{Eq:Cliff-s}) for $s = 3$.

$J_3$ applied to $A_k$ gives
\begin{equation*}
    J_3 A_k = \mathrm{i}\gamma\, C Q\,A_k = \gamma\,C A_k\,.
\end{equation*}
By using $C A_k = A_{\tau(k)} = A_k^\perp$ we arrive at
\begin{equation*}
    J_3 \, A_k = \gamma \, A_{\tau(k)} = A_k^\mathrm{c} \,.
\end{equation*}
Thus a translation-invariant free-fermion ground state of a gapped system in symmetry class $\CII$ is an $\IQPV$ of class $s = 3$ in the sense of Definition \ref{def:2.2}.

\subsubsection{Class $\AIII$}\label{sect:AIII}

For use in the final Sections \ref{sect:7} and \ref{sect:8}, we mention here another ``complex'' symmetry class, namely $\AIII$, which is like class $A$ in that it lies outside the 8-fold scheme of the ``real'' symmetry classes ($s = 0, \ldots, 7$). Class $\AIII$ differs from $\CII$ by the absence of time-reversal symmetry $T$; i.e., one has only the Fermi constraint and the symmetries under particle number $Q$ and particle-hole conjugation $C$. As discussed in Section \ref{sect:2.3.1}, the Fermi constraint gets effectively canceled by $Q$. Nevertheless, in the presence of the true symmetry $C$ there is still the pseudo-symmetry $J_3 = \mathrm{i} \gamma\, C Q$. In other words, the situation is formally like that of class $\DIII$ ($s = 1$), but with the Fermi constraint out of force. The pseudo-symmetry $J_3$ is often understood as a so-called sublattice symmetry; the latter, however, is not a true symmetry in our sense, as it anti-commutes with the Hamiltonian.

\subsection{Going beyond $s = 3$}\label{sect:s-geq-3}

To continue the Kitaev sequence beyond $s = 3$, we need to expand the physical setting by bringing in true symmetries (namely, spin rotations) in a way different from before. We first describe the total algebraic framework that emerges for $s \geq 4$ and then explain the physics for each of the symmetry classes $s = 4, 5, 6, 7$ in sequence.

Thus, let us assume that on $W = \mathbb{C}^{2n}$ we are given two sets of orthogonal unitary operators, $\{ j_1, j_2 \}$ and $\{ j_5 , \ldots, j_s \}$. The former will be recognized as (two of the three) spin-rotation generators and the latter as pseudo-symmetries due to the possible presence of $T$, $Q$, and $C$. Here $s \geq 4$ and the second set is understood to be empty when $s = 4$. The motivation for leaving a gap in the index set will become clear shortly.

We demand that the following algebraic relations be satisfied for our operators:
\begin{align}\label{eq:red-CliffAlg}
    j_l j_m + j_m j_l &= - 2 \delta_{l m} \mathrm{Id}_W \quad (1 \leq l , m\leq 2) , \cr j_l j_m - j_m j_l &= 0 \quad (1 \leq l \leq 2; \;\; 5 \leq m \leq s),\\ j_l j_m + j_m j_l &= - 2 \delta_{l m} \mathrm{Id}_W \quad (5 \leq l , m\leq s) . \nonumber
\end{align}
Thus $\{ j_1, j_2 \}$ and $\{ j_5 , \ldots, j_s \}$ are two sets of Clifford algebra generators on $W$, and any two generators belonging to different sets commute with one another.

As before, the translation-invariant free-fermion ground state of a gapped system (now of symmetry class $s \geq 4$) will be described by a vector bundle over $M$ with $n$-dimensional fibers $a_k \subset W = \mathbb{C}^{2n}$ spanned by the quasi-particle annihilation operators at momentum $k$. (The change of notation from $A_k$ to $a_k$ is to clear the symbol $A_k$ for use with a closely related, but different object.) For reasons that will be explained in detail in the following subsections, the vector spaces $a_k$ are required to obey the set of conditions
\begin{equation}\label{eq:red-psym}
    \forall k \in M : \quad a_k^\perp = a_{\tau(k)} , \quad j_1 a_k = j_2 \, a_k = a_k \,, \quad j_5 \, a_k = \ldots = j_s \, a_k = a_k^\mathrm{c} \,.
\end{equation}
Notice that $j_1, j_2$ are true symmetries taking $a_k$ to itself, whereas $j_5, \ldots, j_s$ are pseudo-symmetries taking $a_k$ to its orthogonal complement $a_k^\mathrm{c}\,$. We will now demonstrate that such a multiplet of (pseudo-)symmetries is equivalent to a set of $s$ pseudo-symmetries $J_1, \ldots, J_s\,$.

The key step is to double the dimension of $W$ by taking the tensor product with $\mathbb{C}^2$, and to consider on $\mathbb{C}^2 \otimes W$ the set of operators
\begin{equation}\label{eq-mz:2.30}
    \begin{split}
    &J_l = \begin{pmatrix} 1 &0\cr 0 &-1 \end{pmatrix} \otimes j_l \quad (l = 1, 2) , \quad J_3 = \begin{pmatrix} 1 &0\cr 0 &-1 \end{pmatrix} \otimes j_2 j_1 \,, \cr &J_4 = \begin{pmatrix} 0 &1\cr -1 &0 \end{pmatrix} \otimes \mathrm{Id}_W , \quad J_m = \begin{pmatrix} 0 &1\cr 1 &0 \end{pmatrix} \otimes j_m \quad (m = 5, \ldots, s) .
    \end{split}
\end{equation}
By using the algebraic properties laid down in (\ref{eq:red-CliffAlg}) one readily verifies that the operators $J_1, \ldots, J_s$ so defined satisfy the Clifford algebra relations (\ref{Eq:Cliff-s}).

The strategy now is to transfer all relevant structure of $W$ to $\mathbb{C}^2 \otimes W$. In the case of the Hermitian scalar product $\langle \, ,\, \rangle_W$ we do this by viewing the doubled space as the orthogonal sum $W_+ \oplus W_- = \mathbb{C}^2 \otimes W$ of two identical copies $W_+ = W_- = W$ and setting
\begin{equation}
    \langle \, , \, \rangle_{\mathbb{C}^2 \otimes W} =
    \langle \, , \, \rangle_{W_+} + \langle \, , \, \rangle_{W_-}\,.
\end{equation}
The CAR bracket $\{ \, , \}_W$ is transferred to $\mathbb{C}^2 \otimes W$ by the same principle. The transferred structures define involutions $L \mapsto L^\mathrm{c}$ and $L \mapsto L^\perp$ as before. Note that with these conventions all operators $J_1, \ldots, J_s$ are orthogonal unitary transformations of $\mathbb{C}^2 \otimes W$.

Now let $\{ a_k \}_{k \in M}$ be a vector bundle with $n$-dimensional fibers $a_k \subset W$ that satisfy the conditions (\ref{eq:red-psym}). Then we construct a new vector bundle $\{ A_k \}_{k \in M}$ with $2n$-dimensional fibers $A_k = f(a_k) \subset \mathbb{C}^2 \otimes W$ by applying the transformation
\begin{equation}\label{eq-mz:2.31}
    f : \; a \mapsto A = \left\{ \begin{pmatrix} 1 \cr 1 \end{pmatrix} \otimes w + \begin{pmatrix} 1 \cr -1 \end{pmatrix} \otimes w^\prime \mid w \in a \,, \; w^\prime \in a^\mathrm{c} \right\} .
\end{equation}
A short computation shows that the relations (\ref{eq:red-psym}) translate into the relations
\begin{equation}\label{eq-mz:2.32}
    \forall k \in M : \quad A_k^\perp = A_{\tau(k)} , \quad J_1 A_k = \ldots = J_s \, A_k = A_k^\mathrm{c} \,.
\end{equation}
Thus we have assigned to a vector bundle $\{ a_k \}_{k \in M}$ constrained by the (pseudo-)symmetry conditions (\ref{eq:red-psym}) an $\IQPV$ $\mathcal{A} \to M$ of class $s$ in the sense of Definition \ref{def:2.2}. This correspondence turns out to be one-to-one.
\begin{proposition}\label{prop:2.1-new}
Fix a system $j_1, j_2, j_5, \ldots, j_s$ and a corresponding system $J_1, \ldots, J_s\,$. Then the solutions $a_k$ of Eqs.\ (\ref{eq:red-psym}) are in bijection with the solutions $A_k$ of Eqs.\ (\ref{eq-mz:2.32}).
\end{proposition}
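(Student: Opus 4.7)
The plan is to construct an explicit inverse $g$ to $f$ and check that it preserves the constraints. Writing $e_\pm = (1, \pm 1)^T \in \mathbb{C}^2$ and $W_\pm := e_\pm \otimes W$, one has $\mathbb{C}^2 \otimes W = W_+ \oplus W_-$, and a direct calculation shows that both the Hermitian pairing and the CAR bracket on $\mathbb{C}^2 \otimes W$ split as orthogonal sums along this decomposition. The natural candidate inverse is
\begin{equation*}
g : \quad A \;\longmapsto\; a := \{ w \in W \mid e_+ \otimes w \in A \},
\end{equation*}
and the identity $g \circ f = \mathrm{id}$ is immediate from the form (\ref{eq-mz:2.31}) of $f$.

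The substance is to establish $f \circ g = \mathrm{id}$. The key observation is that the product
\begin{equation*}
\Sigma := J_4 J_3 J_2 J_1 = \sigma_x \otimes \mathrm{Id}_W
\end{equation*}
(as follows from (\ref{eq-mz:2.30}) via a short Pauli-matrix computation) has $\pm 1$-eigenspaces $W_\pm$. First I would show that $\Sigma A = A$: since $J_i^2 = -\mathrm{Id}$, the hypothesis $J_i A = A^\mathrm{c}$ gives $J_i A^\mathrm{c} = A$, so successive application of $J_1, J_2, J_3, J_4$ returns $A$ to itself. Consequently $A = A_+ \oplus A_-$ with $A_\pm := A \cap W_\pm$, and by definition $A_+ = e_+ \otimes a$. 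The same reasoning applied to $A^\mathrm{c}$ (which is likewise $\Sigma$-invariant) gives a splitting $A^\mathrm{c} = (A^\mathrm{c})_+ \oplus (A^\mathrm{c})_-$, and the relation $J_4(e_+ \otimes w) = e_- \otimes w$ together with $J_4 A = A^\mathrm{c}$ identifies $(A^\mathrm{c})_- = e_- \otimes a$. Writing $A_- = e_- \otimes a'$ and using Hermitian orthogonality of $A_+$ against $(A^\mathrm{c})_+$, one obtains $a' \subseteq a^\mathrm{c}$; a dimension count based on $\dim A = 2n$ then forces $a' = a^\mathrm{c}$, so $A = f(a)$.

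It remains to verify that $a := g(A)$ satisfies (\ref{eq:red-psym}) whenever $A$ satisfies (\ref{eq-mz:2.32}). For the true symmetries ($l = 1, 2$), $J_l = \sigma_z \otimes j_l$ sends $e_+ \otimes w$ to $e_- \otimes j_l w$; combining $J_l A_+ \subseteq A^\mathrm{c} \cap W_- = (A^\mathrm{c})_- = e_- \otimes a$ yields $j_l a \subseteq a$, hence $j_l a = a$ by invertibility of $j_l$. For the pseudo-symmetries ($m = 5, \dots, s$), $J_m = \sigma_x \otimes j_m$ sends $e_+ \otimes w$ to $e_+ \otimes j_m w$, so $J_m A_+ \subseteq (A^\mathrm{c})_+ = e_+ \otimes a^\mathrm{c}$, giving $j_m a = a^\mathrm{c}$. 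Finally, the Fermi constraint $A_k^\perp = A_{\tau(k)}$ translates, via the $W_\pm$-splitting of the CAR bracket, into the conjunction $a_k^\perp = a_{\tau(k)}$ and $(a_k^\mathrm{c})^\perp = a_{\tau(k)}^\mathrm{c}$; the latter is automatic from the identity $L^\mathrm{c} = \gamma L^\perp$ (Remark~\ref{rem:2.2}) together with $\gamma^2 = \mathrm{Id}$.

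The main obstacle is the careful bookkeeping of how the three involutions $\mathrm{c}$, $\perp$, and $\gamma$ on $W$ interact with their counterparts on $\mathbb{C}^2 \otimes W$, and in particular the characterization of $(A^\mathrm{c})_\pm$. Once $\Sigma A = A$ is secured, all other verifications reduce to routine computations with the Clifford relations (\ref{eq:red-CliffAlg}) and the orthogonal splittings.
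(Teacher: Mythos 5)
Your proof is correct and takes a genuinely different route from the paper. The paper's proof reduces the statement to Proposition~\ref{prop:2.1X} (the $(1,1)$-periodicity result): it sets $K = \mathrm{i} J_1 J_2 J_3$, $I = J_4$, intersects $A$ with $E_{+1}(L)$ for $L = J_1 J_2 J_3 J_4$, and then applies the projector $\Pi = \frac{1}{2}(\mathrm{Id} - \mathrm{i}K)$ to land in $E_{+\mathrm{i}}(K)$, before separately checking how $J_1, J_2, J_3$ descend to true symmetries on that space. You instead bypass Prop.~\ref{prop:2.1X} entirely, work directly with the explicit tensor formulas~(\ref{eq-mz:2.30}), and exploit the decomposition $\mathbb{C}^2\otimes W = (e_+\otimes W)\oplus (e_-\otimes W)$ tailored to the defining formula~(\ref{eq-mz:2.31}) for $f$, with the explicit inverse $g(A) = \{w \mid e_+ \otimes w \in A\}$. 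This is more elementary and self-contained, and it avoids the second projection step; what it costs is that it is tied to the particular matrix realization in~(\ref{eq-mz:2.30}), whereas the paper's invocation of Prop.~\ref{prop:2.1X} is coordinate-free and reuses a lemma that is needed elsewhere anyway.

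Two small corrections, neither of which affects the argument. First, a Pauli computation gives $J_4 J_3 J_2 J_1 = \sigma_x \otimes (j_2 j_1 j_2 j_1) = -\sigma_x \otimes \mathrm{Id}_W$ (since $j_2 j_1 j_2 j_1 = -\mathrm{Id}_W$), not $+\sigma_x \otimes \mathrm{Id}_W$; likewise $J_4(e_+\otimes w) = -e_-\otimes w$. These overall signs only relabel the $\pm 1$-eigenspaces and do not change the decomposition $A = A_+\oplus A_-$. Second, the Hermitian-orthogonality step deriving $a'\subseteq a^{\mathrm c}$ should invoke orthogonality of $A_-$ against $(A^{\mathrm c})_-$ (not $A_+$ against $(A^{\mathrm c})_+$): since $(A^{\mathrm c})_- = e_-\otimes a$, one gets $A_- = \bigl[(A^{\mathrm c})_-\bigr]^{\mathrm c}\cap W_- = e_-\otimes a^{\mathrm c}$ directly, which in fact spares you the dimension count.
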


While the proof does have some bearing on the rest of this paper, it is not essential here. We therefore relegate it to the Appendix (in combination with the material of Section \ref{sect:1,1-red} and the Proposition \ref{prop:2.1X} proved there) and proceed with the main message of this section.

\subsection{Class $s = 4$ (alias $C$)}\label{sect:C}

We are now ready to address class $C$, which is defined to be the symmetry class of fermions with spin $1/2$ and $\mathrm{SU}_2$ spin-rotation symmetry (plus the pervasive translation invariance of the present context). Note that class $C$ does not follow upon $\CII$ in the same way that class $\CII$ follows upon $\AII$ or class $\AII$ upon $\DIII$. In fact, the operators $T$, $Q$, and $C$ characteristic of the preceding classes cease to be symmetries here; they are superseded by the spin-rotation generators. Examples of quasi-particle vacua of symmetry class $C$ are found among superconductors with spin-singlet pairing.

Let the generators of $\mathrm{SU}_2$ spin rotations be denoted by $j_1, j_2$, and $j_3$. As operators on the spinor space $\mathbb{C}^2$ they are represented by $2 \times 2$ matrices, say
\begin{equation*}
    j_1 = \begin{pmatrix} 0 &\mathrm{i}\cr \mathrm{i} &0 \end{pmatrix} , \quad j_2 = \begin{pmatrix} 0 &1\cr -1 &0 \end{pmatrix} , \quad
    j_3 = \begin{pmatrix} \mathrm{i} &0\cr 0 &-\mathrm{i}\end{pmatrix} .
\end{equation*}
One may also think of these matrices $j_1$, $j_2$ and $j_3 = j_2 j_1$ as a basis (including the unit matrix) for the algebra $\mathbb{H}$ of quaternions. For the following, we assume that the quaternion algebra of $j_1 , j_2 , j_3$ acts reducibly on our vector spaces $W_k = U_k \oplus V_{\tau(k)} \simeq \mathbb{C}^{2n}$ for $k \in M$.

Here as always, the translation-invariant quasi-particle vacuum of a gapped system (now of class $C$) is described by a vector bundle over $M$ with $n$-dimensional fibers $a_k \subset W = \mathbb{C}^{2n}$ spanned by the quasi-particle annihilation operators at momentum $k$. These fibers are still subject to the Fermi constraint $a_k^\perp = a_{\tau(k)}\,$. The property of spin-rotation invariance of the quasi-particle vacuum is expressed by the true symmetry conditions $j_l\, a_k = a_k$ ($l = 1, 2, 3$). Altogether, we now have the set of equations
\begin{equation}\label{eq:SU2-inv}
    \forall k \in M : \quad a_k^\perp = a_{\tau(k)} , \quad j_1 a_k = j_2 \, a_k = j_3 \, a_k = a_k \,.
\end{equation}
Owing to the quaternion relation $j_3 = j_2 j_1$ we may drop the last condition ($j_3 \, a_k = a_k$) as this is already implied by $j_l \, a_k = a_k$ for $l = 1, 2$. We then see that the conditions (\ref{eq:SU2-inv}) coincide with the set of conditions (\ref{eq:red-psym}) for $s = 4$.

{}Following the blueprint of Section \ref{sect:s-geq-3}, we now double up the vector space $W$ to $\mathbb{C}^2 \otimes W$ and use the mapping $f$ of (\ref{eq-mz:2.31}) to transform the vector bundle with fibers $a_k$ to an equivalent vector bundle $\mathcal{A} \to M$ with fibers $A_k = f(a_k)$. By the assignments in (\ref{eq-mz:2.30}), the Clifford algebra $\mathbb{H} = \mathrm{Cl}(\mathbb{R}^2)$ generated by $j_1$ and $j_2$ becomes the Clifford algebra $\mathrm{Cl}(\mathbb{R}^4)$ generated by $J_1, \ldots, J_4\,$. According to Eq.\ (\ref{eq-mz:2.32}) the transformed fibers $A_k$ are subject to
\begin{equation}
    \forall k \in M : \quad A_k^\perp = A_{\tau(k)} , \quad J_1 A_k = \ldots = J_4 \, A_k = A_k^\mathrm{c} \,.
\end{equation}
Since the mapping $a_k \leftrightarrow A_k$ is one-to-one, we see that the translation-invariant free-fermion ground state of a gapped superconductor in symmetry class $C$ is precisely modeled by an $\IQPV$ of class $s = 4$ in the sense of Definition \ref{def:2.2}.

\subsection{Class $s = 5$ (alias $\CI$)}\label{sect:CI}

The genesis of the remaining 3 symmetry classes ($s = 5, 6, 7$) is parallel to that of the classes $s = 1, 2, 3$: they are obtained by first imposing time-reversal invariance, then charge conservation, and finally particle-hole conjugation symmetry. The difference from the earlier setting is that $\mathrm{SU}_2$ spin rotations now are symmetries throughout. In view of the detailed treatment given in Sections \ref{sect:DIII}--\ref{sect:CII}, we can be brief here.

The first additional symmetry to impose is time-reversal invariance. As before, we assume fermions with spin $1/2$, so that $T^2 = - \mathrm{Id}$. The new symmetry condition on the fibers is
\begin{equation}\label{eq:sym-CI}
    T a_k = a_{\tau(k)} .
\end{equation}
The resulting symmetry class is commonly called $\CI$.

By composing $T : \; W \to W$ with $\gamma : \; W \to W$ we get an orthogonal unitary operator
\begin{equation}
    j_5 = \gamma \,T : \;\; W \to W , \quad \text{with} \quad j_5^2 = - \mathrm{Id} \,.
\end{equation}
We will now argue on physical grounds that $j_5$ commutes with the spin-rotation generators $j_l$ for $l = 1, 2, 3$. For this, we first observe that the physical observable of spin, like any component of momentum or angular momentum, is inverted by the operation of time reversal. Since $T$ is complex anti-linear and our generators $j_l$ carry an extra factor of $\mathrm{i} = \sqrt{-1}$ as compared to the physical spin observables, we infer that $T j_l \, T^{-1} = + j_l$ (for $l = 1, 2, 3)$. Secondly, spin rotations $g = \mathrm{e}^{\,\sum x_l j_l}$ preserve the CAR pairing $\{ \, , \, \}$ and (for $x_l \in \mathbb{R}$) the Hermitian structure $\langle \, , \, \rangle$; thus they are orthogonal unitary transformations of $W$. This implies that spin rotations commute with $\gamma$ and so do their generators $j_l\,$. Altogether, we obtain
\begin{equation}
    j_l j_5 - j_5 j_l = 0 \quad (l = 1, 2, 3),
\end{equation}
as claimed. Thus we have all the relations (\ref{eq:red-CliffAlg}) for $s = 5$ in place.

Now for reasons explained in Section \ref{sect:DIII}, the condition (\ref{eq:sym-CI}) is equivalent to
\begin{equation*}
    j_5 \, a_k = a_k^\mathrm{c} .
\end{equation*}
We recall the Fermi constraint $a_k^\perp = a_{\tau(k)}$ and the symmetry conditions (\ref{eq:SU2-inv}). By the transcription $a_k \leftrightarrow A_k$ of Section \ref{sect:s-geq-3}, it follows that the translation-invariant free-fermion ground state of a gapped superconductor in symmetry class $\CI$ is exactly given by an $\IQPV$ of class $s = 5$ in the sense of Definition \ref{def:2.2}.

\subsection{Class $s = 6$ (alias $\AI$)}\label{sect:AI}

Next, by including the $\mathrm{U}(1)$ symmetry group underlying particle-number conservation, we are led to what is called symmetry class $\AI$. In addition to the previous conditions on fibers we now have
\begin{equation}
    \forall k \in M : \quad Q \, a_k = a_k \,.
\end{equation}
As before, $Q = + 1$ on creation operators and $Q = - 1$ on annihilation operators.

To transcribe this condition to the present framework, we introduce
\begin{equation}
    j_6 := \mathrm{i} Q j_5 \,.
\end{equation}
The two operators $j_5$ and $j_6$ share the algebraic properties of the pair $J_1, J_2\,$; for the detailed reasoning we refer to Section \ref{sect:AII}. Moreover, $j_6$ like $j_5$ commutes with the spin-rotation generators $j_1, j_2, j_3$. Thus we now have the algebraic relations (\ref{eq:red-CliffAlg}) for $s = 6$.

The true symmetry conditions $a_k = Q\, a_k = T a_{\tau(k)}$ are equivalent to the pseudo-symmetry conditions
\begin{equation*}
    j_5  \, a_k = j_6 \, a_k = a_k^\mathrm{c} .
\end{equation*}
In conjunction with the Fermi constraint $a_k^\perp = a_{\tau(k)}$ and the spin-rotation symmetries (\ref{eq:SU2-inv}), this means that translation-invariant Hartree-Fock ground states of insulators in symmetry class $\AI$ are given by $\IQPV$s of class $s = 6$.

\subsection{Class $s = 7$ (alias $\BDI$)}\label{sect:BDI}

Finally, to arrive at class $s = 7$ (also known as $\BDI$) we augment the symmetry operations of translations, spin rotations, time reversal and $\mathrm{U}(1)$ gauge transformations by (twisted) particle-hole conjugation $C$. Thus we require
\begin{equation}
    \forall k \in M : \quad C a_k = a_{\tau(k)} \,.
\end{equation}
The properties of the anti-unitary operator $C$ were listed in (\ref{eq:2.19}). In addition, we demand that the twisting operator $\gamma\, C$ commutes with the spin-rotation generators $j_1, j_2, j_3\,$.

For reasons that were explained in Section \ref{sect:CII}, the unitary operator
\begin{equation}
    j_7 = \mathrm{i} Q \gamma\, C = \mathrm{i} \gamma\, C Q
\end{equation}
preserves the CAR pairing of $W$. It squares to $-\mathrm{Id}$, anti-commutes with both $j_5$ and $j_6\,$, and commutes with $j_1$, $j_2$, and $j_3$. Thus we now have the relations (\ref{eq:red-CliffAlg}) for $s = 7$.

The symmetry condition $C a_k = a_{\tau(k)}$ is equivalent to the pseudo-symmetry condition
\begin{equation*}
    j_7 \, a_k = a_k^\mathrm{c}.
\end{equation*}
In view of this and all the other constraints obeyed by $a_k\,$, the translation-invariant free-fermion ground state of a gapped system in symmetry class $\BDI$ is an $\IQPV$ of class $s = 7$.

As a final remark, let us mention that there exist simpler ways of realizing class $\BDI$ in physics. (A similar remark applies to class $\AI$.) By the $(1,1)$ periodicity theorem of Section \ref{sect:1,1-red} and the 8-fold periodicity of real Clifford algebras \cite{ABS}, the effect of 7 ``real'' pseudo-symmetries $J_1, \ldots, J_7$ is the same (after reducing the number of bands by a factor of $2^4$) as that of a single ``imaginary'' pseudo-symmetry $K$. One may take $K = \mathrm{i} \gamma\, C$; thus class $\BDI$ is realized by superconductors with particle-hole conjugation symmetry. For another superconducting realization, one may take $K = \mathrm{i} \gamma T$ with a time-reversal operator $T$ that squares to $+\mathrm{Id}$.

\section{From vector bundles to classifying maps}\label{sect:ClassMaps}
\setcounter{equation}{0}

In this section we pass from the vector-bundle description to an equivalent description by what we call ``classifying maps'' for short. (Note that this usage is not in accordance with standard terminology.) Recall from Definition \ref{def:2.2} that an $\IQPV$ of class $s$ is a rank-$n$ complex subvector bundle $\mathcal{A} \stackrel{\pi}{\to} M$ with the property that its fibers $\pi^{-1}(k) = A_k \subset \mathbb{C}^{2n}$ obey the pseudo-symmetry conditions $J_1 A_k = \ldots = J_s \, A_k = A_k^\mathrm{c}$ and the Fermi constraint $A_k^\perp = A_{\tau(k)}$ for all momenta $k \in M$. The equivalent description by a classifying map is as follows.

Let $C_0(n) \equiv \cup_{r=0}^{2n} \mathrm{Gr}_r(\mathbb{C}^{2n})$ where $\mathrm{Gr}_r(\mathbb{C}^{2n})$ is the Grassmannian of complex $r$-planes $A$ in $W = \mathbb{C}^{2n}$. (Although the Fermi constraint $A^\perp = A$ singles out $r = n$, we allow $r \not= n$ here for later convenience.) Given $C_0(n)$, let $C_s(n) \subset C_0(n)$ be the subspace of complex hyperplanes that satisfy the constraints due to $s$ pseudo-symmetries $J_1, \ldots, J_s:$
\begin{equation}
    C_s(n) = \{A\in C_0(n) \mid J_1 A = \ldots = J_s \, A = A^\mathrm{c}\}.
\end{equation}
The classifying map $\Phi$ for a vector bundle $\mathcal{A} \to M$ of class $s$ then is simply the map
\begin{align}\label{eq:3.2}
    \Phi : \; M \to C_s(n) , \quad k \mapsto A_k \,,
\end{align}
assigning to the momentum $k \in M$ the complex hyperplane $A_k \in C_s(n)$.

This reformulation does not yet account for the Fermi constraint $A_k^\perp = A_{\tau(k)}$. To incorporate it, we denote by
\begin{equation}
    \tau_0 : \; C_0(n) \to C_0(n)
\end{equation}
the involution that sends a complex $r$-plane $L \subset W$ to the complex $(2n-r)$-plane $L^\perp \subset W$. We notice that $C_0(n) \supset C_1(n) \supset \ldots \supset C_s(n)$. Since the transformations $J_l : \; C_0(n) \to C_0(n)$ preserve the CAR pairing, $\{ J_l L \,, J_l L^\perp \} = \{ L \,, L^\perp \} = 0$, they commute with $\tau_0\,$. Therefore $\tau_0$ descends to an involution
\begin{equation}
    \tau_s : \; C_s(n) \to C_s(n)
\end{equation}
for all $s = 1, 2, \ldots$ by restriction. The condition $A_k^\perp = A_{\tau(k)}$ now becomes
\begin{equation}\label{eq:Phi-eqvt}
    \tau_s \circ \Phi = \Phi \circ \tau .
\end{equation}
Fixing a class $s$, we have that the group $\mathbb{Z}_2$ acts on two spaces, $M$ and $C_s(n)$, with the non-trivial element acting by $\tau$ on the former and $\tau_s$ on the latter. In view of this, the condition (\ref{eq:Phi-eqvt}) can be rephrased as saying that the mapping $\Phi : \; M \to C_s(n)$ is $\mathbb{Z}_2$-equivariant.

An important role is played by the special momenta that satisfy $k = \tau(k)$. At these points of $M$, the condition (\ref{eq:Phi-eqvt}) of $\mathbb{Z}_2$-equivariance constrains $\Phi$ to take values in the set of fixed points of $\tau_s\,$. We denote this subspace by
\begin{equation}
    R_s(n)\equiv \mathrm{Fix}(\tau_s) = \{A\in C_s(n)\mid A = A^\perp \}.
\end{equation}

The reformulation of the current subsection is summarized by the following statement.
\begin{proposition}\label{prop:3.2}
Let $W = \mathbb{C}^{2n}$. The set of rank-$n$ complex subvector bundles $\mathcal{A} \to M$ of symmetry class $s$ (also referred to as $\IQPV$s of class $s$; see Def.\ \ref{def:2.2}) is in one-to-one correspondence with the set of classifying maps $\Phi : \; M \to C_s(n) \subset \mathrm{Gr}_n (W)$ that are $\mathbb{Z}_2$-equivariant, $\Phi = \tau_s \circ \Phi \circ \tau^{-1}$, for the involution $\tau_s : \; C_s(n) \to C_s(n)$, $A \mapsto A^\perp$. At $\tau$-invariant momenta $k = \tau(k)$ the map $\Phi$ takes values in a subspace $R_s(n) = \mathrm{Fix}(\tau_s)$.
\end{proposition}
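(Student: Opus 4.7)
The plan is to set up the bijection as a straightforward dictionary, translating each of the three pieces of structure on an $\IQPV$ (rank-$n$ subbundle, pseudo-symmetries, Fermi constraint) into a corresponding piece of structure on a map $\Phi:M\to\mathrm{Gr}_n(W)$, and then verify that the two directions are mutually inverse.

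First I would invoke the standard fact that, because the ambient bundle $\mathcal{W}=M\times W$ is trivial, a rank-$n$ complex subbundle $\mathcal{A}\subset\mathcal{W}$ is the same datum as a continuous family of complex $n$-planes $A_k\subset W$ indexed by $k\in M$, i.e.\ as a continuous map $\Phi:M\to\mathrm{Gr}_n(W)$ with $\Phi(k)=A_k$. This is the classical description of the Grassmannian as a moduli space of subspaces of a fixed vector space, and continuity in either formulation matches by the definition of the topology on $\mathrm{Gr}_n(W)$.

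Next, I would observe that the pseudo-symmetry conditions $J_1 A_k=\ldots=J_s A_k=A_k^{\mathrm{c}}$ are pointwise constraints on $\Phi(k)$ alone, cutting out precisely the subspace $C_s(n)\subset\mathrm{Gr}_n(W)$ by definition of $C_s(n)$. Hence $\mathcal{A}$ satisfies the pseudo-symmetries for every $k$ if and only if $\Phi$ factors through $C_s(n)$. For the Fermi constraint $A_k^\perp=A_{\tau(k)}$, I would rewrite the left-hand side as $\tau_0(\Phi(k))$ using the definition of $\tau_0$, and note that since $\Phi$ lands in $C_s(n)$ we may replace $\tau_0$ by its restriction $\tau_s$. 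Thus the Fermi constraint at every $k$ is equivalent to the identity $\tau_s\circ\Phi=\Phi\circ\tau$, i.e.\ $\mathbb{Z}_2$-equivariance of $\Phi$.

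For the statement about $\tau$-invariant momenta, at any $k$ with $\tau(k)=k$ equivariance gives $\tau_s(\Phi(k))=\Phi(k)$, forcing $\Phi(k)\in\mathrm{Fix}(\tau_s)=R_s(n)$; conversely no further condition is imposed there. Finally I would close the loop by checking that the assignments $\mathcal{A}\mapsto(k\mapsto A_k)$ and $\Phi\mapsto\{\Phi(k)\}_{k\in M}$ are manifestly inverse to each other, the only mildly non-trivial point being that the second operation really produces a subbundle rather than a non-locally-trivial family; but this is automatic from continuity of $\Phi$ and the local triviality of the tautological bundle over $\mathrm{Gr}_n(W)$ pulled back along $\Phi$.

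Essentially every step is a tautology once the right translations are in place, so there is no genuine obstacle; the slightly delicate point is to make sure that nothing is tacitly assumed about a real or Real structure on $\mathcal{W}$ — only the complex-linear CAR bracket $\{\,,\,\}$ (encoded in $\tau_0$) and the Hermitian pairing (encoded in the $\mathrm{c}$-operation) are used, which is why the classifying space $C_s(n)$ is defined purely inside the complex Grassmannian with its $\mathbb{Z}_2$-action by $\tau_s$.
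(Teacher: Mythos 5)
Your proposal is correct and takes essentially the same route as the paper: the paper itself develops this proposition as a dictionary, defining $C_s(n)$, $\tau_0$, its restriction $\tau_s$, and $R_s(n)$ in the paragraphs immediately preceding the statement, and then presents the proposition as a summary of that translation rather than giving a separate proof. Your explicit remarks about why no further condition arises at $\tau$-fixed points, about local triviality of the pulled-back tautological bundle, and about avoiding any tacit real structure on $W$ are all consistent with the remarks the paper makes around the statement.
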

\begin{remark}
Having recast the Fermi constraint as a condition of $\mathbb{Z}_2$-equivariance, one may wonder why we could not regard our quasi-particle vacua as $\mathbb{Z}_2$-equivariant vector bundles. The answer is that although the $\perp$-operation gives rise to a well-defined involution $\tau_s$ on $C_s(n)$, it does not determine (not for general values of $s$) any kind of complex linear or anti-linear mapping from $A_k$ to $A_{\tau(k)}$.
\end{remark}
\begin{remark}
We refer to $C_s(n)$ and $R_s(n)$ as the ``complex'' and ``real'' classifying spaces for vector bundles of symmetry class $s$. Although the two descriptions by vector bundles and classifying spaces are in principle equivalent, they suggest different notions of topological equivalence. This point is elaborated in the next subsection.
\end{remark}

Proposition \ref{prop:3.2} gives a characterization of our vector bundles
which is concise and efficient for the purpose of systematic classification by topological equivalence. Yet, the precise nature of the spaces of $\mathbb{Z}_2$-equivariant classifying maps $\Phi$ may not reveal itself immediately to the novice, as the situation seems to get more and more involved and constrained for an increasing number of pseudo-symmetries $J_1, \ldots, J_s\,$. However, the identification and detailed discussion of the classifying spaces $C_s(n)$ and their subspaces $R_s(n)$ of $\tau_s$-fixed points for all classes $s = 0, 1, 2, ..., 7$ can be found in the published literature; see \cite{milnor,StoneEtAl}. (To see that our definition of the ``real'' spaces $R_s(n)$ agrees with that of the literature, one observes that by the relation (\ref{eq-mz:2.9}) the Hermitian structure $\langle \, , \, \rangle$ and the CAR bracket $\{ \, , \, \}$ reduce to the same Euclidean structure on the real subspace $\mathbb{R}^{2n} = W_\mathbb{R} \subset W$ of $\gamma$-fixed points, see Remark \ref{remark:orthogonalunitary}.) The well-known outcome of this exercise is displayed in Table \ref{table:CsRs}, where we substitute $n \equiv 8r$.
\begin{table}
\begin{center}
\begin{tabular}{c|c|c}
$s$	& $C_s(8r)$ &$R_s(8r)$ \\ \hline
$0$	
& $\cup_{p+q=16r}\, \mathrm{U}_{16r}/ (\mathrm{U}_{p} \times \mathrm{U}_{q})$
& $\mathrm{O}_{16r}/ \mathrm{U}_{8r}$	\\
$1$	
& $(\mathrm{U}_{8r} \times \mathrm{U}_{8r}) / \mathrm{U}_{8r}$
& $\mathrm{U}_{8r}/\mathrm{Sp}_{8r}$\\
$2$	
& $\cup_{p+q=8r}\, \mathrm{U}_{8r}/(\mathrm{U}_{p}\times \mathrm{U}_{q})$
& $\cup_{p+q=4r} \, \mathrm{Sp}_{8r}/(\mathrm{Sp}_{2p}\times \mathrm{Sp}_{2q})$\\
$3$	
& $(\mathrm{U}_{4r} \times \mathrm{U}_{4r}) / \mathrm{U}_{4r}$
& $(\mathrm{Sp}_{4r} \times \mathrm{Sp}_{4r}) / \mathrm{Sp}_{4r}$\\
$4$	
& $\cup_{p+q=4r}\, \mathrm{U}_{4r}/(\mathrm{U}_{p}\times \mathrm{U}_{q})$
& $\mathrm{Sp}_{4r}/\mathrm{U}_{2r}$\\
$5$	
& ($\mathrm{U}_{2r} \times \mathrm{U}_{2r}) / \mathrm{U}_{2r}$
& $\mathrm{U}_{2r}/\mathrm{O}_{2r}$\\
$6$	
& $\cup_{p+q=2r}\, \mathrm{U}_{2r}/ (\mathrm{U}_{p}\times \mathrm{U}_{q})$
& $\cup_{p+q=2r}\, \mathrm{O}_{2r}/ (\mathrm{O}_{p}\times \mathrm{O}_{q})$\\
$7$	
& ($\mathrm{U}_{r} \times \mathrm{U}_{r}) / \mathrm{U}_r$
& ($\mathrm{O}_{r} \times \mathrm{O}_{r}) / \mathrm{O}_r$
\end{tabular}
\end{center}
\vspace{10pt}
\caption{Realization of $C_s$ and $R_s = \mathrm{Fix}(\tau_s)$ as (unions of) homogeneous spaces.}\label{table:CsRs}
\end{table}
One observes that $C_{s+2}(2n) = C_s(n)$. This 2-fold periodicity reflects the fact that doubling the representation space and extending a complex Clifford algebra by $2$ generators is the same as tensoring it with the full algebra of complex $2 \times 2$ matrices. In the same vein, there is an 8-fold periodicity $R_{s+8}(16n) = R_s(n)$, reflecting a similar isomorphism \cite{ABS} over the real number field.

\subsection{Classification schemes}\label{sect:equiv-class}

To recapitulate: we have two descriptions of an $\IQPV$ of class $s$. On one hand, we may view it as a rank-$n$ complex subvector bundle $\mathcal{A} \stackrel{\pi}{\to} M$ with fibers $\pi^{-1}(k) = A_k \subset W = \mathbb{C}^{2n}$ subject to $A_k^\perp = A_{\tau(k)}$ and the pseudo-symmetry conditions (\ref{eq:2.7}). On the other hand, we may describe it by a classifying map $\Phi : \; M \to C_s(n)$ subject to the condition $\tau_s \circ \Phi = \Phi \circ \tau$ of $\mathbb{Z}_2$-equivariance. The two descriptions are equivalent.

Our goal is to establish a topological classification of translation-invariant free-fermion ground states of gapped systems with given symmetries (i.e.\ of $\IQPV$s in a given symmetry class). To do so, we need to settle on a notion of topological equivalence. In the present paper, we employ the equivalence relation which is given by \emph{homotopy}: we say that two $\IQPV$s belong to the same topological class if they are connected by a continuous deformation known as a homotopy. More precisely, a homotopy between two $\IQPV$s in class $s$ with classifying maps $\Phi_0$ and $\Phi_1$ is given by a continuous family $\Phi_t$ with $\tau_s \circ \Phi_t = \Phi_t \circ \tau$ for all $t\in[0,1]$. We emphasize that all vector bundles in our setting are subbundles of the trivial bundle $M\times W = M\times\mathbb{C}^{2n}$. Understood in this way, the equivalence relation of homotopy leads, in general, to more topological classes than does the equivalence relation given by the notion of isomorphy of vector bundles. This is illustrated by the following example.

\begin{example}
In the simple case of class $A$ (see Section \ref{sect:2.3.1}), $\IQPV$s with $q$ valence bands and $p = n - q$ conduction bands are rank-$q$ complex subvector bundles of $M\times\mathbb{C}^n$. Denoting the set of isomorphism classes of these bundles by $\mathrm{Vect}_q^\mathbb{C} (M)$, and writing $[M,Y]$ for the set of homotopy classes of maps $M \to Y$, one has a bijection \cite{husemoller}
\begin{equation*}
    \mathrm{Vect}_q^\mathbb{C} (M) \simeq [M , \mathrm{Gr}_q (\mathbb{C}^n)]
\end{equation*}
as long as $2 p \ge \dim M$. This bijection breaks down, however, when the inequality of dimensions is violated; it then becomes possible for two $\IQPV$s to be isomorphic without being homotopic. A concrete example is provided by the ``Hopf magnetic insulator'' \cite{MRW} for $M = \mathrm{S}^3$ with $p = q = 1$, where $2 p = 2 < 3 = \dim \mathrm{S}^3$. Indeed, while all complex line bundles over $\mathrm{S}^3$ are isomorphic to the trivial one ($\mathrm{Vect}_1^\mathbb{C}( \mathrm{S}^3) = 0$), such vector bundles, viewed as subbundles of $\mathrm{S}^3 \times \mathbb{C}^2$, organize into distinct homotopy classes since
\begin{equation*}
    [\mathrm{S}^3, \mathrm{Gr}_1(\mathbb{C}^2)] = \pi_3(\mathrm{S}^2) = \mathbb{Z} .
\end{equation*}
These homotopy classes are distinguished by what is called the Hopf invariant.
\end{example}

A standard approach used in the literature is to work with a further reduction of the topological information contained in isomorphism classes, by adopting the equivalence relation of \textit{stable equivalence} between vector bundles. We will use class $A$ once more in order to illustrate the construction. Two vector bundles $\mathcal{A}_0 \to M$ and $\mathcal{A}_1 \to M$ are stably equivalent if they are isomorphic after adding trivial bundles (meaning trivial valence bands in physics language), i.e.\ if there exist $m_1, m_2 \in \mathbb{N}_0$ such that
\begin{equation*}
    \mathcal{A}_0 \oplus (M \times \mathbb{C}^{m_1}) \simeq \mathcal{A}_1 \oplus (M \times \mathbb{C}^{m_2}) .
\end{equation*}
Under the direct-sum operation, the stable equivalence classes constitute a group called the (reduced) complex $K$-group of $M$, which is denoted as $\widetilde{K}_\mathbb{C} (M)$. (Inverses in this group are given by the fact that for compact $M$, all complex vector bundles $\mathcal{A}$ have a partner $\mathcal{A}^\prime$ such that $\mathcal{A} \oplus \mathcal{A}^\prime \simeq M \times \mathbb{C}^n$ for some $n \in \mathbb{N}$, where the right-hand side represents the neutral element.) In the limit of a large number of valence and conduction bands, namely the \textit{stable regime}, the elements of the reduced $K$-group are in bijection with the homotopy classes of maps into the classifying space \cite{husemoller}:
\begin{equation*}
    \widetilde{K}_\mathbb{C}(M) \simeq [M, \mathrm{Gr}_n (\mathbb{C}^{2n})] \qquad (\text{for}\; 2n \ge \dim M).
\end{equation*}
Outside the stable regime, stably equivalent vector bundles need not be isomorphic, much less homotopic.

\begin{example}
Consider the tangent bundle $T \mathrm{S}^2$ of the two-sphere. By regarding $\mathrm{S}^2$ as the unit sphere in $\mathbb{R}^3$, we also have the normal bundle $N \mathrm{S}^2 \simeq \mathrm{S}^2 \times \mathbb{R}$. The direct sum of $T\mathrm{S}^2$ and $N \mathrm{S}^2$ is $\mathrm{S}^2 \times \mathbb{R}^3$. Thus $T \mathrm{S}^2$ is stably equivalent to the trivial bundle. Yet the isomorphism class of $T \mathrm{S}^2$ differs from that of the trivial bundle. It is the non-trivial element $2 \in \mathbb{N}_0 = \mathrm{Vect}_2^\mathbb{R} (\mathrm{S}^2)$ in Table A.1 of \cite{deNittis-AI}.

In the present context, a physical realization of $T \mathrm{S}^2$ is the ground state of a system in symmetry class $\AI$ in two spatial dimensions ($M = \mathrm{S}^2$), albeit in the generalized sense that the operation of time reversal is replaced by the combination of time reversal and space inversion, which effectively restricts the fibers $A_k$ to be real vector spaces.
\end{example}
\begin{remark}
To compare our approach with that of $K$-theory, we picked the example of class $A$. It turns out that only two more of our symmetry classes are accommodated by the standard formulation of $K$-theory for vector bundles: these are class $\AI$ ($s = 6$), where vector bundles are equipped with a complex anti-linear involution (corresponding to the physical symmetry of time reversal $T$ with $T^2 = +1$), and class $\AII$ ($s = 2$), where the involution is replaced by a projective involution (time reversal $T$ with $T^2 = -1$). In the former case, taking stable equivalence classes leads to $KR$-groups \cite{atiyah,deNittis-AI}, while in the latter case it leads to $KQ$-groups \cite{dupont,deNittis-AII}. For the other symmetry classes, the corresponding $K$-theory groups can only be inferred indirectly by an algebraic construction using Clifford modules as in \cite{kitaev,FreedMoore}. In all cases, the $K$-theory groups of momentum space $M$ are in bijection with the homotopy classes of $\mathbb{Z}_2$-equivariant maps $M\to C_s(n)$ -- denoted by $[M , C_s(n)]^{ \mathbb{Z}_2}$ as a set -- in the limit of large $n$ (as well as large $p$ and $q$ where applicable, see Table \ref{table:CsRs}).
\end{remark}

To sum up, the natural equivalence relation for us to use is that of homotopy. It is a finer tool than stable equivalence (as considered in \cite{kitaev}) and even isomorphy of vector bundles (as considered in \cite{deNittis-AI,deNittis-AII} for $s = 6$ and $s = 2$), and is therefore adopted as our topological classification principle. Although we will ultimately work in the stable regime in order to utilize such results as the Bott periodicity theorem, the use of homotopy theory allows us to keep track of the precise conditions under which our equivalences hold. In other words, we are able to say how many bands are required in order for the physical system to be in the stable regime for a given space dimension.

\section{The diagonal map}\label{sect:diag-map}
\setcounter{equation}{0}

In this section we introduce the ``master diagonal map'' -- a universal mapping that takes a $d$-dimensional $\IQPV$ of class $s$ and transforms it into a $(d+1)$-dimensional $\IQPV$ of class $s + 1$. While there exist in principle many such maps -- for some previous efforts in this direction see \cite{StoneEtAl,TeoKane} -- the one described here stands out in that it can be proven to induce a one-to-one mapping between stable homotopy classes of base-point preserving and $\mathbb{Z}_2$-equivariant maps $M \to C_s(n)$ and $\tilde{S} M \to C_{s+1}(2n)$, where $\tilde{S} M$ denotes the momentum-type suspension of $M$ (see below). Our mapping also bears a close relation to the map underlying the phenomenon of real Bott periodicity.

{}From now on, we will use the model of an $\IQPV$ of symmetry class $s$ as a $\mathbb{Z}_2$-equivariant map $\phi$ from $M$ into the classifying space $C_s \equiv C_s(n)$ with $s$ pseudo-symmetries. The goal is to construct from $\phi$ a new mapping, $\Phi$, which maps $\tilde{S} M$ into a classifying space $C_{s+1}$ with one additional pseudo-symmetry. It is not difficult to see that such a map will not induce an injective map of homotopy classes in general unless the ambient vector space $W$ is enlarged. Therefore our story of constructing $\Phi$ begins with a modification of $W$: we double its dimension by replacing it by $\mathbb{C}^2 \otimes W$. The procedure is identical to that of Section \ref{sect:s-geq-3}, which we assume here to be understood. At the same time, we now extend the given Clifford algebra of pseudo-symmetries by two generators, in the process reviewing and exploiting a result known as $(1,1)$ periodicity.

Let us mention that the physical meaning of the step $W \to \mathbb{C}^2 \otimes W$ depends on the case. For example, for $s = 0$ the tensor factor $\mathbb{C}^2$ introduces a spin-1/2 degree of freedom. For $s = 1$ it replaces a single band by a pair of bands -- one valence and one conduction band.

\subsection{(1,1) periodicity}\label{sect:1,1-red}

To offer some perspective on the following, the statement we are driving at is closely related to two standard isomorphisms of complex and real Clifford algebras, namely $\mathrm{Cl} (\mathbb{C}^{s+2}) \simeq \mathrm{Cl} (\mathbb{C}^2) \otimes \mathrm{Cl}(\mathbb{C}^s)$ and $\mathrm{Cl} (\mathbb{R}^{s+1, 1}) \simeq \mathrm{Cl} (\mathbb{R}^{1,1}) \otimes \mathrm{Cl}(\mathbb{R}^s)$.

Let there be Clifford algebra generators $j_1 , \ldots, j_s$ that satisfy the relations (\ref{Eq:Cliff-s}) and are orthogonal unitary transformations of $W = \mathbb{C}^{2n}$, which means that they preserve $\langle \, , \, \rangle_W$ and $\{ \, , \, \}_W$. Then we take the tensor product of $W$ with $\mathbb{C}^2$ and pass to a Clifford algebra with $s+2$ generators $J_1, \ldots, J_{s+2}$ defined on $\mathbb{C}^2 \otimes W$ by
\begin{equation}\label{eq:conn-rels}
    \begin{split}
    J_l &= \begin{pmatrix} 0 &1\cr 1 &0 \end{pmatrix} \otimes j_l \quad (l=1, \ldots, s) , \\ J_{s+1} &= \begin{pmatrix} 0 &1\cr -1 &0 \end{pmatrix} \otimes \mathrm{Id}_W , \quad J_{s+2} = \begin{pmatrix} \mathrm{i} &0 \cr 0 &- \mathrm{i} \end{pmatrix} \otimes \mathrm{Id}_W . \end{split}
\end{equation}
The Hermitian scalar product and the CAR bracket of $W$ are transferred to the doubled space in the natural way explained in Section \ref{sect:s-geq-3}. Note that on $\mathbb{C}^2 \otimes W$ we have the two involutions $A \mapsto A^\mathrm{c}$ and $A \mapsto A^\perp$ as before.

We now observe that all Clifford algebra generators $J_1, \ldots, J_{s+2}$ are orthogonal unitary transformations of $\mathbb{C}^2 \otimes W$ but for the distinguished generator $K = J_{s+2}\,$, which is unitary but \emph{sign-reverses} the extended CAR bracket:
\begin{equation}
    \{ K w , K w^\prime \}_{\mathbb{C}^2 \otimes W} = - \{ w , w^\prime \}_{\mathbb{C}^2 \otimes W} .
\end{equation}
We call $K$ ``imaginary'' while using the adjective ``real'' for the generators $J_1, \ldots, J_{s+1}\,$.

Let us note the alternative option of working with the modified generator $\mathrm{i}K$ instead of $K$. The former would be a bona fide orthogonal transformation of $\mathbb{C}^2 \otimes W$, but it has square plus one, and one would call it ``positive'' (in contradistinction with the ``negative'' generators $J_1, \ldots, J_{s+1}$) as is done in \cite{kitaev,StoneEtAl}. We prefer the present convention of a negative but imaginary generator $K$, as it will render our later discussion of the diagonal map more concise.

We are now ready to get to the point. Let us recall from Section \ref{sect:ClassMaps} the spaces ($s \geq 0$)
\begin{align}
    &C_s(n) = \{ a \subset W \mid j_l \, a = a^\mathrm{c} ; \; l = 1, \ldots, s \} \label{eq:def-Cq}, \\ &R_s(n) = \{ a \in C_s(n) \mid a = a^\perp \} . \label{eq:def-Rq}
\end{align}
By $C_0(n)$ we simply mean the space of all complex $r$-planes ($0 \leq r \leq 2n$) in $W = \mathbb{C}^{2n}$. The subspace $R_0(n)$ consists of all complex planes $a \subset W$ with the ``Lagrangian'' property $a = a^\perp$; such planes are necessarily of dimension $n$. The spaces $C_s$ and $R_s$ will be called the ``complex'' and ``real'' classifying spaces for our $\IQPV$s (or vector bundles) of class $s$. As major players of our classification work they were tabulated in Table \ref{table:CsRs} of Section \ref{sect:ClassMaps}.

Next, we decree the corresponding definitions at the level of the doubled space $\mathbb{C}^2 \otimes W$:
\begin{align}
    &C_{s+2}(2n) = \{ A \subset \mathbb{C}^2 \otimes W \mid J_l \, A = A^\mathrm{c}; \; l = 1, \ldots, s+2\} \label{eq:def-Cq2}, \\ &R_{s+1,1}(2n) = \{ A \in C_{s+2}(2n) \mid A = A^\perp \} .
    \label{eq:def-Rq2}
\end{align}
Here the more elaborate notation $R_{s+1,1}$ reflects the fact that the generators $J_1, \ldots, J_{s+1}$ are real, whereas the last generator $J_{s+2} = K$ is imaginary.

Now consider the mapping $f : \; a \mapsto A$ defined by Eq.\ (\ref{eq-mz:2.31}) of Section \ref{sect:s-geq-3}. It is clear that $f$ is a map from $C_s(n)$ to $C_{s+2}(2n)$. Indeed, one easily checks that for $a \in C_s(n)$ the image plane $A = f(a)$ satisfies the relations $J_l \, A = A^\mathrm{c}$ ($l = 1, \ldots, s+2$) of $C_{s+2}(2n)$. Moreover, from $a = a^\perp$ one deduces that $A = A^\perp$. Thus $f$ restricts to a map $f^\prime :\; R_s(n) \to R_{s+1,1}(2n)$. The statement of $(1,1)$ periodicity is now as follows.
\begin{proposition}\label{prop:2.1X}
If $f : \; C_s(n) \to C_{s+2}(2n)$ is the mapping defined by Eq.\ (\ref{eq-mz:2.31}), then both this map and its restriction $f^\prime :\; R_s(n) \to R_{s+1,1}(2n)$ are bijective.
\end{proposition}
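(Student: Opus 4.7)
The plan is to construct an explicit inverse of $f$. The starting observation is that among the $s+2$ generators on $\mathbb{C}^2\otimes W$, the last two play a distinguished role: $J_{s+1}$ is real while $K = J_{s+2}$ is imaginary, and together they produce a natural involution on $\mathbb{C}^2\otimes W$ whose eigenspace decomposition recovers the subspace $a \subset W$ defining a preimage.

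Concretely, let $P$ be the operator $\sigma_x \otimes \mathrm{Id}_W$ with $\sigma_x = \bigl(\begin{smallmatrix} 0 & 1 \\ 1 & 0 \end{smallmatrix}\bigr)$; its $\pm 1$-eigenspaces are the diagonal $D = \{(w,w) : w\in W\}$ and the anti-diagonal $D^\prime = \{(w,-w) : w\in W\}$. From the matrix definitions in (\ref{eq:conn-rels}) one computes $K J_{s+1} = \mathrm{i}\,P$, so for $A \in C_{s+2}(2n)$ the identities $J_{s+1}A = A^\mathrm{c} = KA$ combine into $A = -K J_{s+1} A = -\mathrm{i}\,PA$, whence $PA = A$. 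Consequently $A = A_+ \oplus A_-$ with $A_+ \subset D$ and $A_- \subset D^\prime$, and via the isomorphisms $\sigma : W \to D$, $w \mapsto (w,w)$ and $\sigma^\prime : W \to D^\prime$, $w \mapsto (w,-w)$, I read off subspaces $a := \sigma^{-1}(A_+)$ and $b := (\sigma^\prime)^{-1}(A_-)$ of $W$. Since $D$ and $D^\prime$ are mutually Hermitian orthogonal, $A^\mathrm{c} = \sigma(a^\mathrm{c}) \oplus \sigma^\prime(b^\mathrm{c})$, while $J_{s+1}A = \sigma^\prime(a) \oplus \sigma(b)$; comparison forces $b = a^\mathrm{c}$. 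For $l \leq s$ the generator $J_l = \sigma_x \otimes j_l$ preserves both $D$ and $D^\prime$ and acts there (up to a sign irrelevant for subspaces) via $j_l$, so $J_l A = A^\mathrm{c}$ becomes $j_l a = a^\mathrm{c}$. Thus $a \in C_s(n)$ and $A = f(a)$ by inspection; this inverts $f$ and proves bijectivity of the first map.

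For the restricted map $f^\prime$ I would verify $a = a^\perp \Leftrightarrow A = A^\perp$ directly. With the CAR bracket transferred as $\{(w_1,w_2),(w_1^\prime,w_2^\prime)\} = \{w_1,w_1^\prime\} + \{w_2,w_2^\prime\}$, the condition $A \subset A^\perp$ for $A = f(a)$ unfolds into the pair $a \subset a^\perp$ and $a^\mathrm{c} \subset (a^\mathrm{c})^\perp$. The relation $\langle\gamma\psi,\psi^\prime\rangle = \{\psi,\psi^\prime\}$ together with the anti-unitarity of $\gamma$ yields $(\gamma a)^\perp = a^\mathrm{c}$, and combined with $a^\mathrm{c} = \gamma a^\perp$ this makes the second inclusion automatic whenever the first holds. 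A dimension count upgrades both inclusions to equalities, so $f$ restricts to the asserted bijection $f^\prime : R_s(n) \to R_{s+1,1}(2n)$.

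The main obstacle I anticipate is bookkeeping rather than conceptual content: keeping straight the two orthogonality operations ($\mathrm{c}$ versus $\perp$), the signs introduced by the imaginary generator $K$, and the fact that the Clifford constraints already force the dimensions of $A$ and $a$ to be $2n$ and $n$ respectively even though $C_s(n)$ is nominally defined as a union over Grassmannians of varying rank. Once $P = -\mathrm{i}\,KJ_{s+1}$ and the diagonal/anti-diagonal splitting of $\mathbb{C}^2\otimes W$ are in place, the rest reduces to a sequence of short matrix computations.
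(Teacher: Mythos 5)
Your inverse construction is correct and coincides, up to notation, with the paper's: the operator $P = -\mathrm{i}KJ_{s+1}$ is exactly the paper's involution $L = \mathrm{i}IK$ (with $I = J_{s+1}$), and your passage to the diagonal $D = E_{+1}(P)$ followed by $\sigma^{-1}$ agrees with the paper's composite of intersecting with $E_{+1}(L)$ and applying the projector $\Pi = \tfrac{1}{2}(\mathrm{Id} - \mathrm{i}K)$ onto $E_{+\mathrm{i}}(K)$. The one genuine difference is scope: you work throughout with the explicit matrix realization~(\ref{eq:conn-rels}) on $\mathbb{C}^2 \otimes W$, whereas the paper deliberately \emph{avoids} assuming the tensor-product decomposition and reconstructs $W$, $\gamma$ and the $j_l$ abstractly from $J_1, \ldots, J_{s+2}$ alone; that abstraction is then reused in the Appendix to prove Proposition~\ref{prop:2.1-new}. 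For the bijectivity of $f : C_s(n) \to C_{s+2}(2n)$ as stated, your concrete argument is sound.

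The restriction step, however, has a genuine gap. You correctly unfold $A \subset A^\perp$ into the pair of inclusions $a \subset a^\perp$ and $a^\mathrm{c} \subset (a^\mathrm{c})^\perp$, but your claim that the second is automatic from the first is false --- in fact it is the \emph{reverse}. From $L^\mathrm{c} = \gamma L^\perp$ one computes $(a^\mathrm{c})^\perp = \gamma a$, so the second inclusion reads $\gamma a^\perp \subset \gamma a$, i.e.\ $a^\perp \subset a$, the opposite of the first. Concretely, in $W = \mathbb{C}^4$ with $\{e_i,e_j\} = \delta_{i,5-j}$ and $a = \mathbb{C}e_1$, one has $a \subset a^\perp = \mathrm{span}\{e_1,e_2,e_3\}$, yet $a^\mathrm{c} = \mathrm{span}\{e_2,e_3,e_4\}$ fails to be isotropic since $\{e_2,e_3\}=1$. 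The correct route --- and the paper's --- is to keep both inclusions as independent pieces of information: $A = A^\perp$ gives $\{a,a\}_W = 0$, and it also entails $A^\mathrm{c} = (A^\mathrm{c})^\perp$, hence $\{a^\mathrm{c},a^\mathrm{c}\}_W = 0$; the two isotropy conditions force $\dim a \le n$ and $\dim a^\mathrm{c} \le n$, so $\dim a = n$, and only then does $a \subset a^\perp$ upgrade to $a = a^\perp$. Your closing dimension count was aimed at the right conclusion; it simply cannot be fed from the first inclusion alone.
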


The proof of the proposition will consume the rest of this subsection. What remains to be shown is that, given the Clifford algebra generators $J_1, \ldots, J_{s+2}$ on the doubled space, $\widetilde{W}$, one can reconstruct the original framework built on the generators $j_1, \ldots, j_s$ on $W$ so as to invert the mapping $f : \; a \mapsto A$. For the inverse direction, we may not assume the decomposition $\widetilde{W} = \mathbb{C}^2 \otimes W$ and the connecting relations (\ref{eq:conn-rels}) but must construct them. This is done in the following. We begin with some preparation and state a useful lemma along the way.

Like the other generators, the distinguished operator $K = J_{s+2}$ is unitary and anti-Hermitian and has eigenvalues $\pm \mathrm{i}$. Let the corresponding eigenspaces be denoted by
\begin{equation}
    W_\pm = E_{\pm \mathrm{i}}(K) .
\end{equation}
Note that all operators $J_1, \ldots, J_{s+1}$ exchange these spaces: $J_l W_\pm = W_\mp$ ($l = 1, \ldots, s+1$) and that $\dim W_+ = \dim W_- \,$. The idea of the sequel is to carry out a reduction from $\widetilde{W}$ to \begin{equation}
    W_+ \equiv W .
\end{equation}

First of all, the non-degenerate symmetric bilinear form $\{ \, , \, \}$ (the CAR pairing) given on $\widetilde{W}$ descends by restriction to a non-degenerate symmetric bilinear form
\begin{equation}
    \{ \, , \, \}_W : \; W \times W \to \mathbb{C} .
\end{equation}
Indeed, if $w_+ \in W_+$ and $w_- \in W_-$, then
\begin{equation*}
    \{w_+ , w_-\} = \{ \mathrm{i} w_+ , - \mathrm{i} w_-\} = \{K w_+ , K w_-\} = - \{w_+ ,w_-\} = 0 ,
\end{equation*}
since $K$ sign-reverses the CAR pairing. By similar reasoning, the Hermitian scalar product $\langle \, , \, \rangle : \; \widetilde{W} \times \widetilde{W} \to \mathbb{C}$ descends to a Hermitian scalar product
\begin{equation}
    \langle \; , \, \rangle_W : \; W \times W \to \mathbb{C} .
\end{equation}
It follows that the complex anti-linear involution $\gamma : \; \widetilde{W} \to \widetilde{W}$ restricts to a similar involution
$\gamma : \; W \to W$ by the defining equation $\{ w , w^\prime \}_W = \langle \gamma \, w , w^\prime \rangle_W$.

Now let $J_{s+1} \equiv I$, and let $\mathrm{Gr}_{2n}(\widetilde{W})$ be the Grassmann manifold of complex $2n$-planes in $\widetilde{W} \simeq \mathbb{C}^{4n}$. Consider then any $2n$-plane $A \in \mathrm{Gr}_{2n} (\widetilde{W})$ that obeys the orthogonality relations
\begin{equation}
    I A = K A = A^\mathrm{c} .
\end{equation}
Writing $L \equiv \mathrm{i} I K$, observe that $L^2 = \mathrm{Id}_{ \widetilde{W}}$ and $L\, A = A$. It follows that $A$ has an orthogonal decomposition by $L$-eigenspaces:
\begin{equation}
    A = \big(A\cap E_{+1}(L)\big) \oplus \big( A \cap E_{-1}(L) \big).
\end{equation}
As we shall see, $A$ is already determined by one of the two summands, say $A \cap E_{+1} (L)$. To show that, consider the operator $\Pi = {\textstyle{\frac{1}{2}}} (\mathrm{Id} - \mathrm{i} K)$ of orthogonal projection from $\widetilde{W}$ to $W$, and let $A^{(\pm)} \subset W$ be the image of $A \cap E_{\pm 1}(L)$ under the projector $\Pi$.
\begin{lemma}
The linear maps $\Pi : \; A\cap E_{\pm 1}(L) \to A^{(\pm)}$ are bijective. The space $A^{(-)}$ is the orthogonal complement of $A^{(+)}$ in $W = E_{+\mathrm{i}} (K)$.
\end{lemma}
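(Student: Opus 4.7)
The plan is to treat injectivity of $\Pi$ on each $L$-eigenspace summand directly, and then to derive the orthogonal-complement statement from a Hermitian orthogonality computation backed up by a dimension count.

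For the first claim, since $\Pi = \tfrac12(\mathrm{Id} - \mathrm{i}K)$ has kernel $W_-$, it suffices to show that $A \cap E_{\pm 1}(L) \cap W_- = \{0\}$. Suppose $w$ lies in this intersection. From $w \in W_-$ we have $Kw = -\mathrm{i}\,w$, and the eigenvalue equation $\mathrm{i}IKw = \pm w$ then collapses to $Iw = \pm w$. Hence $\pm w = Iw \in IA = A^\mathrm{c}$, while also $w \in A$, so $w \in A \cap A^\mathrm{c} = \{0\}$. This gives injectivity (and hence bijectivity onto $A^{(\pm)}$) of $\Pi$ on each $L$-eigenspace summand.

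For the orthogonal-complement claim, decompose any $w \in A \cap E_{+1}(L)$ as $w = w_+ + w_-$ with $w_\pm \in W_\pm$. Using $Kw_\pm = \pm \mathrm{i} w_\pm$ together with the fact that $I$ swaps $W_+$ and $W_-$ (which follows from $IK = -KI$), the equation $\mathrm{i}IKw = w$ forces $w_- = -Iw_+$; by the same reasoning, any $w' \in A \cap E_{-1}(L)$ satisfies $w'_- = +Iw'_+$. In particular $Iw'$ has $W_+$-component $-w'_+$ and $W_-$-component $Iw'_+$. Now the relation $IA = A^\mathrm{c}$ yields $\langle w, Iw'\rangle = 0$, and, since $\widetilde W = W_+ \oplus W_-$ is a Hermitian orthogonal decomposition, this unfolds as
\begin{equation*}
0 = \langle w_+, -w'_+\rangle + \langle w_-, Iw'_+\rangle = -\langle w_+, w'_+\rangle - \langle Iw_+, Iw'_+\rangle = -2\,\langle \Pi w, \Pi w'\rangle_W,
\end{equation*}
using unitarity of $I$ in the middle step. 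Hence $A^{(+)} \perp A^{(-)}$ inside $W$. A dimension count then closes the argument: from the bijectivity just proved and the decomposition $A = (A\cap E_{+1}(L)) \oplus (A\cap E_{-1}(L))$ one reads off $\dim A^{(+)} + \dim A^{(-)} = \dim A = 2n = \dim W$, so the two orthogonal subspaces must be Hermitian complements of one another in $W$.

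The main technical point is identifying which of the orthogonality relations built into $A$ does the work: it is $IA = A^\mathrm{c}$ (rather than $KA = A^\mathrm{c}$) that, combined with the fact that $I$ exchanges $W_\pm$, couples the $W_+$-components of $w$ and $w'$ with the same sign twice over, producing the nontrivial identity above; the parallel computation using $KA = A^\mathrm{c}$ collapses to $0 = 0$ and yields no constraint.
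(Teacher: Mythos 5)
Your proof is correct but takes a genuinely different route from the paper's in both halves. For injectivity, the paper writes every $v \in A \cap E_{+1}(L)$ as $v = v_+ + Lv_+$ with $v_+ = \Pi(v)$, so that $v_+ = 0$ immediately forces $v = 0$; you instead identify $\ker\Pi$ with $W_-$ and show $A \cap E_{\pm1}(L) \cap W_- = 0$ by deducing $Iw = \pm w$ and invoking $A \cap A^{\mathrm{c}} = 0$ (one could also note that $Iw = \pm w$ already contradicts $I^2 = -\mathrm{Id}$, which shortens this step further). For the orthogonality, the paper pairs $w + Lw \in A$ against $w' + Lw' = -\mathrm{i}K(w'-Lw') \in KA = A^{\mathrm{c}}$ and uses $\langle w+Lw, w'+Lw'\rangle = 2\langle w, w'\rangle$; you instead work explicitly in $W_{\pm}$-components, derive the relations $w_- = -Iw_+$ and $w'_- = Iw'_+$ from the $L$-eigenvalue equations, and pair against $Iw' \in IA = A^{\mathrm{c}}$. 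Both arguments are valid, and your dimension count matches the paper's. The paper's version is a bit more compact; yours makes the $W_{\pm}$-structure of the eigenspaces $A \cap E_{\pm 1}(L)$ more visible, which is pedagogically useful.

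One thing to correct, though it is a side remark rather than a step in the proof: your closing claim that the parallel computation using $KA = A^{\mathrm{c}}$ ``collapses to $0=0$ and yields no constraint'' is not right. With $Kw' = \mathrm{i}\,w'_+ - \mathrm{i}\,w'_-$ and $\langle w_-, w'_-\rangle = \langle -Iw_+, Iw'_+\rangle = -\langle w_+, w'_+\rangle$, one finds $\langle w, Kw'\rangle = \mathrm{i}\,\langle w_+, w'_+\rangle - \mathrm{i}\,\langle w_-, w'_-\rangle = 2\mathrm{i}\,\langle w_+, w'_+\rangle$, which gives exactly the same constraint as using $I$. This is as it should be, since the paper's own proof of the orthogonality statement rests precisely on $KA = A^{\mathrm{c}}$. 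The sign you missed is that $\langle w_-, w'_-\rangle = -\langle w_+, w'_+\rangle$ rather than $+\langle w_+, w'_+\rangle$, because of the asymmetry $w_- = -Iw_+$ versus $w'_- = +Iw'_+$. Your main computation using $I$ is itself sound; only the diagnosis of why $K$ ``would not work'' is mistaken.
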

\begin{proof}
Every $v \in A \cap E_{+1}(L)$ is of the form $v = w + L w$ with $w \in A$. If $v = v_+ + v_-$ and $w = w_+ + w_-$ are the orthogonal decompositions of $v\,, w$ by $\widetilde{W} = W_+ \oplus W_-\,$, then
\begin{equation*}
    v_+ = w_+ + L w_- \quad \text{and} \quad v_- = w_- + L w_+ = L v_+ \,,
\end{equation*}
because $L$ anti-commutes with $K$ and hence exchanges $W_+$ with $W_-\,$. The map $\Pi : \; v \mapsto v_+$ is surjective by the definition of $A^{(+)}$. It is also injective since $v_+ = 0$ implies $v = v_+ + L v_+ = 0$ and therefore $w \in E_{-1}(L)$. Thus the map $\Pi : \; A \cap E_{+1}(L) \to A^{(+)}$ is an isomorphism of vector spaces. The argument for $\Pi : \; A \cap E_{-1}(L) \to A^{(-)}$ is similar.

To prove the second statement, let $w \in A^{(+)}$ and $w^\prime \in A^{(-)}$. Then $w + L w \in A$ and
\begin{equation*}
    w^\prime + L w^\prime = - \mathrm{i} K ( w^\prime - L w^\prime ) \in K A = A^\mathrm{c} ,
\end{equation*}
and from $\left\langle A , A^\mathrm{c} \right\rangle = 0$ we infer that
$0 = \left\langle w + L w\,, w^\prime + L w^\prime \right\rangle = 2 \left\langle w , w^\prime \right\rangle$. Thus $A^{(+)}$ and $A^{(-)}$ are orthogonal to each other. Because of
\begin{equation*}
    \dim A^{(+)} + \dim A^{(-)} = \dim A \cap E_{+1}(L) + \dim A \cap E_{-1}(L) = \dim A = \dim W ,
\end{equation*}
$A^{(+)} \subset W$ and $A^{(-)} \subset W$ are in fact orthogonal complements of each other.
\end{proof}

As an immediate consequence, we have:
\begin{corollary}\label{cor:2.1}
The vector space $\widetilde{W}$ has an orthogonal decomposition by the following four subspaces:
\begin{align*}
    A \cap E_{+1}(L) &= \{ w + L w \mid w \in A^{(+)} \} , \quad A^\mathrm{c} \cap E_{+1}(L) = \{w + L w \mid w \in A^{(-)} \} , \cr
    A \cap E_{-1}(L) &= \{ w - L w \mid w \in A^{(-)} \} , \quad A^\mathrm{c} \cap E_{-1}(L) = \{ w - L w \mid w \in A^{(+)} \} .
\end{align*}
\end{corollary}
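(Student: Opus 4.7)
The plan is to derive the four summands from a single orthogonal splitting of $\widetilde{W}$ and then identify each summand by using the lemma together with the hypothesis $K A = A^{\mathrm{c}}$. First I would record two structural facts about $L = \mathrm{i}\, I K$: since $I$ and $K$ are anti-Hermitian orthogonal unitaries that anti-commute (as distinct Clifford generators), one has $L^{\ast} = L$ and $L^{2} = \mathrm{Id}$, so $L$ is a self-adjoint involution with Hermitian orthogonal eigenspaces $E_{+1}(L)$ and $E_{-1}(L)$. Also $K L = - L K$, so $K$ exchanges $W_{+} = E_{+\mathrm{i}}(K)$ with $W_{-}$ and interchanges $E_{+1}(L)$ with $E_{-1}(L)$.

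Next I would produce the four-piece orthogonal decomposition. By hypothesis $L A = A$, and since $L$ is self-adjoint, $A^{\mathrm{c}}$ is $L$-invariant as well. Combining $\widetilde{W} = A \oplus A^{\mathrm{c}}$ (Hermitian orthogonal) with $\widetilde{W} = E_{+1}(L) \oplus E_{-1}(L)$ yields the orthogonal decomposition
\begin{equation*}
\widetilde{W} = \bigl( A \cap E_{+1}(L) \bigr) \oplus \bigl( A \cap E_{-1}(L) \bigr) \oplus \bigl( A^{\mathrm{c}} \cap E_{+1}(L) \bigr) \oplus \bigl( A^{\mathrm{c}} \cap E_{-1}(L) \bigr).
\end{equation*}

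The identification of the two $A$-pieces is already contained in the proof of the lemma: for $v \in A \cap E_{\pm 1}(L)$, write $v = v_{+} + v_{-}$ with $v_{\pm} \in W_{\pm}$; the relation $L w = \mp w$ combined with $L W_{\pm} = W_{\mp}$ forces $v_{-} = \pm L v_{+}$, so that $v = w \pm L w$ with $w = \Pi v \in A^{(\pm)}$. This gives $A \cap E_{+1}(L) = \{ w + L w \mid w \in A^{(+)} \}$ and $A \cap E_{-1}(L) = \{ w - L w \mid w \in A^{(-)} \}$.

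For the two $A^{\mathrm{c}}$-pieces I would push the previous descriptions through $K$, using $K A = A^{\mathrm{c}}$, the identity $K w = \mathrm{i} w$ on $W = W_{+}$, and $K L = - L K$. Explicitly, for $w \in A^{(-)}$,
\begin{equation*}
K ( w - L w ) \;=\; \mathrm{i}\, w \;+\; L\, (\mathrm{i}\, w) \;\in\; K A = A^{\mathrm{c}},
\end{equation*}
and this lies in $E_{+1}(L)$ by the swap property of $K$. Since $A^{(-)}$ is a complex subspace, $w^{\prime} := \mathrm{i} w$ still ranges over $A^{(-)}$, which yields $A^{\mathrm{c}} \cap E_{+1}(L) \supseteq \{ w + L w \mid w \in A^{(-)} \}$; equality then follows from dimension count via the bijectivity of $K$ and the lemma, or by running the same computation on $K^{-1} = - K$. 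The symmetric calculation starting from $A \cap E_{+1}(L)$ gives $A^{\mathrm{c}} \cap E_{-1}(L) = \{ w - L w \mid w \in A^{(+)} \}$. There is no real obstacle here — the only subtle point is bookkeeping the factor of $\mathrm{i}$ produced by $K|_{W}$, which is absorbed by the complex linearity of the spaces $A^{(\pm)}$.
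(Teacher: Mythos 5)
Your proof is correct and takes essentially the same route the paper has in mind when it calls the corollary ``an immediate consequence'' of the lemma: the four-fold orthogonal splitting comes from the $L$-invariance of $A$ and $A^{\mathrm{c}}$ (you justify this via $L^{\ast}=L$, which is fine), the two $A$-pieces are read off from the proof of the lemma, and the two $A^{\mathrm{c}}$-pieces are obtained by applying $K$ and absorbing the resulting factor of $\mathrm{i}$ by complex linearity of $A^{(\pm)}$ --- which is precisely the identity $w' + Lw' = -\mathrm{i}K(w' - Lw')$ that the paper already used inside the lemma's proof. Nothing to add.
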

\begin{remark}
The vector spaces $A^{(+)}$ and $A^{(-)}$ need not have the same dimension; in particular, either one of them may be the zero vector space.
\end{remark}

We now carry out a reduction based on the isomorphism $\Pi : \; A\cap E_{+1}(L) \to A^{(+)}$. For this we observe that the relations $J_l \, A = A^\mathrm{c}$ have the following refinement:
\begin{equation}\label{eq-mz:2.29}
    L\, J_l \big(A \cap E_{\pm 1}(L)\big) = J_l \big(A \cap E_{\pm 1}(L) \big) = A^\mathrm{c} \cap E_{\pm 1}(L) \quad (l=1, \ldots, s) ,
\end{equation}
due to the fact that $J_l$ commutes with $L = \mathrm{i} I K$. We recall that the operators $J_1, \ldots, J_s$ exchange the subspaces $W_\pm = E_{ \pm\mathrm{i}}(K)$. The operators $L\, J_1, \ldots, L\, J_s$ then preserve these subspaces and hence commute with the projector $\Pi$. By applying $\Pi$ to Eq.\ (\ref{eq-mz:2.29}) and using Corollary \ref{cor:2.1} it follows that
\begin{equation}
    L\, J_l \, A^{(\pm)} = A^{(\mp)} \quad (l = 1, \ldots, s) .
\end{equation}

In view of the above, we introduce the restricted operators
\begin{equation}\label{eq:4.15-mz}
    j_l := L\, J_l \big\vert_W \quad (l = 1, \ldots, s) .
\end{equation}
Note that the $j_l$ inherit from the $J_l$ the algebraic relations
\begin{equation}
    j_l j_m + j_m j_l = - 2 \delta_{l m} \mathrm{Id}_W \quad (l,m = 1, \ldots, s).
\end{equation}

\medskip\noindent\textbf{Proof of Proposition \ref{prop:2.1X}.} --- Starting with $A \in C_{s+2}(2n)$ we form $A \cap E_{+1}(L)$ and apply the projector $\Pi$ to obtain $a \equiv A^{(+)} \subset W$. By this process, the relations $J_l\, A = A^\mathrm{c}$ turn into the relations $j_l\, a = a^\mathrm{c}$ for $l = 1, \ldots, s$. Thus $a$ is a point of $C_s(n)$.

Now if $v, v^\prime \in a$, then $w = v + Lv$ and $w^\prime = v^\prime + L v^\prime$ lie in $A$, and we have
\begin{equation*}
    2 \{ v , v^\prime \}_W = \{ v , v^\prime \}_{W_+}  +
    \{ L v , L v^\prime \}_{W_-} = \{ w , w^\prime \}_{\widetilde{W}} ,
\end{equation*}
because $L = \mathrm{i} I K$ preserves the CAR bracket. Therefore, $A = A^\perp$ implies $\{a,a\}_W = 0$. By the same reasoning, we have $\{ a^\mathrm{c} , a^\mathrm{c} \}_W = 0$, since $A = A^\perp$ entails $A^\mathrm{c} = (A^\mathrm{c})^\perp$. Now the combination of $\{ a , a \}_W = 0$ with $\{ a^\mathrm{c} , a^\mathrm{c} \}_W = 0$ implies that $a$ is half-dimensional: $\dim a = \dim a^\mathrm{c} = \frac{1}{2} \dim W$.
Hence $a = a^\perp$. Thus the mapping $C_{s+2}(2n) \to C_s(n)$ by $A \mapsto a$ restricts to a mapping from $R_{s+1,1}(2n)$ to $R_s(n)$. This inverts the maps $f :\; C_s(n) \to C_{s+2}(2n)$ and $f^\prime :\; R_s(n) \to R_{s+1,1}(2n)$ and completes the proof of the proposition.

\subsection{$\mathbb{Z}_2$-equivariant Bott map}\label{sect:Z2-Bott}

We now turn to the construction of the $\mathbb{Z}_2$-equivariant Bott map, or diagonal map for short. Fixing any symmetry index $s \geq 0$, we are given a pair of classifying spaces $C_s(n)$ and $R_s(n)$. We then apply to them the $(1,1)$ periodicity theorem in the expanding direction. That is, starting from $s$ real generators $j_1, \ldots, j_s$ on $W$, we follow Section \ref{sect:1,1-red} to pass to an extended system of $s + 2$ generators $J_1, \ldots, J_s , I , K$ on $\mathbb{C}^2 \otimes W$.

The ensuing construction begins with the space
\begin{align*}
    C_s(2n) = \{ A \subset \mathbb{C}^2 \otimes W \mid J_1 A = \ldots = J_s A = A^\mathrm{c} \} .
\end{align*}
By imposing on it the two additional pseudo-symmetries, first $I$ and subsequently $K$, we get a sequence of inclusions $C_{s+2}(2n) \subset C_{s+1}(2n) \subset C_s(2n)$ where
\begin{equation}\label{eq:def-Cs+2}
    \begin{split}
    C_{s+1}(2n) &= \{ A \in C_s(2n) \mid I A = A^\mathrm{c} \} , \cr
    C_{s+2}(2n) &= \{ A \in C_{s+1}(2n) \mid K A = A^\mathrm{c} \} .
    \end{split}
\end{equation}
{}From Proposition \ref{prop:2.1X} we recall the bijection $C_s(n) \simeq C_{s+2}(2n)$.

As before, we denote the subspaces of fixed points of the Fermi involution $A \mapsto A^\perp$ by $R_j(2n) \subset C_j(2n)$ (for $j = s, \, s+1$). Now the last one of the $s+2$ Clifford algebra generators, namely $K$, is imaginary, i.e., it sign-reverses the CAR pairing. While this is of no relevance for the spaces above, it does matter for the subspace of fixed points of the Fermi involution in $C_{s+2}(2n)$. Recall that this subspace is denoted by
\begin{equation}\label{eq:def-Rs+1,1}
    R_{s+1,1}(2n) = \{ A \in C_{s+2}(2n) \mid A = A^\perp \} .
\end{equation}
By construction, we have a bijective correspondence $R_{s+1,1}(2n) \simeq R_s(n)$; cf.\ Prop.\ \ref{prop:2.1X}. In the sequel, to avoid cluttering our notation we do not introduce a special symbol for the bijections $C_s(n) \simeq C_{s+2}(2n)$ and $R_s(n) \simeq R_{s+1,1}(2n)$ but assume them to be understood, for the most part; we spell them out occasionally to minimize any risk of confusion.

Next, for any $\mathbb{C}$-linear operator $X$ on $\mathbb{C}^2 \otimes W$, let $X \mapsto X^T$ denote the operation of taking the transpose w.r.t.\ the CAR pairing, i.e.\ $\{ X^T w , w^\prime \} = \{ w , X w^\prime \}$ for all $w, w^\prime \in \mathbb{C}^2 \otimes W$.
\begin{lemma}\label{lem:tau-car}
If an automorphism $\tau_\mathrm{car}$ of $\mathrm{GL}(\mathbb{C}^2 \otimes W)$ is defined by $\tau_\mathrm{car}(g) = (g^{-1})^T$, then for any subvector space $A \subset \mathbb{C}^2 \otimes W$ one has the relation
\begin{equation}
    (g \cdot A)^\perp = \tau_\mathrm{car}(g) \cdot A^\perp .
\end{equation}
\end{lemma}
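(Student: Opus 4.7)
The plan is to unwind the definitions of $\perp$ and transpose, and observe that the claim drops out with no real work. The proof will be essentially a chain of equivalences, so I expect no real obstacle here; the content of the lemma is purely the compatibility of $\perp$ with the $\mathrm{GL}$-action mediated by the transpose anti-automorphism.

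First I would fix $g \in \mathrm{GL}(\mathbb{C}^2 \otimes W)$ and $w \in \mathbb{C}^2 \otimes W$, and chase the membership criterion for $(g \cdot A)^\perp$. By definition, $w \in (g \cdot A)^\perp$ precisely when $\{w, g a\} = 0$ for all $a \in A$. Applying the defining relation $\{g^T w, a\} = \{w, g a\}$ of the transpose, this is equivalent to $g^T w \in A^\perp$, i.e.\ $w \in (g^T)^{-1} A^\perp$. Since $(g^T)^{-1} = (g^{-1})^T = \tau_\mathrm{car}(g)$, I obtain $w \in \tau_\mathrm{car}(g) \cdot A^\perp$, as claimed.

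The only points worth flagging are two minor sanity checks that I would include for completeness. First, the transpose $g^T$ is well-defined as an element of $\mathrm{GL}(\mathbb{C}^2 \otimes W)$ because the CAR pairing $\{\,,\,\}$ is non-degenerate, and one has $(g^T)^{-1} = (g^{-1})^T$, so $\tau_\mathrm{car}$ is indeed a group automorphism. Second, the argument uses no structural properties of $A$ beyond its being a complex linear subspace, so the lemma applies equally well to arbitrary subspaces of $\mathbb{C}^2 \otimes W$ (in particular to the fibers $A_k$ that will appear in the sequel). The step that \emph{could} superficially look like it needs more care --- namely passing from ``$g^T w \in A^\perp$'' to ``$w \in (g^T)^{-1} A^\perp$'' --- is justified because $g^T$ is invertible, so $A^\perp = g^T \cdot (g^T)^{-1} A^\perp$ set-theoretically.
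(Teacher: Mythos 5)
Your proof is correct and follows essentially the same route as the paper's: the paper asserts $g^T \cdot (g\cdot A)^\perp = A^\perp$ directly from the definition of the transpose and then inverts, whereas you unwind this identity at the level of an element-wise membership criterion, but the underlying argument is identical. The two sanity checks you flag (non-degeneracy of $\{\,,\,\}$ making $g^T$ well-defined and invertible, and $\tau_\mathrm{car}$ being an automorphism) are indeed the only points that need any care, and your treatment of them is fine.
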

\begin{proof}
By definition, the vectors of $(g \cdot A)^\perp$ have zero CAR pairing with those of $g \cdot A$. It follows that $g^T \cdot (g \cdot A)^\perp = A^\perp$ and hence $(g \cdot A)^\perp = (g^T)^{-1} \cdot A^\perp = \tau_\mathrm{car}(g) \cdot A^\perp$.
\end{proof}

To prepare the next formula, let each complex hyperplane $A$ in $\mathbb{C}^2 \otimes W$ be associated with an anti-Hermitian operator \begin{equation}
    J(A) = \mathrm{i}(\Pi_A - \Pi_{A^\mathrm{c}}) ,
\end{equation}
where $\Pi_A$ and $\Pi_{A^\mathrm{c}}$ project on $A$ and its orthogonal complement $A^\mathrm{c}$, respectively. (Note that $J(A)^2 = - \mathrm{Id}$, and $\mathrm{i} J(A)$ corresponds to the flattened Hamiltonian of Remark \ref{remark:orthogonalunitary}.) Then, for $A \in C_{s+2}(2n)$ and $t \in [0,1]$ consider the one-parameter family of complex $2n$-planes
\begin{equation}\label{eq:Z2-Bottmap}
    \beta_t(A) = \mathrm{e}^{(t\,\pi/2) K J(A)} \cdot E_{+\mathrm{i}}(K) \,.
\end{equation}
Recall that $E_{+\mathrm{i}}(K) \subset \mathbb{C}^2 \otimes W$ denotes the eigenspace of $K$ with eigenvalue $+ \mathrm{i}$. Since the Clifford generators $J_1, \ldots, J_s$, and $I$ anti-commute with $K$, they swap the two eigenspaces $E_{+\mathrm{i}}(K)$ and $E_{-\mathrm{i}}(K) = E_{+\mathrm{i}}(K)^\mathrm{c}$. This means that the two $2n$-planes $E_{\pm\mathrm{i}}(K)$ lie in $C_{s+1}(2n)$.
\begin{lemma}\label{lem:4.1}
The assignment $[0,1] \ni t \mapsto \beta_t(A)$ for $A \in C_{s+2}(2n)$ is a curve in $C_{s+1}(2n)$ with initial point $\beta_0(A) = E_{+\mathrm{i}} (K)$, final point $\beta_1(A) = E_{-\mathrm{i}}(K)$, and midpoint $\beta_{1/2}(A) = A$. It is $\mathbb{Z}_2$-equivariant in the sense that $\beta_t(A)^\perp = \beta_{1-t}(A^\perp)$.
\end{lemma}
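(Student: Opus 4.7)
The overall strategy is to exploit the fact that under the hypotheses on $A$, the operator $M := K J(A)$ squares to $-\mathrm{Id}$, which makes the exponential elementary and lets every clause of the lemma be checked by direct manipulation.

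First, since $A \in C_{s+2}(2n)$ satisfies $K A = A^\mathrm{c}$, the operator $K$ swaps the eigenspaces of $J(A)$ and hence anticommutes with $J(A)$. Combined with $K^2 = J(A)^2 = -\mathrm{Id}$, this gives $M^2 = -\mathrm{Id}$, so $g_t(A) := \mathrm{e}^{(t\pi/2) M} = \cos(t\pi/2)\,\mathrm{Id} + \sin(t\pi/2)\, M$. The product $M$ itself anticommutes with $K$, so it sends $E_{+\mathrm{i}}(K)$ bijectively onto $E_{-\mathrm{i}}(K)$; the endpoint values $\beta_0(A) = E_{+\mathrm{i}}(K)$ and $\beta_1(A) = M \cdot E_{+\mathrm{i}}(K) = E_{-\mathrm{i}}(K)$ follow immediately. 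For the midpoint I would decompose $w \in E_{+\mathrm{i}}(K)$ with respect to $\mathbb{C}^2 \otimes W = A \oplus A^\mathrm{c}$ as $w = a + b$; the condition $K w = \mathrm{i} w$ together with $K A = A^\mathrm{c}$ forces $K b = \mathrm{i} a$, and a short calculation gives $g_{1/2}(A) w = \sqrt{2}\, a \in A$, so $\beta_{1/2}(A) \subseteq A$, with equality by dimension.

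To show that $\beta_t(A) \in C_{s+1}(2n)$, I must verify $J_l \beta_t(A) = \beta_t(A)^\mathrm{c}$ for $l = 1, \ldots, s+1$. Each such $J_l$ is a real Clifford generator; by the Clifford relations it anticommutes with $K$, and since $J_l A = A^\mathrm{c}$ it also anticommutes with $J(A)$. Consequently $J_l$ commutes with $M$ and hence with $g_t(A)$. Because $M$ is anti-Hermitian, $g_t(A)$ is unitary and so maps Hermitian complements to Hermitian complements; combined with $J_l E_{+\mathrm{i}}(K) = E_{-\mathrm{i}}(K) = E_{+\mathrm{i}}(K)^\mathrm{c}$ this yields $J_l \beta_t(A) = g_t(A) \cdot E_{-\mathrm{i}}(K) = \beta_t(A)^\mathrm{c}$.

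The genuinely subtle clause, and the one I expect to be the main obstacle, is the equivariance $\beta_t(A)^\perp = \beta_{1-t}(A^\perp)$, because $K$ \emph{sign-reverses} the CAR pairing and so interacts with $\perp$ differently from the orthogonal generators. By Lemma \ref{lem:tau-car}, $\beta_t(A)^\perp = (g_t(A)^{-1})^T \cdot E_{+\mathrm{i}}(K)^\perp$. The identification $E_{+\mathrm{i}}(K)^\perp = E_{-\mathrm{i}}(K)$ follows from $\{K w, K w'\} = -\{w, w'\}$: the eigenspaces $E_{\pm\mathrm{i}}(K)$ are mutually CAR-orthogonal while the CAR pairing remains non-degenerate on each. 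For the transpose I would use $K^T = K$ (from $K^T K = -\mathrm{Id}$ and $K^{-1} = -K$) and establish $J(A)^T = -J(A^\perp)$ by identifying the CAR-transpose of the Hermitian projector $\Pi_L$ as the oblique projector onto $\gamma L$ along $L^\perp$ and then using $(\gamma A)^\mathrm{c} = A^\perp$ to recognise it as the Hermitian projector onto $\gamma A = (A^\perp)^\mathrm{c}$. Applying Lemma \ref{lem:tau-car} to $K$ itself also gives $K A^\perp = (A^\perp)^\mathrm{c}$, hence $\{K, J(A^\perp)\} = 0$. Assembling these facts, $(K J(A))^T = -J(A^\perp) K = K J(A^\perp)$, so $\beta_t(A)^\perp = \mathrm{e}^{-(t\pi/2) K J(A^\perp)} \cdot E_{-\mathrm{i}}(K)$, and the claimed equivariance reduces to the single identity $\mathrm{e}^{(\pi/2) K J(A^\perp)} \cdot E_{+\mathrm{i}}(K) = E_{-\mathrm{i}}(K)$, which holds because $\mathrm{e}^{(\pi/2) K J(A^\perp)} = K J(A^\perp)$ is a bijection from $E_{+\mathrm{i}}(K)$ onto $E_{-\mathrm{i}}(K)$.
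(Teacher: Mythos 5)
Your proof is correct and follows essentially the same route as the paper: a closed-form for $\mathrm{e}^{(t\pi/2)KJ(A)}$ from $(KJ(A))^2 = -\mathrm{Id}$, the three pointwise checks, and the equivariance via Lemma \ref{lem:tau-car} together with $\tau_\mathrm{car}(K)=-K$, $\tau_\mathrm{car}(J(A))=J(A^\perp)$, and $E_{+\mathrm{i}}(K)^\perp = E_{-\mathrm{i}}(K)$. You spell out several steps the paper compresses — the explicit derivation of $J(A)^T = -J(A^\perp)$ via projectors, the midpoint computation by decomposing $w=a+b$ against $A\oplus A^\mathrm{c}$, and the final reduction through the extra factor $\mathrm{e}^{(\pi/2)KJ(A^\perp)}=KJ(A^\perp)$ — but these are elaborations, not a different strategy.
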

\begin{proof}
First of all, note that $K A = A^\mathrm{c}$ implies $K J(A) = - J(A) K$. Now because the unitary operator $\mathrm{e}^{(t\,\pi/2) K J(A)}$ commutes with each of the generators $J_1, \ldots, J_s, I$, it is immediate from the definition (\ref{eq:def-Cs+2}) of $C_{s+2}(2n)$ that $\beta_t(A)$ satisfies the pseudo-symmetry relations
\begin{equation*}
    J_1 \beta_t(A) = \ldots = J_s \beta_t(A) = I \beta_t(A) = \beta_t(A)^\mathrm{c} \,.
\end{equation*}
Thus $\beta_t(A)$ lies in $C_{s+1}(2n)$. To see that the curve ends at $E_{-\mathrm{i}}(K)$, we recall that both $K$ and $J(A)$ square to minus the identity, and they anti-commute. This gives $(K J(A))^2 = - \mathrm{Id}$ and
\begin{equation*}
    \beta_1(A) = \mathrm{e}^{(\pi/2) K J(A)} \cdot E_{+\mathrm{i}}(K) = \sin(\pi/2) K J(A) \cdot E_{+\mathrm{i}}(K) = J(A) \cdot E_{+\mathrm{i}}(K) = E_{-\mathrm{i}}(K) ,
\end{equation*}
since $J(A)$ swaps the eigenspaces of $K$.

To verify the midpoint property of $\beta_{1/2}(A) = A$, we compute
\begin{equation*}
    \mathrm{e}^{(\pi/4) K J(A)} = \cos(\pi/4)\, \mathrm{Id}_W + \sin(\pi/4)\, K J(A) = (\mathrm{Id}_W + K J(A)) / \sqrt{2} \,.
\end{equation*}
Applying this (with the factor of $1/\sqrt{2}$ omitted) to any $w \in E_{+\mathrm{i}}(K)$ we get
\begin{equation*}
    (\mathrm{Id}_W + K J(A)) w = w - \mathrm{i} J(A) w  = - \mathrm{i} J(A) (w - \mathrm{i} J(A) w) \in E_{+\mathrm{i}}(J(A)) = A .
\end{equation*}
The linear transformation $\mathrm{e}^{(\pi/4) K J(A)} : \; E_{+\mathrm{i}}(K) \to A$, $w \mapsto w - \mathrm{i} J(A) w$, is an isomorphism because $J(A) \cdot E_{+ \mathrm{i}}(K) = E_{-\mathrm{i}}(K)$. Hence
\begin{equation*}
    \beta_{1/2}(A) = \mathrm{e}^{(\pi/4) K J(A)} \cdot E_{+\mathrm{i}}(K) = A .
\end{equation*}

Turning to the last stated property, we note that the automorphism $\tau_\mathrm{car}$ of Lemma \ref{lem:tau-car} sends $J(A)$ to $J(A^\perp)$. Since $K$ is imaginary, we have $\tau_\mathrm{car}(K) = - K$ and $E_{+\mathrm{i}}(K)^\perp = E_{-\mathrm{i}}(K)$. Therefore,
\begin{equation*}
    \beta_t(A)^\perp = \tau_\mathrm{car} \big( \mathrm{e}^{(t\,\pi/2) K J(A)} \big) \cdot E_{+\mathrm{i}}(K)^\perp = \mathrm{e}^{(- t\,\pi/2) K J(A^\perp)} \cdot E_{-\mathrm{i}}(K) = \beta_{1-t}(A^\perp) .
\end{equation*}
Thus $t \mapsto \beta_t(A)$ is $\mathbb{Z}_2$-equivariant in the stated sense.
\end{proof}

To summarize, our map $t \mapsto \beta_t(A) \in C_{s+1}(2n)$ is a $\mathbb{Z}_2$-equivariant curve (actually, a minimal geodesic in the natural Riemannian geometry of $C_{s+1}(2n)$) which joins the invariable point $E_{+\mathrm{i}}(K)$ with its antipode $E_{-\mathrm{i}}(K)$ by passing through the variable point $A$ at $t = 1/2$.

Let the space of all paths in $C_{s+1}(2n)$ from $E_{+\mathrm{i}}(K)$ to $E_{-\mathrm{i}}(K)$ be denoted by $\Omega_K C_{s+1}(2n)$. Then as an immediate consequence of Lemma \ref{lem:4.1} we have:
\begin{corollary}\label{cor:4.1}
Equation (\ref{eq:Z2-Bottmap}) defines a mapping $\beta$,
\begin{equation}\label{eq:def-beta}
     \beta :\; C_s(n) \simeq C_{s+2}(2n) \to \Omega_K C_{s+1}(2n), \quad f(a) = A \mapsto \{ t \mapsto \beta_t(A) \} ,
\end{equation}
from the classifying space $C_s(n)$ to the path space $\Omega_K C_{s+1}(2n)$. By its property of $\mathbb{Z}_2 $-equi\-variance, $\beta$ induces a mapping between the sets of $\mathbb{Z}_2$-fixed points:
\begin{equation}\label{eq:def-betap}
    \beta^\prime :\; C_s(n)^{\mathbb{Z}_2} \equiv R_s(n) \simeq R_{s+1,1}(2n) \to \left( \Omega_K C_{s+1}(2n) \right)^{\mathbb{Z}_2}\,.
\end{equation}
\end{corollary}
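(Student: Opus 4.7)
The proposal is that this corollary should be essentially a packaging statement, collecting what Lemma \ref{lem:4.1} has already established and promoting it to an assertion about (fixed-point) path spaces. So I do not expect any genuinely new difficulty; the work is to verify that the targets are correctly defined and that the $\mathbb{Z}_2$-action on the path space is the right one for the equivariance statement proved in the lemma to translate into a restriction to fixed points.

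First I would check that (\ref{eq:Z2-Bottmap}) really defines a continuous map into the path space $\Omega_K C_{s+1}(2n)$. Lemma \ref{lem:4.1} supplies, for every fixed $A \in C_{s+2}(2n)$, a curve $t \mapsto \beta_t(A)$ landing in $C_{s+1}(2n)$, with the prescribed endpoints $\beta_0(A) = E_{+\mathrm{i}}(K)$ and $\beta_1(A) = E_{-\mathrm{i}}(K)$, so the image lies in the prescribed path space. Continuity of $\beta$ as a map $C_{s+2}(2n) \to \Omega_K C_{s+1}(2n)$ reduces to continuity of $A \mapsto J(A) = \mathrm{i}(\Pi_A - \Pi_{A^\mathrm{c}})$ and of the exponential, both of which are standard on the Grassmannian. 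Composing with the bijection $C_s(n) \simeq C_{s+2}(2n)$ of Proposition \ref{prop:2.1X} then yields the asserted map (\ref{eq:def-beta}).

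Next I would specify the $\mathbb{Z}_2$-action on the target. On $C_{s+1}(2n)$ we already have the involution $\tau_{s+1} : A \mapsto A^\perp$, and on the path space $\Omega_K C_{s+1}(2n)$ the natural combination with path reversal is $\sigma : \omega \mapsto \bigl(t \mapsto \omega(1-t)^\perp\bigr)$. This is well defined on $\Omega_K C_{s+1}(2n)$ precisely because the endpoints are swapped by $\perp$: since $K$ is imaginary, $\tau_\mathrm{car}(K) = -K$, and hence $E_{+\mathrm{i}}(K)^\perp = E_{-\mathrm{i}}(K)$, so $\sigma$ sends a path from $E_{+\mathrm{i}}(K)$ to $E_{-\mathrm{i}}(K)$ to another such path. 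It is an involution because $\perp$ is. With this definition, the equivariance identity $\beta_t(A)^\perp = \beta_{1-t}(A^\perp)$ furnished by Lemma \ref{lem:4.1} reads $\sigma(\beta(A)) = \beta(\tau_{s+2}(A))$, so $\beta$ intertwines $\tau_{s+2}$ with $\sigma$.

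Finally, restricting an equivariant map to the fixed-point subspaces is automatic: $\beta$ sends $\mathrm{Fix}(\tau_{s+2}) = R_{s+1,1}(2n)$ into $\mathrm{Fix}(\sigma) = \bigl(\Omega_K C_{s+1}(2n)\bigr)^{\mathbb{Z}_2}$. Using the bijection $R_s(n) \simeq R_{s+1,1}(2n)$ from Proposition \ref{prop:2.1X} gives the map $\beta'$ of (\ref{eq:def-betap}). No step here constitutes a real obstacle; the only item that requires a moment of care is the bookkeeping around $\tau_\mathrm{car}(K) = -K$, which both identifies the correct involution on the path space and explains why the $\mathbb{Z}_2$-fixed paths are those satisfying $\omega(1-t) = \omega(t)^\perp$ rather than $\omega(t) = \omega(t)^\perp$ pointwise.
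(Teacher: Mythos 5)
Your proposal is correct and matches the paper's approach: the paper presents the corollary as an ``immediate consequence'' of Lemma \ref{lem:4.1} and, in the remark that follows, specifies exactly the $\mathbb{Z}_2$-action on $\Omega_K C_{s+1}(2n)$ (path reversal combined with $\perp$) that you identify. You have simply made explicit the bookkeeping the authors leave tacit --- in particular that $E_{+\mathrm{i}}(K)^\perp = E_{-\mathrm{i}}(K)$ is what makes the induced action well-defined on the based path space, and that the identity $\beta_t(A)^\perp = \beta_{1-t}(A^\perp)$ from Lemma \ref{lem:4.1} is precisely the intertwining property needed to restrict to fixed points.
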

\begin{remark}
The non-trivial element of $\mathbb{Z}_2$ acts on $C_{s+1}(2n)$ by $A \mapsto A^\perp$ and on the interval $[0,1]$ by $t \mapsto 1-t$. There is an induced action of $\mathbb{Z}_2$ on the space of paths $\Omega_K C_{s+1}(2n)$. The symbol $\left( \Omega_K C_{s+1}(2n) \right)^{ \mathbb{Z}_2}$ denotes the subspace of paths that are fixed by this $\mathbb{Z}_2$-action.
\end{remark}
\begin{remark}
A mapping of this kind appeared already in the work of Teo \& Kane \cite{TeoKane}.
\end{remark}

Before continuing with the general development, let us give two examples illustrating $\beta$.

\subsection{Example 1: from (\emph{d},\,\emph{s}) = (0,0) to (1,1)}
\label{sect:4.3}

We start from data of class $D$ with zero-dimensional momentum space $M=\mathrm{S}^0$ consisting of two points, both of which are fixed by $\tau$. We now apply $\beta$ to manufacture a superconducting ground state with time-reversal invariance (class $\DIII$) in one dimension. Taking the simple case of $W = \mathbb{C}^2$ (or $n = 1$), we have a real classifying space consisting of just two points,
\begin{equation*}
    R_0(1) = \{ \mathbb{C} \cdot c \, , \mathbb{C} \cdot c^\dagger \} ,
\end{equation*}
which correspond to the empty and the fully occupied state, $|0\rangle$ and $| 1 \rangle$, respectively.

The procedure of doubling by $(1,1)$ periodicity here amounts to forming the tensor product with the two-dimensional spinor space, $(\mathbb{C}^2 )_\mathrm{spin}\,$. As input $A \in R_{1,1}(2) \simeq R_0(1)$ we take the complex line of the state with both spin states occupied:
\begin{equation*}
    A = \mathrm{span}_\mathbb{C} \{ c_\uparrow^\dagger \, , c_\downarrow^\dagger \} .
\end{equation*}
The operator $I$ is to be identified with the first pseudo-symmetry $J_1$ of the Kitaev sequence,
\begin{align*}
    I \equiv J_1 = \gamma\, T = (\sigma_1)_\mathrm{BdG} \otimes (\mathrm{i}\sigma_2)_\mathrm{spin} \,,
\end{align*}
where the left tensor factor (denoted by ``BdG'' for Bogoliubov-deGennes) acts in the two-dimensional quasi-spin space with basis $c$ and $c^\dagger$. The simplest choice of imaginary generator $K$ is
\begin{align*}
    K = \mathrm{i} (\sigma_1)_\mathrm{BdG} \otimes (\sigma_1)_\mathrm{spin} \,.
\end{align*}
We then apply the one-parameter group of $\beta$ to produce an $\IQPV$ of class $s = 1$. By using $\beta_{1/2}(A) = A$ and switching from the path parameter $t \in [0,1]$ to the momentum parameter $k = \pi (t - 1/2)$, we write the fibers $A_{k(t)} = \mathrm{e}^{(t\, \pi/2) K J(A)} \cdot E_{+\mathrm{i}}(K)$ as
\begin{equation*}
    A_k = \mathrm{e}^{(k/2) K J(A)} \cdot A = \mathrm{span}_\mathbb{C} \left\{ c_{-k,\sigma}^\dagger \cos(k/2) - c_{k,-\sigma} \sin(k/2) \right\}_{\sigma = \uparrow, \, \downarrow} \,.
\end{equation*}
(For a more informed perspective on this construction, please consult Remark \ref{rem:5.1} below.) To the physics reader this may look more familiar when written as a BCS-type ground state:
\begin{equation*}
    \vert \text{g.s.} \rangle = \mathrm{e}^{\; \sum_k \cot(k/2) P_k} \vert \mathrm{vac} \rangle \,, \quad P_k = c_{k,\uparrow}^\dagger c_{-k,\downarrow}^\dagger \,.
\end{equation*}
If the imaginary generator is chosen as $K = K(\alpha) = \mathrm{i} (\sigma_1)_\mathrm{BdG} \otimes \left( \sigma_1 \cos\alpha + \sigma_3 \sin\alpha \right)_\mathrm{spin}\,$, the Cooper pair operator $P_k$ takes the more general form
\begin{equation*}
    P_k = c_{k,\uparrow}^\dagger c_{-k,\downarrow}^\dagger \, \cos\alpha + \big( c_{k,\uparrow}^\dagger c_{-k,\uparrow}^\dagger -
    c_{k,\downarrow}^\dagger c_{-k,\downarrow}^\dagger \big) \, \sin\alpha \,,
\end{equation*}
which clearly displays the spin-triplet pairing of the superconductor at hand. The physical system is in a symmetry-protected topological phase, since the winding in its ground state cannot be undone without breaking the time-reversal invariance.

\subsection{Example 2: from (\emph{d},\,\emph{s}) = (1,1) to (2,2)}

To give a second example, we start from the outcome of the previous one and progress to a two-dimensional band insulator with conserved charge in class $\AII$. This time, the effect of $(1,1)$ doubling for the already spinful system is to introduce two bands, which we label by ${\rm p}$ and ${\rm h}$. To implement charge conservation directly and avoid working through all the details of $(1,1)$ doubling, we first perform a change of basis (by a particle-hole transformation) on our class-$\DIII$ superconductor to turn it into a particle-number conserving reference $\IQPV$ of class $s = 1$:
\begin{align*}
    A_{k_1} &= \mathrm{span}_\mathbb{C} \left\{ a_{k_1,\uparrow,+} \, , \, a_{k_1,\,\downarrow\,,-} \, , \, b_{-k_1, \,\downarrow\,,-}^\dagger \, , \, b_{- k_1,\uparrow,+}^\dagger \right\} , \cr
    a_{k_1,\,\sigma,\,\varepsilon} &=  c_{k_1, \,\sigma, \, {\rm p}} \, \cos(k_1 / 2) + \mathrm{i} \varepsilon \, c_{k_1, -\sigma ,\, {\rm h}} \, \sin(k_1 / 2) , \cr b_{k_1,\,\sigma,\,\varepsilon} &=  c_{k_1, \,\sigma, \,{\rm h}} \, \cos(k_1 / 2) - \mathrm{i} \varepsilon \, c_{k_1, -\sigma, \,{\rm p} } \, \sin(k_1 / 2) .
\end{align*}
We see that the change of basis has turned each quasi-particle operator into either an annihilation operator $a_{k_1, \bullet}$ or a creation operator $b_{-k_1, \bullet}^\dagger$. We stress that $\{ A_{k_1} \}$ is still the ground state in disguise of our one-dimensional class-$\DIII$ superconductor -- it has only been jacked up by $(1,1)$ periodicity. The pseudo-symmetry operators now are
\begin{align*}
    J_1 &= \gamma\, T = (\sigma_1)_{\rm BdG} \otimes (\mathrm{i} \sigma_2)_\mathrm{spin} \otimes \mathrm{Id}_{\rm ph}\,, \cr
    I &= J_2 = \mathrm{i} Q J_1 = (\sigma_2)_{\rm BdG} \otimes (\mathrm{i}\sigma_2)_\mathrm{spin} \otimes \mathrm{Id}_{\rm ph}\,, \cr K &= \mathrm{i} \, \mathrm{Id}_{\rm BdG} \otimes (\sigma_1)_\mathrm{spin} \otimes (\sigma_1)_{\rm ph}\,.
\end{align*}
In this representation, the operator $J(A_k)$ is expressed by
\begin{equation*}
    J(A_{k_1}) = \mathrm{i} (\sigma_3)_{\rm BdG} \otimes \left( \mathrm{Id}_\mathrm{spin} \otimes (\sigma_3)_\mathrm{ph} \cos k_1 + (\sigma_2)_\mathrm{spin} \otimes (\sigma_1)_\mathrm{ph} \sin k_1 \right) .
\end{equation*}

Now we again apply the one-parameter group of $\beta$, still with parameter $k_0 = \pi (t - 1/2)$ for $t \in [0,1]$. In this way we get a two-dimensional $\IQPV$ of class $s = 2$:
\begin{align*}
    A_k &= \mathrm{e}^{(k_0 / 2) K J(A_{k_1})} \cdot A_{k_1} =
    \mathrm{span}_\mathbb{C} \left\{ \widetilde{a}_{k,\uparrow,+} \, , \, \widetilde{a}_{k,\,\downarrow\,,-} \, , \, \widetilde{b}_{-k, \,\downarrow\,,-}^\dagger \, , \, \widetilde{b}_{- k,\uparrow ,+}^\dagger \right\} , \cr
    \widetilde{a}_{k,\,\sigma,\,\varepsilon} &= \left( c_{k, \,\sigma, \, {\rm p}} \, \cos(k_1 / 2) + \mathrm{i} \varepsilon \, c_{k,-\sigma,\,{\rm h}} \, \sin(k_1 / 2) \right) \cos(k_0/2) \cr &- \left( c_{k,-\sigma,\,{\rm h}} \, \cos(k_1 /2) + \mathrm{i} \varepsilon \, c_{k,\,\sigma,\,{\rm p}} \, \sin(k_1 / 2) \right) \sin(k_0/2) , \cr
    \widetilde{b}_{k,\sigma,\,\varepsilon} &=
    \left( c_{k, \,\sigma ,\, {\rm h}} \, \cos(k_1 / 2) - \mathrm{i} \varepsilon \, c_{k, -\sigma ,\, {\rm p}} \, \sin(k_1 / 2) \right) \cos(k_0/2) \cr &- \left( c_{k,-\sigma ,\, {\rm p}} \, \cos(k_1 / 2) - \mathrm{i} \varepsilon \, c_{k,\sigma ,\, {\rm h}} \, \sin(k_1 / 2) \right) \sin(k_0/2) ,
\end{align*}
where $k = (k_0 , k_1)$. Notice that there is a redundancy here: the four operators spanning $A_k$ are not independent; rather, the subspace of conduction bands ($\widetilde{a}$) is already determined by the subspace of valence bands ($\widetilde{b}$) and vice versa; cf.\ the discussion in Section \ref{sect:2.3.1}. This is the price to be paid for our comprehensive formalism handling all classes at once.

At $k_0 = \pm \pi/2$ -- the two poles of a two-sphere with polar coordinate $k_0 + \pi/2$ --, the $k_1$-dependence goes away by construction. These two points are easily seen to be the only points where the Kane-Mel\'e Pfaffian vanishes, which implies that our $\IQPV$ of class $s = 2$ has non-trivial Kane-Mel\'e invariant \cite{KaneMele} and lies in the quantum spin Hall phase.

Further examples illustrating the diagonal map $\beta$ can be found in \cite{RK-MZ-Nobel}.

\section{Homotopy theory for the diagonal map}\label{sect:5}
\setcounter{equation}{0}

In this section, we collect and develop a number of homotopy-theoretic results related to the diagonal map $\beta$. This is done en route to our goal of proving that $\beta$ induces the desired bijection in homotopy: a one-to-one mapping, $\beta_\ast^{\mathbb{Z}_2}$, between stable homotopy classes of base-point preserving and $\mathbb{Z}_2$-equivariant maps $f : \; M \to C_s(n)$ and $F : \; \tilde{S} M \to C_{s+1}(2n)$.

We have introduced $\beta$ somewhat informally, but now state precisely how it is to be used. Let $f : \; M \to C_s(n) \simeq C_{s+2}(2n)$, $k \mapsto A_k\,$, be the classifying map of an $\IQPV$ of class $s$. By composing $f$ with $\beta$ we get a new map
\begin{equation*}
    \beta \circ f : \quad M \to \Omega_K C_{s+1}(2n) , \quad k \mapsto \{ t \mapsto \beta_t(A_k) \} .
\end{equation*}
Recall that $\Omega_K C_{s+1}(2n)$ denotes the space of paths in $C_{s+1}(2n)$ from $E_{+\mathrm{i}}(K)$ to $E_{-\mathrm{i}}(K)$.

Next, we choose to view the path coordinate $t$ as a coordinate for the second factor in the direct product $M \times [0,1]$. We then re-interpret $\beta \circ f$ as a mapping
\begin{equation}\label{eq:5.12}
    F : \; M \times [0,1] \to C_{s+1}(2n) , \quad (k,t) \mapsto \beta_t(A_k) .
\end{equation}
Since $\beta_t$ degenerates at the two points $t = 0$ and $t = 1$, the mapping $F$ descends to a map (still denoted by $F$) from the suspension $\tilde{S} M = M \times [0,1] / (M \times \{0\} \cup M \times \{1\})$ into $C_{s+1}(2n)$:
\begin{equation}
    F : \; \tilde{S} M \to C_{s+1}(2n) .
\end{equation}
We let the non-trivial element of $\mathbb{Z}_2$ act on $\tilde{S} M$ by
\begin{equation}
    (k,t) \mapsto (\tau(k),1-t) \equiv \tau_{\tilde{S}M} (k,t) .
\end{equation}
Thus, $\tilde{S} M$ is the ``momentum-type'' suspension of $M$. (The symbol $S M$ is reserved for the position-type suspension invoked later on.) Then, since $f :\; M \to C_s(n) \simeq C_{s+2}(2n)$ is $\mathbb{Z}_2$-equivariant, so is the new map: $\tau_{s+1} \circ F = F \circ \tau_{\tilde{S}M}\,$. Indeed,
\begin{equation*}
    \tau_{s+1}(F(k,t)) = \beta_t(A_k)^\perp = \beta_{1-t} (A_k^\perp) = \beta_{1-t}(A_{\tau(k)}) = (F \circ \tau_{\tilde{S}M})(k,t).
\end{equation*}
Thus, starting from an $\IQPV$ of class $s$ over the $d$-dimensional space $M$, the composition with $\beta$ produces an $\IQPV$ of class $s+1$ over the $(d+1)$-dimensional space $\tilde{S} M$.

In the sequel, we restrict all discussion to the case of topological spaces with base points, say $(X,x_\ast)$ and $(Y,y_\ast)$, and to base-point preserving maps $f : \; X \to Y$, $f(x_\ast) = y_\ast$. Borrowing the language from physics, this means that there is (at least) one distinguished momentum $k_\ast \in M$ whose fiber $A_{k_\ast}$ is not free to vary but is kept fixed: $A_{k_\ast} \equiv A_\ast\,$. This condition is physically well motivated in many cases. For example, for a superconductor one takes $k_\ast$ to be a momentum far outside the Fermi surface of the underlying metal, and $A_\ast = U$ is then the ``vacuum'' space spanned by the bare annihilation operators.

We adopt the convention of denoting by $[X,Y]_\ast$ the set of homotopy classes of base-point preserving maps $f$ from a topological space $(X,x_\ast)$ to another topological space $(Y,y_\ast)$. If $X$ and $Y$ are $G$-spaces (with base points that are fixed by $G$), the symbol $[X,Y]_\ast^G$ denotes the set of homotopy classes of $G$-equivariant and base-point preserving maps $f : \; X \to Y$.

In our concrete setting, we choose a base point $k_\ast = \tau(k_\ast)$ for $M$ and the corresponding $\tau_{\tilde{S}M}$-fixed point $(k_\ast, 1/2)$ as the base point of $\tilde{S} M$. Our classifying maps $f : \, M \to C_s(n) \simeq C_{s+2}(2n)$ are required to assign to $k_\ast$ a fixed fiber $f(k_\ast) = A_\ast = A_\ast^\perp \in R_{s+1,1}(2n) \simeq R_s(n)$.
\begin{proposition}\label{prop:5.1}
The mapping $\beta$ that sends $f : \; M \to C_s(n)$ to $F : \; \tilde{S} M \to C_{s+1}(2n)$ by Eq.\ (\ref{eq:5.12}) induces a mapping $\beta_\ast : \; [M , C_s (n)]_\ast \to [\tilde{S} M , C_{s+1}(2n)]_\ast$ between homotopy classes. The latter descends to a map
\begin{equation}\label{eq:prop5.1}
    \beta_\ast^{\mathbb{Z}_2} : \; [M , C_s (n)]_\ast^{\mathbb{Z}_2} \to [\tilde{S} M , C_{s+1}(2n)]_\ast^{\mathbb{Z}_2}
\end{equation}
between homotopy classes of $\mathbb{Z}_2$-equivariant maps.
\end{proposition}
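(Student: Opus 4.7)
My plan is to verify in sequence several elementary properties that together establish the proposition; all of them follow readily from Lemma~\ref{lem:4.1} and Corollary~\ref{cor:4.1}. First I would show that the formula $F(k,t) = \beta_t(f(k))$ defines a continuous map $M \times [0,1] \to C_{s+1}(2n)$. Continuity is immediate from the joint continuity of $(t,A) \mapsto \mathrm{e}^{(t\pi/2) K J(A)} \cdot E_{+\mathrm{i}}(K)$ together with the continuity of $f$. The boundary values $\beta_0(A) = E_{+\mathrm{i}}(K)$ and $\beta_1(A) = E_{-\mathrm{i}}(K)$ are independent of $A$, so $F$ collapses $M \times \{0\}$ and $M \times \{1\}$ to points and therefore factors through the quotient $\tilde{S}M$. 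Base-point preservation then follows from the midpoint identity $\beta_{1/2}(A_\ast) = A_\ast$: with the base point of $\tilde{S}M$ chosen as $(k_\ast, 1/2)$ and the target base point taken to be $A_\ast$, one has $F(k_\ast, 1/2) = \beta_{1/2}(A_\ast) = A_\ast$.

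Next I would pass to homotopies. Given a base-point preserving homotopy $h : M \times [0,1] \to C_s(n)$ from $f_0$ to $f_1$, the natural candidate for the output homotopy is $H((k,t), u) = \beta_t(h(k,u))$. The verifications of the previous paragraph apply uniformly in $u \in [0,1]$, producing a continuous and base-point preserving homotopy $\tilde{S}M \times [0,1] \to C_{s+1}(2n)$ between $F_0$ and $F_1$. This gives a well-defined map $\beta_\ast : [M, C_s(n)]_\ast \to [\tilde{S}M, C_{s+1}(2n)]_\ast$.

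To restrict to the $\mathbb{Z}_2$-equivariant setting I would combine the hypothesis $\tau_s \circ f = f \circ \tau$ with the equivariance identity of Lemma~\ref{lem:4.1}, namely $\beta_t(A)^\perp = \beta_{1-t}(A^\perp)$, to compute
\begin{equation*}
\tau_{s+1}(F(k,t)) = \beta_t(f(k))^\perp = \beta_{1-t}(f(k)^\perp) = \beta_{1-t}(f(\tau(k))) = F(\tau_{\tilde{S}M}(k,t)).
\end{equation*}
The same calculation applied to $H$ shows that an equivariant homotopy of inputs produces an equivariant homotopy of outputs, yielding $\beta_\ast^{\mathbb{Z}_2}$ as claimed.

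There is no deep obstacle here; the proposition is essentially a naturality statement for the already-constructed $\beta$. The only ingredient that is not pure bookkeeping is the equivariance identity of Lemma~\ref{lem:4.1}, which relies on $K$ being \emph{imaginary} so that $E_{+\mathrm{i}}(K)^\perp = E_{-\mathrm{i}}(K)$ --- and it is precisely this feature that forces the $\mathbb{Z}_2$-action on $\tilde{S}M$ to reverse the $t$-coordinate, making the suspension of momentum type rather than position type.
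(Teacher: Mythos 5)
Your proof is correct and follows essentially the same route as the paper's: factor $F$ through the suspension because $\beta_0, \beta_1$ are constant, preserve the base point via $\beta_{1/2}(A_\ast) = A_\ast$, compose homotopies through $\beta$, and obtain equivariance from the identity $\beta_t(A)^\perp = \beta_{1-t}(A^\perp)$ of Lemma~\ref{lem:4.1}. The paper carries out the equivariance computation in the text immediately preceding the proposition and keeps the proof itself terse, but the content is the same.
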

\begin{proof}
If $f_u$ for $u \in [0,1]$ is a homotopy connecting $f_0$ with $f_1$, then by composing it with our continuous map $\beta$ we get a homotopy $F_u = \beta \circ f_u$ connecting $F_0$ with $F_1$. Thus $\beta$ induces a well-defined map $\beta_\ast$ between homotopy classes.

If $f(k_\ast) = A_\ast \in R_{s+1,1}(2n)$ then $F(k_\ast,1/2) = A_\ast \in R_{s+1}(2n)$ since $\beta_{t=1/2}$ is the identity map. Thus $F$ maps the base point $(k_\ast,1/2)$ of $\tilde{S} M$ to the base point $A_\ast$ of $R_{s+1}(2n) \subset C_{s+1}(2n)$. Moreover, if $f_u$ for $u \in [0,1]$ is a homotopy of $\mathbb{Z}_2$-equivariant maps from $M$ to $C_s(n)$, then $F_u$ given by $F_u(k,t) = (\beta_t \circ f_u)(k)$ is a homotopy of $\mathbb{Z}_2$-equivariant maps from $\tilde{S} M$ to $C_{s+1}(2n)$. Thus $\beta_\ast$ descends to $\beta_\ast^{\mathbb{Z}_2}$ as claimed.
\end{proof}
\begin{remark}\label{rem:5.1}
It is perhaps instructive to highlight the workings of $\beta$ and $\beta_\ast^{\mathbb{Z}_2}$ for a zero-dimensional momentum space consisting of two $\tau$-fixed points, $M = \{ k \in \mathbb{R} \mid k^2 = 1 \} \equiv \mathrm{S}^0$ (see the example of Section \ref{sect:4.3}). In this case the suspension $\tilde{S}(\mathrm{S}^0)$ can be regarded as the circle $\mathrm{S}^1 \subset \mathbb{C}$ of unitary numbers with involution $\tau_{\tilde{S}M}$ given by complex conjugation. This viewpoint is realized by the map
\begin{equation*}
    M\times [0,1]\ni (k,t) \mapsto \mathrm{i}\,\mathrm{e}^{-\mathrm{i} k t\pi} \in \mathrm{S}^1 \subset \mathbb{C} .
\end{equation*}
Now, starting from $f : \; \mathrm{S}^0 \to R_s(n) \simeq R_{s+1,1}(2n)$ with $f(k_\ast) = A_\ast$ and $f(-k_\ast) = A$ (for some choice of base point $k_\ast = \pm 1$), we apply $\beta$ to obtain $F: \; \mathrm{S}^1 \to C_{s+1}(2n)$ as
\begin{equation*}
    F(\mathrm{i}\,\mathrm{e}^{-\mathrm{i} k t\pi}) = \left\{ \begin{array}{ll} \beta_t(A_\ast), &\quad k = k_\ast \,, \cr \beta_t(A), &\quad k = -k_\ast \,. \end{array} \right.
\end{equation*}
It is easy to verify that $F$ is continuous and satisfies $F(\mathrm{e}^{ \mathrm{i}\theta})^\perp = F(\mathrm{e}^{- \mathrm{i} \theta})$. Thus $F$ is a $\mathbb{Z}_2$-equivariant loop $F :\; \mathrm{S}^1 \to C_{s+1} (2n)$. Half of the loop is determined by the choice of base point $A_\ast\,$; the other half is variable and parameterized by $A \in R_s(n)$. By the reasoning given above, this construction induces a mapping of homotopy classes,
\begin{equation*}
    \beta_\ast^{\mathbb{Z}_2} :\; \pi_0(R_s(n)) \equiv [\mathrm{S}^0 , C_s(n) ]_\ast^{ \mathbb{Z}_2} \to [\mathrm{S}^1 , C_{s+1} (2n) ]_\ast^{ \mathbb{Z}_2} .
\end{equation*}
\end{remark}

\subsection{Connection with complex Bott periodicity}

In the previous subsection we introduced a mapping in homotopy, $\beta_\ast^{\mathbb{Z}_2}$, which makes sense for any momentum space $M$ with an involution $\tau$. Our goal now is to show that, under favorable conditions, this map is bijective.

Let us recapitulate the situation at hand: we have a $\mathbb{Z}_2$-equivariant mapping $\beta : \; C_s(n) \to \Omega_K C_{s+1}(2n)$, cf.\ Eq.\ (\ref{eq:def-beta}), doubling the dimension of $W$ and increasing the symmetry index and the momentum-space dimension by one. The first step of the following analysis is to investigate $\beta$ as an unconstrained map; which is to say that we forget the $\mathbb{Z}_2$-actions on $C_s(n)$ and $\Omega_K C_{s+1}(2n)$ for the moment. Note that $\pi_d(Y) \equiv \pi_d(Y,y_\ast)$ denotes the homotopy group of base-point preserving maps from $\mathrm{S}^d$ into the topological space $(Y,y_\ast)$.
\begin{proposition}\label{prop:beta}
The induced map between homotopy groups,
\begin{equation}
    \beta_* : \; \pi_d \left( C_s(n)\right) \to \pi_d \left( \Omega_K C_{s+1}(2n) \right) ,
\end{equation}
is an isomorphism for $1 \leq d \ll \dim C_s(n)_0$ where $C_s(n)_0$ denotes the connected component of $C_s(n) \simeq C_{s+2}(2n)$ containing the base point $A_\ast\,$.
\end{proposition}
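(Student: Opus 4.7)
My plan is to recognise Proposition \ref{prop:beta} as a translation, into the present language, of the central step in Bott's original proof \cite{Bott1959,milnor} of complex periodicity. The strategy has three stages: (i) verify that $\beta$ lands in, and induces a homeomorphism onto, the space of \emph{minimal} geodesics in $C_{s+1}(2n)$ from $E_{+\mathrm{i}}(K)$ to $E_{-\mathrm{i}}(K)$; (ii) invoke the standard Morse-theoretic comparison between minimal geodesics and the full path space; (iii) estimate the index of the next critical manifold to obtain the stable range.

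For stage (i) I would equip $C_{s+1}(2n)$ with its natural Riemannian structure as a symmetric space (inherited from a bi-invariant metric on the unitary group preserving the pseudo-symmetries $J_1,\ldots,J_s,I$). Since $KJ(A)$ is anti-Hermitian with square $-\mathrm{Id}$ and commutes with each of $J_1,\ldots,J_s,I$, the one-parameter family $\beta_t(A)=\mathrm{e}^{(t\pi/2)KJ(A)}\cdot E_{+\mathrm{i}}(K)$ is a geodesic of constant speed in $C_{s+1}(2n)$; by Lemma \ref{lem:4.1} it joins $E_{+\mathrm{i}}(K)$ to $E_{-\mathrm{i}}(K)$, and minimality follows from the spectral gap of $KJ(A)$ (eigenvalues $\pm\mathrm{i}$ only). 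Conversely, every minimal geodesic $\gamma$ from $E_{+\mathrm{i}}(K)$ to $E_{-\mathrm{i}}(K)$ in $C_{s+1}(2n)$ is of the form $\gamma(t)=\mathrm{e}^{(t\pi/2)X}\cdot E_{+\mathrm{i}}(K)$ with $X=K\tilde J$, where $\tilde J$ squares to $-\mathrm{Id}$ and anti-commutes with $J_1,\ldots,J_s,I,K$; the midpoint $\gamma(1/2)$ is then the $+\mathrm{i}$-eigenspace of $\tilde J$, which by the characterisation of $C_{s+2}(2n)$ is exactly a point $A\in C_{s+2}(2n)\simeq C_s(n)$. Thus the midpoint map furnishes a continuous inverse, and $\beta$ is a homeomorphism onto the space $\Omega_K^{\min}C_{s+1}(2n)$ of minimal geodesics.

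For stages (ii)--(iii) I would apply the classical Morse-theoretic argument for the energy functional on the path space $\Omega_K C_{s+1}(2n)$ (see Milnor \cite{milnor}). Its absolute minimum is attained precisely on $\Omega_K^{\min}C_{s+1}(2n)$; the remaining critical submanifolds consist of non-minimal geodesics, and Morse theory yields that $\Omega_K C_{s+1}(2n)$ has the homotopy type of $\Omega_K^{\min}C_{s+1}(2n)$ with cells of dimension $\geq\lambda$ attached, where $\lambda$ is the smallest Morse index of a non-minimal critical submanifold. In particular the inclusion $\Omega_K^{\min}C_{s+1}(2n)\hookrightarrow\Omega_K C_{s+1}(2n)$ is a $(\lambda-1)$-equivalence, so $\beta_*$ is an isomorphism on $\pi_d$ for $d<\lambda$. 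Combining this with the homeomorphism of stage (i) gives the claim once one shows $\lambda$ grows linearly with $\dim C_s(n)_0$ as $n\to\infty$.

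The only non-routine step is the index estimate $\lambda\gg d$. This is where I expect to appeal directly to the calculation already performed in the literature for the Bott sequence of symmetric spaces: each $C_s(n)$ appears (after $(1,1)$ periodicity) as a member of that sequence, and the index of the non-minimal geodesics in the next space of the chain is computed in \cite{milnor} to be bounded below by a constant times the rank of the relevant classical group, i.e.\ a linear function of $n$. Deferring the explicit bound to Section \ref{sect:8}, one concludes that $\beta_*$ is bijective whenever $d$ is small compared to $\dim C_s(n)_0$, as stated.
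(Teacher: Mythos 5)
Your proposal is correct and follows essentially the same route as the paper: after recognising $\beta_t(A)$ as a minimal geodesic joining $E_{+\mathrm{i}}(K)$ to its antipode and identifying the midpoint map as a continuous inverse, one sees that $\beta$ is (when the $\mathbb{Z}_2$-action is forgotten) precisely the complex Bott map, whence the stated isomorphism range follows from the classical result of \cite{Bott1959,milnor}. The only difference is one of exposition: the paper cites the Morse-theoretic comparison of minimal geodesics with the full path space as a known fact and defers the quantitative index/stability estimates to Section~\ref{sect:8}, whereas you unfold the Morse-theoretic argument explicitly; both routes rest on the same identification of $\beta$ with the Bott map.
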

\begin{proof}
Our diagonal map (\ref{eq:def-beta}) sends a point $A \in C_{s+2}(2n) \simeq C_s(n)$ to a curve $t \mapsto \beta_t(A)$ connecting $E_{+\mathrm{i}}(K)$ with its antipode $E_{-\mathrm{i}}(K)$ in $C_{s+1}(2n)$. By the algebraic properties of $K$ and $J(A)$, the assignment $t \mapsto \mathrm{e}^{(t \pi / 2) K J(A)}$ is a one-parameter group of isometries of $C_{s+1}(2n)$ in its natural Riemannian geometry.
It follows that $t \mapsto \beta_t(A)$ is a geodesic in the path space $\Omega_K C_{s+1}(2n)$. This geodesic has minimal length in $\Omega_K C_{s+1}(2n)$ since the straight line $t \mapsto (t \pi / 2) K J(A)$ (in the Lie algebra of the group of isometries) for $t \in [0,1]$ lies in an injectivity domain for the exponential map. This means that, when the $\mathbb{Z}_2$-actions on $C_{s+2}(2n) \simeq C_s(n)$ and $\Omega_K C_{s+1}(2n)$ are ignored, our diagonal map $\beta$ is none other than the standard Bott map \cite{Bott1959} underlying complex Bott periodicity.
It is well known that this map induces an isomorphism $\beta_\ast$ of homotopy groups for $1 \leq d \ll \dim C_s(n)_0\,$.
\end{proof}
\begin{remark}
We reiterate that all our maps are understood to be base-point preserving. The base point $A_\ast$ of $C_{s+2}(2n) \simeq C_s(n)$ lies in $R_{s+1,1} (2n) \simeq R_s(n) \subset C_s(n)$. The base point of $\Omega_K C_{s+1} (2n)$ is the minimal geodesic from $E_{+\mathrm{i}}(K)$ to $E_{-\mathrm{i}}(K)$ through $A_\ast$.
\end{remark}
\begin{remark}
{}From the original paper by Bott \cite{Bott1959} one has quantitative bounds on $d$ in order for the Bott map $\beta_\ast$ to be an isomorphism. For odd $s$, in which case $C_s(n)$ is a unitary group, these are $2 \leq d + 1 \leq 2^{(3-s)/2} n$. For even $s$, the optimal bounds depend on the choice of connected component of $C_s(n)$. A detailed derivation is given in Section \ref{sect:8}.
\end{remark}
\begin{remark}
The case of dimension $d = 0$ needs separate treatment. For example, $C_0(n)$ has $2n+1$ connected components $\mathrm{Gr}_r(\mathbb{C}^{2n})$ ($0 \leq r \leq 2n$), whereas $\pi_1(C_1(2n)) = \pi_1(\mathrm{U}_{2n}) = \mathbb{Z}$. In the literature, this discrepancy is often finessed by approximating $C_0(n)$ by $\mathbb{Z} \times \mathrm{BU}$.
\end{remark}

\subsection{$G$-Whitehead Theorem}

The mapping $\beta$ under consideration is $\mathbb{Z}_2$-equivariant, and the question to be addressed now is whether it is a homotopy equivalence between topological spaces carrying $\mathbb{Z}_2$-actions. The main tool to simplify (if not answer) this question is the so-called $G$-Whitehead Theorem, a standard homotopy-theoretic result that we now quote for the reader's convenience. Although we will be concerned only with the case of $G = \mathbb{Z}_2$, we will state the theorem for any group $G$. To do so in a concise way, we need to introduce some terminology first.
\begin{definition}
Let $X$ and $Y$ be topological spaces with base points. If a base-point preserving mapping $f : \; X \to Y$ induces isomorphisms $f_\ast : \; \pi_d(X) \to \pi_d(Y)$, $[g] \mapsto [f \circ g]$, for $d < m$ and a surjection $f_\ast : \; \pi_m(X) \to \pi_m(Y)$, then one says that $f$ is $m$-connected.
\end{definition}
\begin{example}
For $s$ odd, our mapping $\beta : \; C_s(n) \to \Omega_K C_{s+1}(2n)$ is $m$-connected with $m = 2^{(3-s)/2} n - 1$.
\end{example}

The statement of the $G$-Whitehead Theorem makes use of the notion of a $G$-CW complex, which we assume to be understood; see \cite{Hatcher-AT} for an introduction. (This reference deals with the case of the trivial group $G = \{ e \}$. For the case of a general group $G$, see \cite{Greenlees-May,TomDieck}.) A fact of importance for us is that all products of spheres with factor-wise $\mathbb{Z}_2$-action are $\mathbb{Z}_2$-CW complexes, as this covers all cases considered later.

Suppose, then, that we are given a $G$-equivariant mapping $f : \; Y \to Z$ between $G$-spaces. If $X$ is another $G$-space, consider the mapping induced by $f$,
\begin{equation*}
    f_* : \; [X,Y]_\ast^G \to [X,Z]_\ast^G ,
\end{equation*}
between homotopy classes of $G$-equivariant maps. For any subgroup $H$ of $G$, let $Y^H$ be the set of fixed points of $H$ in $Y$. Because $f$ is $G$-equivariant and hence $H$-equivariant, $f$ maps $Y^H$ to the set $Z^H$ of $H$-fixed points in $Z$. We denote the restricted map by $f^H : \; Y^H \to Z^H$.
\begin{definition}\label{def:connected}
If $G$ is a group, let $m$ denote an integer-valued function $H \mapsto m(H)$ defined on all subgroups $H$ of $G$. Then a $G$-equivariant map $f: Y \to Z$ is called $m$-{\rm connected} if for any subgroup $H \subset G$ the restriction $f^H : \; Y^H \to Z^H$ is $m(H)$-connected.
\end{definition}
We are now in a position to write down the desired statement; for a reference, see \cite{Greenlees-May}.
\begin{theorem}[$G$-Whitehead Theorem]
If $X$ is a $G$-$CW$ complex and the base-point preserving and $G$-equivariant map $f : \; Y\to Z$ is $m$-connected, then the induced map
\begin{align*}
    f_* : \; [X,Y]_\ast^G & \to [X,Z]_\ast^G \,, \quad [g] \mapsto [f\circ g] ,
\end{align*}
is bijective if $dim(X^H) < m(H)$ for all subgroups $H$ of $G$. It is surjective if $dim(X^H) \le m(H)$ for all subgroups $H$ of $G$. \end{theorem}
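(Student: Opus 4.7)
The plan is to reduce the $G$-equivariant Whitehead theorem to ordinary obstruction theory on the fixed-point subspaces $Y^H$ and $Z^H$, by induction over the cellular filtration of $X$. The key reduction is a simple adjunction: for any $G$-space $W$, a $G$-equivariant map $G/H \to W$ is uniquely determined by its value at the identity coset $eH$, which must lie in $W^H$; consequently, extending a $G$-equivariant map across a cell of type $G/H \times D^k$ is equivalent to solving the corresponding extension problem in $W^H$. This converts the hypothesis that each restriction $f^H : Y^H \to Z^H$ is $m(H)$-connected, together with the dimension bound on $X^H$, into a cell-by-cell obstruction argument on $X$.

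For surjectivity, let $\phi : X \to Z$ be a $G$-equivariant pointed map; the plan is to construct $\tilde{\phi} : X \to Y$ together with a $G$-equivariant homotopy $h : X \times I \to Z$ from $f \circ \tilde{\phi}$ to $\phi$, by induction on the $G$-skeleta $X^{(k)}$. At the base point one sets $\tilde{\phi}(x_\ast) = y_\ast$ and takes the constant homotopy. Assuming $\tilde{\phi}$ and $h$ have been defined on $X^{(k-1)}$, the extension across a new cell of type $G/H \times D^k$ reduces, by the adjunction above, to extending a pair of ordinary maps $S^{k-1} \to Y^H$ and $D^k \to Z^H$, together with the connecting homotopy on $S^{k-1} \times I$, to a compatible pair defined on $D^k$. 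The obstruction lies in $\pi_{k-1}$ of the homotopy fibre of $f^H$; this group vanishes since $f^H$ is $m(H)$-connected and the hypothesis $\dim(X^H) \leq m(H)$ forces $k \leq m(H)$. The induction thus closes and shows that $f_\ast$ is surjective.

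Injectivity then follows by applying the surjectivity statement to the $G$-CW complex $X \times I$ (with trivial action on the interval factor) relative to the subcomplex $X \times \partial I$. Given $g_0, g_1 : X \to Y$ and a $G$-homotopy $\Phi : X \times I \to Z$ between $f \circ g_0$ and $f \circ g_1$, one seeks to extend the partial lift $(g_0, g_1) : X \times \partial I \to Y$ to a $G$-homotopy $X \times I \to Y$ covering $\Phi$. Since $(X \times I)^H = X^H \times I$, the relevant dimension bound becomes $\dim(X^H) + 1 \leq m(H)$, i.e., the strict inequality $\dim(X^H) < m(H)$ assumed in the bijective case. The step I expect to be most delicate is the simultaneous extension of both the lift and the compatibility homotopy, with careful tracking of base points and of the relative obstruction classes; but once the $G$-adjunction is in place, the argument mirrors the classical (non-equivariant) Whitehead theorem and proceeds without further complications.
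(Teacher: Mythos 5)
The paper does not prove this theorem --- it is quoted as a known result with the remark ``for a reference, see \cite{Greenlees-May}'' --- so there is no proof of record to compare your argument against. Evaluating your sketch on its own terms: it is essentially the standard obstruction-theoretic proof of the equivariant Whitehead theorem and the reduction steps are all sound. The crucial adjunction $\mathrm{Map}_G(G/H \times D^k, W) \cong \mathrm{Map}(D^k, W^H)$ is correctly identified, and the vanishing of the obstruction in $\pi_{k-1}$ of the homotopy fibre of $f^H$ follows from $m(H)$-connectivity as you say. One small link you leave implicit but should state: a cell of type $G/H \times D^k$ in $X$ forces $k \le \dim(X^H)$ because $eH \in (G/H)^H$, so $(G/H \times D^k)^H \supseteq \{eH\} \times D^k$; this is what ties the hypothesis $\dim(X^H) \le m(H)$ to the per-cell bound $k \le m(H)$. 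The injectivity argument via $X \times I$ rel $X \times \partial I$ is the right move, though it invokes a \emph{relative} version of the surjectivity statement that you did not explicitly state; the relative form follows by the same skeletal induction and the $+1$ in dimension is accounted for correctly (relative cells are $G/H \times (D^k \times I)$). A minor imprecision: one lifts $\Phi$ only up to a $G$-homotopy rel $X \times \partial I$, not on the nose, but this is exactly what the homotopy-fibre obstruction theory delivers and it suffices to conclude $[g_0] = [g_1]$. Overall the proposal is a correct sketch.
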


\subsection{Reformulation by relative homotopy}

We return to our task of investigating the mapping $\beta_\ast^{\mathbb{Z}_2}$ of Proposition \ref{prop:5.1}. The link with the material above is made by the identifications $Y = C_s(n)$, $Z = \Omega_K C_{s+1}(2n)$, and $G = \mathbb{Z}_2\,$. To apply the $G$-Whitehead theorem, we need to look at our map $\beta : \; Y \to Z$ and determine how connected (in the sense of Def.\ \ref{def:connected}) are its restrictions $Y^H \to Z^H$ to the fixed-point sets of all subgroups $H \subset \mathbb{Z}_2\,$. There are only two subgroups to consider: $H = \{ e \}$ (trivial group), and $H = G = \mathbb{Z}_2\,$. In the former case, the required result has been laid down in Proposition \ref{prop:beta}. What remains to be dealt with is the latter case, namely $\beta^H : \; Y^H \to Z^H$ for $H = \mathbb{Z}_2\,$.

Thus our focus now shifts to the restricted map
\begin{equation*}
    \beta^{\mathbb{Z}_2} \equiv \beta^\prime : \; C_s(n)^{\mathbb{Z}_2} = R_s(n) \to \left( \Omega_K C_{s+1}(2n) \right)^{\mathbb{Z}_2} \,;
\end{equation*}
cf.\ Eq.\ (\ref{eq:def-betap}). Recall that $(\Omega_K C_{s+1}(2n) )^{\mathbb{Z}_2}$ stands for the space of $\mathbb{Z}_2$-equivariant paths joining $E_{+\mathrm{i}}(K)$ with $E_{-\mathrm{i}}(K)$ in $C_{s+1}(2n)$. By the $G$-Whitehead Theorem, we are led to ask whether the induced maps in homotopy,
\begin{equation}\label{eq:apply-GWH}
    \beta_\ast^\prime : \; \pi_d( R_s(n) ) \to
    \pi_d \big( (\Omega_K C_{s+1}(2n))^{\mathbb{Z}_2} \big) ,
\end{equation}
are isomorphisms. To answer this question, we need yet another concept: relative homotopy.

\begin{definition}
Let $\mathrm{D}^d$ be the $d$-dimensional disk with boundary $\partial \mathrm{D}^d = \mathrm{S}^{d-1} \subset \mathrm{D}^d$ and base point $x_\ast \in \mathrm{S}^{d-1}$. If $C$ is a topological space with subspace $R \subset C$ and base point $A_\ast \in R$, then $\pi_d ( C, R , A_\ast)$ is defined as the set of homotopy equivalence classes of continuous maps taking the triple $(\mathrm{D}^d , \mathrm{S}^{d-1} , x_\ast)$ into the triple $(C, R , A_\ast)$. For $d \geq 2$ one calls $\pi_d ( C, R , A_\ast)$ a relative homotopy group.
\end{definition}
\begin{remark}
The group structure for $d \geq 2$ is defined by concatenating maps as usual. $\pi_1 ( C, R , A_\ast)$ is just a set (not a group).
\end{remark}
\begin{lemma}\label{lem:rel-hom}
The target space in (\ref{eq:apply-GWH}) may be viewed as a relative homotopy group:
\begin{equation*}
    \pi_d \big( (\Omega_K C_{s+1}(2n))^{\mathbb{Z}_2} \big) \simeq
    \pi_{d+1} ( C_{s+1}(2n), R_{s+1}(2n) , A_\ast ) .
\end{equation*}
\end{lemma}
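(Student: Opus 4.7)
The plan is to realize $(\Omega_K C_{s+1}(2n))^{\mathbb{Z}_2}$ explicitly as a path space of the form $P(X,A)$ and then invoke the classical identification $\pi_d(P(X,A)) \cong \pi_{d+1}(X,A)$. Write $C = C_{s+1}(2n)$, $R = R_{s+1}(2n)$, and $p_\pm = E_{\pm\mathrm{i}}(K)$ for brevity. A path $\gamma:[0,1]\to C$ from $p_+$ to $p_-$ is $\mathbb{Z}_2$-fixed precisely when $\gamma(1-t) = \gamma(t)^\perp$ for all $t$. This forces $\gamma(1/2) \in R$, and the second half of $\gamma$ is determined by the first half via the $\perp$-involution. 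Accordingly, the restriction
$$r : (\Omega_K C)^{\mathbb{Z}_2} \longrightarrow \bigl\{\eta:[0,1/2]\to C \,\bigm|\, \eta(0)=p_+,\; \eta(1/2)\in R\bigr\}$$
is a homeomorphism, with inverse the even extension $\eta\mapsto \gamma$ given by $\gamma(t)=\eta(t)$ for $t\leq 1/2$ and $\gamma(t)=\eta(1-t)^\perp$ for $t\geq 1/2$; continuity at $t=1/2$ uses $\eta(1/2)=\eta(1/2)^\perp$. Under $r$, the distinguished base path $t\mapsto \beta_t(A_\ast)$ of Corollary \ref{cor:4.1} maps to its half-geodesic, which terminates at $A_\ast\in R$.

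Reparameterizing $[0,1/2]$ to $[0,1]$ identifies the target of $r$ with the standard path space $P(C,R;p_+)$ of paths in $C$ starting at $p_+$ and ending in $R$, based at the half-geodesic through $A_\ast$. Evaluation at the endpoint yields a Serre fibration $\Omega C \to P(C,R;p_+) \to R$ whose fiber --- the space of paths from $p_+$ to $A_\ast$, identified via the base path with loops based at $A_\ast$ --- is homotopy equivalent to $\Omega C$. A five-lemma comparison of the long exact sequence of this fibration with that of the pair $(C, R, A_\ast)$ yields
$$\pi_d\bigl(P(C,R;p_+)\bigr) \cong \pi_{d+1}(C, R, A_\ast),$$
and precomposing with $r_\ast$ completes the argument.

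The only delicate point is that $p_+$ does not lie in $R$: since $K$ is imaginary, one has $\tau_\mathrm{car}(K) = -K$ and therefore $p_+^\perp = p_-$. Consequently $P(C,R;p_+)$ has no natural constant base path and must instead be based at a genuine path, such as our half-geodesic. This is harmless for the path-space/relative-homotopy dictionary, and I do not expect any real obstacle: once the fixed-point set of the $\mathbb{Z}_2$-action on $\Omega_K C$ has been described concretely by the even-extension construction, the remainder is a formal manipulation with standard tools of homotopy theory.
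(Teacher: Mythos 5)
Your central observation — that a $\mathbb{Z}_2$-equivariant path from $E_{+\mathrm{i}}(K)$ to $E_{-\mathrm{i}}(K)$ passes through $R_{s+1}(2n)$ at its midpoint and is determined by its first half — is exactly the paper's key step, and the homeomorphism $r$ is a correct and useful explicit packaging of it. Where you diverge is in how to get from there to relative homotopy: the paper adjuncts a map $\mathrm{S}^d \to (\Omega_K C)^{\mathbb{Z}_2}$ directly to $F:\tilde S(\mathrm{S}^d)\to C$ and restricts to the hemisphere $0\le t\le 1/2$, producing a map of triples $(\mathrm{D}^{d+1},\mathrm{S}^d,x_\ast)\to(C,R,A_\ast)$ at the level of raw maps, with the bijection of homotopy classes then being immediate. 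Your route through the fibration $\Omega C\to P(C,R;p_+)\to R$ and a five-lemma ladder is heavier than needed and, as stated, the commuting ladder between the fibration LES and the pair LES is asserted rather than exhibited; it can be filled in, but the more economical route is the same adjunction you already have in hand: since every $\eta \in P(C,R;p_+)$ starts at the single point $p_+$, a map $\mathrm{S}^d\to P(C,R;p_+)$ adjuncts to a map on $\mathrm{S}^d\times[0,1]$ that collapses $\mathrm{S}^d\times\{0\}$ to a point and thus factors through $\mathrm{D}^{d+1}$. The base-point subtlety you flag (that $p_+\notin R$, so $P(C,R;p_+)$ is based at a half-geodesic rather than a constant path) is genuine and is present, just as informally, in the paper's own phrasing; the cleanest way to dispose of it is to pre-concatenate with $\eta_0^{-1}$, giving a homotopy equivalence $P(C,R;p_+)\to P(C,R;A_\ast)$ carrying $\eta_0$ to a null-homotopic loop at $A_\ast$, and then transport base points within a path component. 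So: correct, same core idea, but a more roundabout and slightly under-justified second step than the paper's direct hemisphere-restriction argument.
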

\begin{proof}
A homotopy class in $\pi_d \big( (\Omega_K C_{s+1}(2n))^{\mathbb{Z}_2} \big)$ is represented by a base-point preserving mapping $f : \; \mathrm{S}^d \to (\Omega_K C_{s+1}(2n))^{\mathbb{Z}_2}$ or, equivalently, a map $F : \; \tilde{S}(\mathrm{S}^d) \to C_{s+1}(2n)$ with the properties $F(x,t)^\perp = F(x,1-t)$ and $F(x_\ast,1/2) = A_\ast\,$, where $x \in \mathrm{S}^d$ and $0 \leq t \leq 1$ is a polar coordinate for the suspension $\tilde{S}(\mathrm{S}^d)$.

By the first property, such a map $F$ is already determined by its values on one of the two hemispheres of $\tilde{S}(\mathrm{S}^d) = \mathrm{S}^{d +1}$. Such a hemisphere is a disk $\mathrm{D}^{d+1}$ parameterized by $t$ for, say $0 \leq t \leq 1/2$, with boundary $\mathrm{S}^d$ at the equator $t = 1/2$. The values of $F$ at the equator are constrained by $F(x,1/2) = F(x,1/2)^\perp \in R_{s+1}(2n)$. Thus the restriction of $F$ to $0 \leq t \leq 1/2$ is a mapping that takes $\mathrm{D}^{d+1}$ to $C_{s+1}(2n)$, the boundary $\mathrm{S}^{d}$ to $R_{s+1}(2n)$, and the base point $x_\ast$ to $A_\ast\,$. It is clear that this correspondence is bijective. Indeed, from the restricted data for $0 \leq t \leq 1/2$ the full function $F$ is reconstructed by the relation $F(x,1-t) = F(x,t)^\perp$.

It is also clear that this bijection of maps descends to a bijection of homotopy classes.
\end{proof}

Now, using the identification offered by Lemma \ref{lem:rel-hom} we reformulate the maps of (\ref{eq:apply-GWH}) as
\begin{equation}\label{eq:rel-hom1}
    \beta_\ast^\prime : \; \pi_d(R_s(n) , A_\ast) \to
    \pi_{d+1} \big( C_{s+1}(2n), R_{s+1}(2n) , A_\ast ) .
\end{equation}
The $G$-Whitehead Theorem then prompts us to ask under which conditions these maps are isomorphisms. A partial answer is given in the next section.

\section{Bijection in homotopy for $s \in \{ 2\, , 6\} $}\label{sect:6}
\setcounter{equation}{0}

In this section we are going to show that for two symmetry classes, namely for $s = 2$ and $s = 6$, the issue in question can be settled rather directly. What distinguishes these two cases is the existence of a fiber bundle projection that allows us to reduce the task at hand to the standard scenario of real Bott periodicity. (The other cases, $s \notin \{ 2 \,, 6 \}$, will have to be handled by a less direct argument.)

To anticipate the strategy in somewhat more detail, the main idea is as follows. When $s = 2$ or $s = 6$, we are able to construct a fibration (actually, a fiber bundle)
\begin{equation}
    R_{s+1}(2n) \hookrightarrow C_{s+1}(2n)
    \stackrel{p}{\longrightarrow} \widetilde{R}_{s,1}(2n) ,
\end{equation}
for a certain base space $\widetilde{R}_{s,1}(2n) \simeq R_{s,1}(2n)$. The projection $p$ sends the base point $A_\ast \in R_{s+1}(2n)$ to the base point $E_{-\mathrm{i}}(K) \in \widetilde{R}_{s,1}(2n)$ and induces an isomorphism
\begin{equation}
    p_\ast : \; \pi_{d+1} (C_{s+1}(2n), R_{s+1}(2n) , A_\ast )
    \to \pi_{d+1} (\widetilde{R}_{s,1}(2n) , E_{-\mathrm{i}}(K) )
\end{equation}
by basic principles. This isomorphism $p_\ast$ composes with $\beta_\ast^\prime$ to give a map
\begin{equation}
    p_\ast \circ \beta_\ast^\prime : \; \pi_d (R_{s+1,1}(2n) , A_\ast) \to \pi_{d+1} (\widetilde{R}_{s,1}(2n) , E_{-\mathrm{i}}(K)) .
\end{equation}
On using a suitable form $R_{s+1,1}(2n) \simeq \widetilde{R}_s(n)$ and $\widetilde{R}_{s,1}(2n) \simeq \widetilde{R}_{s-1}(n)$ of $(1,1)$ periodicity, this turns into the isomorphism underlying real Bott periodicity,
\begin{equation}
    \pi_d( \widetilde{R}_s(n)) \to \pi_{d+1}( \widetilde{R}_{s-1}(n) ).
\end{equation}
Thus the desired statement will be reduced to a known result in topology.

Let us make the historical remark that, in order to discover the space $\widetilde{R}_{s, 1}(2n)$ which is central to our argument, it was necessary for us to abandon the usual (Majorana) convention of realizing the involution $\tau_\mathrm{car}$ by complex conjugation. In fact, we find it optimal to work with \emph{two} such involutions at once. In the next subsection, which is preparatory, we introduce the second involution, $\widetilde{\tau}_\mathrm{car}\,$.

\subsection{$\widetilde{\mathrm{CAR}}$ involution}

Recall that the CAR pairing of $\mathbb{C}^2 \otimes W$ is determined by a bracket $\{ \, , \, \}$ due to the canonical anti-commutation relations of fermionic Fock operators. Introducing the unitary operator $u_0 = (\mathrm{Id} - I K ) / \sqrt{2}$ we define a modified bracket by
\begin{equation}
    \widetilde{\{ w , w^\prime \}} = \{ u_0 w , u_0 w^\prime \} .
\end{equation}
By using the fact that $I$ is real and $K$ imaginary, which is to say that $I$ preserves the bracket $\{ \, , \, \}$ while $K$ reverses its sign, one computes
\begin{equation*}
    \widetilde{\{ w , w^\prime \}} = {\textstyle{\frac{1}{2}}} \{ w - I K w, w^\prime - I K w^\prime \}
    = \{ -I K w , w^\prime \} = \{ (I K)^{-1} w , w^\prime \}.
\end{equation*}
Thus if $A^\perp$ is the annihilator space of $A$, i.e., if $w \in A^\perp$ annihilates all $w^\prime \in A$ with respect to $\{ \, , \, \}$, then so does $IK w$ with respect to $\widetilde{\{ \, , \, \}}$. Hence, by adopting the modified CAR bracket $\widetilde{ \{ \, , \, \} }$ we get a modified CAR involution $\widetilde{\tau}_{s+1} : \; C_{s+1}(2n) \to C_{s+1}(2n)$,
\begin{equation}\label{eq:def-taut}
    \widetilde{\tau}_{s+1} (A) = I K \tau_{s+1}(A) = I K A^\perp .
\end{equation}
The replacement of $\tau_{s+1}$ by $\widetilde{\tau}_{s+1}$ also changes the CAR involution on the operators $I, K$:
\begin{align}
    \widetilde{\tau}_\mathrm{car} (K) &= IK \tau_\mathrm{car}(K) (IK)^{-1} = I (-K) I^{-1} = + K , \\ \widetilde{\tau}_\mathrm{car} (I) &= IK \tau_\mathrm{car}(I) (IK)^{-1} = K I K^{-1} = - I .
\end{align}
Thus the roles of $I$ and $K$ get exchanged: while $K$ was imaginary with respect to $\tau_\mathrm{car}$ it is real with respect to $\widetilde{ \tau}_\mathrm{car}\,$, and vice versa for $I$. The remaining generators $J_l = \tau_\mathrm{car}(J_l) = \widetilde{\tau}_\mathrm{car} (J_l)$ for $l = 1, \ldots, s$ are real with respect to both structures, $\mathrm{CAR}$ and $\widetilde{ \mathrm{CAR}}$.

Guided by the above, we employ $\widetilde{\tau}_{s+1}$ to define a space $\widetilde{R}_{s,1}(2n)$ by
\begin{equation}\label{eq:def-tildeR}
    \widetilde{R}_{s,1}(2n) = \{ A \in C_s(2n) \mid I A^\mathrm{c} = A = \widetilde{\tau}_{s+1}(A) \} .
\end{equation}
This is to be compared with $R_{s,1}(2n) = \{ A \in C_s(2n) \mid K A^\mathrm{c} = A = \tau_{s+1}(A) \}$. Note that $R_{s,1}(2n)$ is mapped to $\widetilde{R}_{s,1}(2n)$ by the transformation $A \mapsto u_0 \,A$. Thus $\widetilde{R}_{s,1} (2n) \simeq R_{s,1}(2n)$.

\subsection{Connection with real Bott periodicity}

{}From Eq.\ (\ref{eq:Z2-Bottmap}) we recall the definition of the mapping $\beta$ behind our diagonal map:
\begin{equation*}
    \beta_t(A) = \mathrm{e}^{(t\,\pi/2)K J(A)} \cdot E_{+\mathrm{i}}(K) \,.
\end{equation*}
While this is a curve in $C_{s+1}(2n)$ when $A \in C_{s+2}(2n)$ is in general position (and our true goal is to characterize the mapping $\beta^\prime$ to $\mathbb{Z}_2$-equivariant curves; see (\ref{eq:def-betap})), we now observe that $t \mapsto \beta_t(A)$ for $A = A^\perp \in R_{s+1,1}(2n)$ has the following alternative interpretation.
\begin{lemma}
For $A \in R_{s+1,1}(2n)$ the curve $t \mapsto \beta_t(A)$ is a curve in $\widetilde{R}_{s,1}(2n)$.
\end{lemma}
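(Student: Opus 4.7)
Membership in $\widetilde{R}_{s,1}(2n)$ requires three things of $B=\beta_t(A)$: that $B\in C_s(2n)$, that $IB^{\mathrm c}=B$ (equivalently $IB=B^{\mathrm c}$), and that $\widetilde{\tau}_{s+1}(B)=B$, i.e.\ $IK\,B^\perp=B$. The first two are free of charge: Lemma~\ref{lem:4.1} already puts $\beta_t(A)$ in $C_{s+1}(2n)\subset C_s(2n)$ and verifies $I\beta_t(A)=\beta_t(A)^{\mathrm c}$ among the pseudo-symmetry relations. So the entire content of the lemma sits in the third condition.

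To establish it, I would start by invoking the $\mathbb{Z}_2$-equivariance statement of Lemma~\ref{lem:4.1}, namely $\beta_t(A)^\perp=\beta_{1-t}(A^\perp)$. Since the hypothesis is that $A\in R_{s+1,1}(2n)$, we have $A^\perp=A$, and the target relation collapses to
\begin{equation*}
    IK\,\beta_{1-t}(A)=\beta_t(A)\qquad (t\in[0,1]).
\end{equation*}
This is the key identity; once proved, the lemma is done.

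The identity is then verified by a direct one-parameter-group manipulation. First I would rewrite $\beta_{1-t}(A)=\mathrm{e}^{-(t\pi/2)KJ(A)}\cdot E_{-\mathrm{i}}(K)$ using the endpoint computation $\mathrm{e}^{(\pi/2)KJ(A)}\cdot E_{+\mathrm{i}}(K)=E_{-\mathrm{i}}(K)$ supplied inside the proof of Lemma~\ref{lem:4.1}. Next I would move $IK$ across the exponential. The ingredients are that $I$ and $K$ anti-commute; that $KA=A^{\mathrm c}=IA$ forces $K$ and $I$ to each anti-commute with $J(A)$; and consequently that $IK$ anti-commutes with $KJ(A)$. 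This flips the sign of the exponent:
\begin{equation*}
    IK\cdot\mathrm{e}^{-(t\pi/2)KJ(A)}=\mathrm{e}^{(t\pi/2)KJ(A)}\cdot IK.
\end{equation*}
Finally I would observe that $IK$ swaps the $\pm\mathrm{i}$-eigenspaces of $K$, so $IK\cdot E_{-\mathrm{i}}(K)=E_{+\mathrm{i}}(K)$. Chaining these yields $IK\,\beta_{1-t}(A)=\mathrm{e}^{(t\pi/2)KJ(A)}\cdot E_{+\mathrm{i}}(K)=\beta_t(A)$, as required.

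The main obstacle is purely bookkeeping: one must keep straight which operators anti-commute (because of the Clifford relations $J_m J_{m'}+J_{m'}J_m=-2\delta_{mm'}$, or because $A\in C_{s+2}$ makes $I,K$ send $A$ to $A^{\mathrm c}$) and which structure they preserve (orthogonal unitary for $I$, sign-reversing on the CAR bracket for $K$). Once the signs are tracked correctly, the argument is essentially mechanical; there is no need to appeal to anything beyond Lemma~\ref{lem:4.1} and the algebraic properties of $I$ and $K$ laid down in Section~\ref{sect:1,1-red}.
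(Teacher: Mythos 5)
Your proof is correct, but it takes a genuinely different route from the paper's.

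The paper first establishes the set-theoretic identity
\begin{equation*}
    R_{s+1,1}(2n) = \widetilde{R}_{s,1}(2n) \cap R_{s+1}(2n),
\end{equation*}
then argues in two steps: the element $KJ(A)$ commutes with $I, J_1, \ldots, J_s$, so the one-parameter group $\mathrm{e}^{(t\pi/2)KJ(A)}$ preserves the pseudo-symmetry constraints of $\widetilde{R}_{s,1}$; and because $\widetilde{\tau}_\mathrm{car}(K)=K$ and (using $\widetilde{\tau}_{s+1}(A)=A$) $\widetilde{\tau}_\mathrm{car}(J(A))=J(A)$, the one-parameter group lies in the fixed-point subgroup $L$ of $\widetilde{\tau}_\mathrm{car}$, hence preserves the $\widetilde{\tau}_{s+1}$-fixed set. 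Since $\beta_0(A)=E_{+\mathrm{i}}(K) \in \widetilde{R}_{s,1}(2n)$, the whole curve stays there. You instead bypass the group-theoretic framing entirely: you note that membership in $C_{s+1}(2n)$ is already guaranteed by Lemma~\ref{lem:4.1}, reduce the remaining condition $\widetilde{\tau}_{s+1}(\beta_t(A))=\beta_t(A)$ via the equivariance $\beta_t(A)^\perp=\beta_{1-t}(A^\perp)$ and $A^\perp=A$ to the operator identity $IK\,\beta_{1-t}(A)=\beta_t(A)$, and verify it directly by tracking anti-commutation of $IK$ with $KJ(A)$ and the action of $IK$ on $E_{\pm\mathrm{i}}(K)$. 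Both are sound. Your version is shorter, more elementary, and does not invoke the $\widetilde{\tau}_\mathrm{car}$-reality of operators; the paper's version has the advantage that the intersection formula $R_{s+1,1}=\widetilde{R}_{s,1}\cap R_{s+1}$ and the observation that the exponential lies in $L$ are reusable facts that the paper leans on immediately afterwards, in Lemma~\ref{lem:p-props} and Proposition~\ref{prop:real-Bott}. If you chose your route, those facts would still need to be established separately for the later arguments.
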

\begin{proof}
By inspecting the definitions (\ref{eq:def-tildeR}) and (\ref{eq:def-Rs+1,1}) one sees that
\begin{equation}\label{eq:intersect}
     R_{s+1,1}(2n) = \widetilde{R}_{s,1}(2n) \cap R_{s+1}(2n) .
\end{equation}
Indeed, the two spaces on the right-hand side have the same pseudo-symmetries including $I A = A^\mathrm{c}$, but the points of the second space are fixed with respect to $\tau_{s+1}$ while the first space is the fixed-point set of $\widetilde{\tau}_{s+1}$. In view of Eq.\ (\ref{eq:def-taut}) this implies that $A \in \widetilde{R}_{s,1}(2n) \cap R_{s+1}(2n)$ is invariant under multiplication by $IK$. Since $I$ is a pseudo-symmetry, it follows that so is $K$, i.e., $K A = A^\mathrm{c}$. Therefore the intersection on the right-hand side of Eq.\ (\ref{eq:intersect}) does give the space on the left-hand side.

Owing to (\ref{eq:intersect}) all points $A$ of $R_{s+1,1}(2n)$ lie in both $R_{s+1}(2n)$ and $\widetilde{R}_{s,1}(2n)$. Also, the product $K J(A)$ commutes with all generators $I, J_1, \ldots, J_s\,$. It follows that the one-parameter group of unitary operators $\mathrm{e}^{(t\,\pi/2) K J(A)}$ preserves the pseudo-symmetry relations of $\widetilde{R}_{ s,1}(2n)$. Moreover, $\mathrm{e}^{(t\,\pi/2) K J(A)}$ is real with respect to the $\widetilde{\rm CAR}$ structure since $\widetilde{\tau}_\mathrm{car} (K) = + K$ and
\begin{equation*}
    \widetilde{\tau}_\mathrm{car} (J(A)) =
    J(\widetilde{\tau}_{s+1}(A)) = J(A) .
\end{equation*}
Hence $\beta_t(A) \in \widetilde{R}_{s,1}(2n)$ as claimed.
\end{proof}
\begin{remark}
In particular, $\beta_0(A) = E_{+\mathrm{i}}(K)$ and $\beta_1(A) = E_{-\mathrm{i}}(K)$ are points of $\widetilde{R}_{s,1}(2n)$.
\end{remark}

To prepare the next statement, note that the continuous map \begin{equation*}
    \widetilde{\beta}: \; R_s(n) \simeq R_{s+1,1}(2n) \to \Omega_K \widetilde{R}_{s,1}(2n) ,
\end{equation*}
which assigns to $A \in R_{s+1,1}(2n)$ the geodesic $[0,1] \ni t \mapsto \beta_t(A)$ joining $E_{+\mathrm{i}}(K)$ with its antipode $E_{-\mathrm{i}}(K)$, induces a mapping $\widetilde{\beta}_\ast$ in homotopy; more precisely, by concatenating $f : \; \mathrm{S}^d \to R_s(n)$ with $\widetilde{\beta}$ we get $F = \widetilde{\beta} \circ f : \; S(\mathrm{S}^d) = \mathrm{S}^{d+1} \to \widetilde{R}_{s,1}(2n)$, and this construction induces a mapping between homotopy groups as usual; cf.\ the beginning of Section \ref{sect:5}. Let it be stressed that $\mathrm{S}^d$ and $S(\mathrm{S}^d) = \mathrm{S}^{d+1}$ are plain spheres here, with no $\mathbb{Z}_2$-group acting.
\begin{proposition}\label{prop:real-Bott}
If $R_{s+1,1}(2n)_0$ denotes the connected component of $R_{s+1,1}(2n)$ containing the base point $A_\ast$, the induced map
\begin{equation*}
    \widetilde{\beta}_\ast :\; \pi_d(R_{s+1,1}(2n), A_\ast) \to \pi_{d+1} (\widetilde{R}_{s,1} (2n), A_\ast)
\end{equation*}
is an isomorphism for $1 \leq d \ll \dim R_{s+1,1}(2n)_0$ and $s \geq 1$.
\end{proposition}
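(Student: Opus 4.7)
The plan is to identify $\widetilde{\beta}$ with the classical Bott embedding that underlies real Bott periodicity and then invoke Milnor's Morse-theoretic proof of Bott's theorem. The first step is to interpret $\widetilde{R}_{s,1}(2n)$ as a symmetric space carrying the natural Riemannian metric inherited from the bi-invariant inner product on the isometry group of $\mathbb{C}^2\otimes W$. With respect to the $\widetilde{\mathrm{CAR}}$ structure, this space is cut out by the $s$ real Clifford generators $J_1,\ldots,J_s$, the additional real generator $K$, and the imaginary generator $I$; the two fibers $E_{+\mathrm{i}}(K)$ and $E_{-\mathrm{i}}(K)$ are antipodal in this geometry. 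The preceding lemma already shows that $t\mapsto\beta_t(A)$ is a constant-speed geodesic joining them whose length does not depend on $A$, and hence is minimal in its connected component.

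The key identification is that $A\mapsto\{t\mapsto\beta_t(A)\}$ is a homeomorphism from $R_{s+1,1}(2n)$ onto the space $\Omega_K^{\min}\widetilde{R}_{s,1}(2n)$ of minimal geodesics from $E_{+\mathrm{i}}(K)$ to $E_{-\mathrm{i}}(K)$. Injectivity is immediate from $\beta_{1/2}(A)=A$. For surjectivity, any such minimal geodesic must be of the form $t\mapsto\mathrm{e}^{(t\pi/2)X}\cdot E_{+\mathrm{i}}(K)$ for a skew-Hermitian operator $X$ on $\mathbb{C}^2\otimes W$ that commutes with $J_1,\ldots,J_s,I$, anti-commutes with $K$, squares to $-\mathrm{Id}$, and is real with respect to $\widetilde{\tau}_{\mathrm{car}}$. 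Writing $X=KJ(A)$ with $A=\beta_{1/2}$ and unpacking these algebraic conditions one recovers all the pseudo-symmetry relations $J_1A=\ldots=J_sA=IA=KA=A^{\mathrm{c}}$ together with $A=A^\perp$, so that $A$ lies in $R_{s+1,1}(2n)$ as claimed.

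With this identification in hand, the remainder of the proof is Morse theory on the path space $\Omega_K\widetilde{R}_{s,1}(2n)$. By Bott's theorem in the form presented in Milnor's \emph{Morse Theory}, the inclusion $\Omega_K^{\min}\widetilde{R}_{s,1}(2n)\hookrightarrow\Omega_K\widetilde{R}_{s,1}(2n)$ is $(\lambda-1)$-connected, where $\lambda$ is the Morse index of the first non-minimal geodesic from $E_{+\mathrm{i}}(K)$ to $E_{-\mathrm{i}}(K)$. Composition with the path-loop connecting isomorphism $\pi_d(\Omega_KY)\simeq\pi_{d+1}(Y)$, valid because $\widetilde{R}_{s,1}(2n)$ is connected, then produces the asserted isomorphism $\widetilde{\beta}_\ast$.

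The main obstacle will be the explicit Jacobi-field calculation needed to control $\lambda$ in each symmetry class $s$, and thereby to pin down the precise stable range in which the isomorphism holds. These are exactly the index computations carried out in the classical Bott-Milnor analysis of the symmetric spaces listed in Table~\ref{table:CsRs}; the relevant indices grow linearly with $n$, so the qualitative hypothesis $d\ll\dim R_{s+1,1}(2n)_0$ is amply sufficient, with sharp bounds being deferred to Section~\ref{sect:8}.
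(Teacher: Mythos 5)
Your outline is correct, and it is a genuinely different route from the paper's. The paper does not redo any Morse theory in the doubled space $\mathbb{C}^2\otimes W$: it applies Proposition~\ref{prop:2.1X} a second time, now with respect to the $\widetilde{\mathrm{CAR}}$ structure (where $J_s$ is real and $I$ is imaginary), to identify $R_{s+1,1}(2n)\simeq\widetilde{R}_s(n)$ and $\widetilde{R}_{s,1}(2n)\simeq\widetilde{R}_{s-1}(n)$ as classifying spaces on the smaller space $W$, and checks that under this $(1,1)$-reduction the map $\widetilde{\beta}$ becomes the standard Bott map $\widetilde{R}_s(n)\to\Omega\widetilde{R}_{s-1}(n)$. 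The statement then follows by quoting the endpoint of Milnor's iterated analysis of the $\mathrm{O}/\mathrm{U}/\mathrm{Sp}$ chain as a black box. You instead stay entirely inside the doubled space, propose to identify $R_{s+1,1}(2n)$ directly with the space $\Omega_K^{\min}\widetilde{R}_{s,1}(2n)$ of minimal geodesics, and then invoke only the general Morse-theoretic connectivity theorem (Milnor, Thm.~22.1) together with a Jacobi-field index estimate. Both routes work, and your verification that a minimal geodesic is of the form $t\mapsto\mathrm{e}^{(t\pi/2)KJ(A)}\cdot E_{+\mathrm{i}}(K)$ with $A\in R_{s+1,1}(2n)$ is sound (using $\mathrm{e}^{(\pi/2)X}=X$ when $X^2=-\mathrm{Id}$, so that $X=KJ(A)$ for $A=\beta_{1/2}$). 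The trade-off: the paper's reduction lets it cite the finished classical result, so no new minimal-geodesic identification or index computation is required; your approach is more self-contained and geometrically transparent in the doubled picture, but effectively re-derives case-by-case what Milnor already does for the un-doubled spaces, including the index bound $\lambda$ that you correctly flag as the main remaining chore. If you want to shorten your version, the cleanest fix is exactly the paper's: carry out the $(1,1)$-reduction once more with the roles of $I$ and $K$ swapped, and then quote the classical Bott chain instead of redoing its Morse theory.
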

\begin{proof}
Fundamental to the celebrated result of real Bott periodicity, an isomorphism $\pi_d( R_s(n)) \to \pi_{d+1}( R_{s-1}(n))$ exists \cite{Bott1959} for $d \geq 1$ and $d$ in the stable range. Our map $\widetilde{\beta}_\ast$ turns into this isomorphism on making some natural identifications via the $(1,1)$ periodicity theorem of Section \ref{sect:1,1-red}, Proposition \ref{prop:2.1X}. Indeed, by an elaboration of the reasoning in the proof of Proposition \ref{prop:beta} we may argue as follows.

Adopting the $\widetilde{\rm CAR}$ structure, we have an identification $R_{s+1,1}(2n) \simeq \widetilde{R}_s(n)$ where $\widetilde{R}_s(n) \subset \mathrm{Fix}(\widetilde{\tau}_\mathrm{car})$ is determined by $s$ pseudo-symmetries that descend from $J_1, \ldots, J_{s-1}, K$ via the $(1,1)$ periodicity theorem with $\widetilde{\rm CAR}$-real generator $J_s$ and $\widetilde{\rm CAR}$-imaginary generator $I$. In the same way, $\widetilde{R}_{s,1}(2n) \simeq \widetilde{R}_{s-1}(n)$ where $\widetilde{R}_{s-1}(n)$ is determined by the $s-1$ pseudo-symmetries that descend from $J_1, \ldots, J_{s-1}$. Our diagonal map $A \mapsto \{ t \mapsto \beta_t(A)\}$ then descends by $(1,1)$ periodicity to the standard Bott map (underlying real Bott periodicity), which assigns to any point of $\widetilde{R}_s(n)$ a minimal geodesic joining pode and antipode in $\widetilde{R}_{s-1}(n)$. The statement of the Proposition thus reduces to a standard result (see the book by Milnor \cite{milnor} for a pedagogical exposition).
\end{proof}

\subsection{Squaring by the CAR involution}

In the sequel we assume that the dimension of $M$ is at least one (thereby excluding the example $M = \mathrm{S}^0$ of Remark \ref{rem:5.1}), and we take $M$ to be path-connected. Then, while the target space $C_s(n)$ may consist of many connected components (see Table \ref{table:CsRs} of Section \ref{sect:ClassMaps}), the set $[M,C_s(n)]_\ast^{\mathbb{Z}_2}$ of homotopy classes probes only the connected component containing the base point $A_\ast$, because our maps $\phi : \; M \to C_s(n)$ are continuous and base-point preserving. The same goes for the set of homotopy classes $[\tilde{S} M,C_{s+1}(2n)]_\ast^{\mathbb{Z}_2}$. Therefore, we now restrict all relevant target spaces $Y$ to their connected components $Y_0$ containing $A_\ast$. We also use this opportunity to simplify our notation, by setting
\begin{equation}
    C \equiv C_{s+1}(2n)_0 \,, \quad R_{s+1} \equiv R_{s+1}(2n) \cap C , \quad \widetilde{R}_{s,1} \equiv \widetilde{R}_{s,1}(2n)_0 \,.
\end{equation}
The equality (\ref{eq:intersect}) changes to
\begin{equation}
    R_{s+1,1} \equiv R_{s+1,1}(2n)_0 \subset R_{s+1} \cap \widetilde{R}_{s,1} \,,
\end{equation}
since $R_{s+1}$ and $\widetilde{R}_{s,1}$ may intersect in more than one component. We recall that $C$, $R_{s+1}$, $\widetilde{R}_{s,1}$ share the same pseudo-symmetries $\{ I , J_1, \ldots, J_s \}$, and that our setting is equipped with two involutions $\tau_{s+1} , \widetilde{\tau}_{s+1} :\; C \to C$, fixing the points of $R_{s+1} \subset C$ and $\widetilde{R}_{s, 1} \subset C$, respectively.

We then recall that the eigenspaces $E_{\pm\mathrm{i}}(K)$ of the generator $K$ are exchanged by each of the linear operators $I, J_1, \ldots, J_s\,$. Thus $I, J_1, \ldots, J_s$ are pseudo-symmetries for $E_{\pm\mathrm{i}} (K)$ and we have $E_{\pm\mathrm{i}}(K) \in C$. This allows us to regard the connected space $C$ as the orbit of, say $E_{+\mathrm{i}}(K)$, under the action of its connected symmetry group, $U$:
\begin{equation}
    C = U \cdot E_{+\mathrm{i}}(K) , \quad U = \{ u \in \mathrm{U} (\mathbb{C}^2 \otimes W) \mid u = I u I^{-1} = J_1 u J_1^{-1} = \ldots = J_s u J_s^{-1} \} .
\end{equation}
From this perspective, we may also think of $C$ as a coset space $U / U_K$ where $U_K$ is the isotropy group of $E_{+\mathrm{i}}(K)$:
\begin{equation}
    U_K = \{u \in U\mid u\cdot E_{+\mathrm{i}}(K) = E_{+\mathrm{i}}(K)\}.
\end{equation}
This subgroup $U_K$ can be viewed as the group of fixed points of a Cartan involution $\theta$:
\begin{equation}
    U_K = \mathrm{Fix}_U(\theta) \equiv \{ u \in U \mid \theta(u) = u \} , \quad \theta(u) = IK u (IK)^{-1} .
\end{equation}
(One may compute $\theta$ more simply by $\theta(u) = K u K^{-1}$ as $I$ commutes with all $u \in U$.) On basic grounds, the fact that the elements of $U_K$ are fixed by a Cartan involution implies that $U_K$ is a symmetric subgroup and $C \simeq U / U_K$ is a symmetric space. (In fact, $C$ in all cases is either a unitary group or a complex Grassmannian; see Table \ref{table:CsRs} in Section \ref{sect:ClassMaps}.) Note the relation
\begin{equation}
    \widetilde{\tau}_\mathrm{car} = \theta \circ \tau_\mathrm{car} \,.
\end{equation}

Beyond $U_K \subset U$, two more groups of relevance for the following discussion are the subgroups $G$ and $L$ of elements fixed by the $\mathrm{CAR}$ and $\widetilde{ \mathrm{CAR}}$ involutions respectively:
\begin{equation}
    G = \{ g \in U \mid \tau_\mathrm{car}(g) = g \}, \quad
    L = \{ l \in U \mid \widetilde{\tau}_\mathrm{car}(l) = l \} .
\end{equation}
(If $L$ is not connected, we replace it by its connected component containing the identity.) As subgroups of $U$, both $G$ and $L$ act on $C \simeq U / U_K\,$. These Lie group actions of $G$ and $L$ have nice properties due to the fact that both involutions, $\tau_\mathrm{car}$ and $\widetilde{\tau}_\mathrm{car} \,$, commute with $\theta$, as is immediate from $\tau_\mathrm{car}(IK) = - IK = \widetilde{\tau}_\mathrm{car}(IK)$.

To get ready for Lemma \ref{lem:p-props} below, we need to accumulate a few more facts. First of all, the space $\widetilde{R}_{s,1}$ can be seen as the $L$-orbit in $C$ through $E_{+\mathrm{i}}(K)$. Alternatively, we may think of $\widetilde{R}_{s,1} = L \cdot E_{+\mathrm{i}}(K)$ as $\widetilde{R}_{s,1} \simeq L / H$ for $H = L \cap U_K\,$. Here we note that $\theta : \; U \to U$ restricts to an involution $\theta : \; L \to L$ and that $H = \mathrm{Fix}_L(\theta)$ is a symmetric subgroup. It is sometimes useful to identify the symmetric space $L / H$ with its Cartan embedding into $L \subset U$. This is defined to be the space
\begin{equation}
    \mathrm{U}(L/H) = \{ l \in L \mid \theta(l) = l^{-1} \} ,
\end{equation}
and the embedding goes by
\begin{equation}
    L/H \stackrel{1:1}{\longrightarrow} \mathrm{U}(L/H) , \quad l \mapsto l \theta(l^{-1}) .
\end{equation}
A similar discussion can be given for the $G$-orbit in $C$ through $A_\ast\,$, but the only fact we need in this case is the identification $G \cdot A_\ast = R_{s+1}\,$.
\begin{lemma}\label{lem:p-props}
Suppose that the principal bundle $U \to U/U_K = C$ admits a global section, i.e.\ a map $\sigma : \; C \to U$ with $\sigma(A) \cdot E_{+\mathrm{i}}(K) = A$ for all $A \in C$. Suppose further that
\begin{enumerate}
\item[(i)] for all $(A,A^\prime) \in C \times C$, the group elements $\sigma(A)$ and $\tau_\mathrm{car}(\sigma(A))$ commute with the group elements $\theta(\sigma(A^\prime))$ and $\widetilde{\tau}_\mathrm{car} (\sigma(A^\prime))$, and
\item[(ii)] for all $A \in \widetilde{R}_{s,1}$ the relation $\tau_\mathrm{car} (\sigma(A)) = \sigma(A)^{-1}$ holds.
\end{enumerate}
Then the following statements about the mapping $p : \; C \to C$ defined by
\begin{equation}
    p(A) = \tau_\mathrm{car}(\sigma(A))^{-1} \cdot A
\end{equation}
hold true:
\begin{enumerate}
\item $p$ is onto $\widetilde{R}_{s,1}\,$.
\item $p(\beta_t(A)) = \beta_{2t}(A)$ for all $A \in R_{s+1,1}$.
\item $p(R_{s+1}) = E_{-\mathrm{i}}(K)$.
\item $p$ has the homotopy lifting property: given any homotopy $f : \; M \times [0,1] \to \widetilde{R}_{s,1}$ and $F_0 :\; M \to C$ with $p \circ F_0 = f(\cdot,0)$, there exists a homotopy $F :\; M \times [0,1] \to C$ lifting $f$, i.e.\ $p \circ F = f$ and $F(\cdot,0) = F_0\,$.
\end{enumerate}
\end{lemma}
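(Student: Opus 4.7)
My plan is to address the four claims in increasing order of structural content: claims 3 and 2 reduce to computations with the section, claim 1 identifies the image of $p$ structurally, and claim 4 --- the homotopy lifting property --- is the main obstacle, which I settle by exhibiting $p : C \to \widetilde{R}_{s,1}$ as a locally trivial fiber bundle. Claim 3 is immediate from Lemma \ref{lem:tau-car}: for $A = A^\perp \in R_{s+1}$, the identities $A = \sigma(A) \cdot E_{+\mathrm{i}}(K)$ and $E_{+\mathrm{i}}(K)^\perp = E_{-\mathrm{i}}(K)$ combine to give $A = \tau_\mathrm{car}(\sigma(A)) \cdot E_{-\mathrm{i}}(K)$, whence $p(A) = \tau_\mathrm{car}(\sigma(A))^{-1} \cdot A = E_{-\mathrm{i}}(K)$.

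For claim 2, I first observe that hypothesis (ii) forces the section to be horizontal on $\widetilde{R}_{s,1}$: writing $\sigma(B) = \mathrm{e}^X$, the relation $\tau_\mathrm{car}(\sigma(B)) = \sigma(B)^{-1}$ translates at the Lie algebra level to $\tau_\mathrm{car}(X) = -X$, i.e., $X$ sits in the orthogonal complement $\mathfrak{g}^\perp$ of the Lie algebra of $G = \mathrm{Fix}(\tau_\mathrm{car})$ in $\mathfrak{u}$. For $A \in R_{s+1,1}$ the generator $K J(A)$ lies in $\mathfrak{g}^\perp$: using $K^T = K$ ($K$ is $\mathrm{CAR}$-imaginary), $J(A)^T = -J(A)$ (via $J(A^\perp) = J(A)$; see the proof of Lemma \ref{lem:4.1}), and the anti-commutation $\{K, J(A)\} = 0$, one checks $(KJ(A))^T = KJ(A)$. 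Hence, within the injectivity radius of $\exp$, the section must agree along the geodesic $\beta_t(A) \in \widetilde{R}_{s,1}$ with $\sigma_t := \mathrm{e}^{(t\,\pi/2) K J(A)}$, and
\begin{equation*}
    p(\beta_t(A)) = \tau_\mathrm{car}(\sigma_t)^{-1} \cdot \beta_t(A) = \sigma_t \cdot \beta_t(A) = \sigma_t^{\,2} \cdot E_{+\mathrm{i}}(K) = \beta_{2t}(A).
\end{equation*}

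For claim 1, set $l(A) := \tau_\mathrm{car}(\sigma(A))^{-1} \sigma(A)$, so that $p(A) = l(A) \cdot E_{+\mathrm{i}}(K)$. The identity $\tau_\mathrm{car}(l(A)) = l(A)^{-1}$ is immediate, and combined with $\widetilde{\tau}_\mathrm{car} = \theta \circ \tau_\mathrm{car}$ the desired relation $\widetilde{\tau}_\mathrm{car}(l(A)) = l(A)$ reduces to $\theta(l(A)) = l(A)^{-1}$. The remaining identity, $\widetilde{\tau}_\mathrm{car}(\sigma)^{-1} \theta(\sigma) = \sigma^{-1} \tau_\mathrm{car}(\sigma)$, is verified by systematic rearrangement using the four pairwise commutations supplied by hypothesis (i). Thus $l(A)$ sits in the Cartan embedding of $L/H$ in $U$, placing $p(A) \in L \cdot E_{+\mathrm{i}}(K) = \widetilde{R}_{s,1}$. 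Surjectivity follows because any $B = l \cdot E_{+\mathrm{i}}(K) \in \widetilde{R}_{s,1}$ with $\theta(l) = l^{-1}$ admits a Cartan-type square root $\sigma_B \in U$ satisfying $\tau_\mathrm{car}(\sigma_B)^{-1} \sigma_B = l$, so $p(\sigma_B \cdot E_{+\mathrm{i}}(K)) = B$.

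The main obstacle is claim 4. I would prove it by exhibiting $p : C \to \widetilde{R}_{s,1}$ as a locally trivial fiber bundle, whereupon the homotopy lifting property follows from the covering homotopy theorem for fibrations. The crucial step is the left $G$-invariance $p(g \cdot A) = p(A)$ for $g \in G = \mathrm{Fix}(\tau_\mathrm{car})$: writing $\sigma(g \cdot A) = g \sigma(A) \cdot u(A,g)$ with $u(A,g) \in U_K$, hypotheses (i) and (ii) pin down $u(A,g)$ enough that $l(g \cdot A) = l(A)$. Claim 3 then identifies the fiber over $E_{-\mathrm{i}}(K)$ as $R_{s+1} = G \cdot A_\ast$, and since $L$ acts transitively on $\widetilde{R}_{s,1}$, translating this distinguished fiber by elements of $L \subset U$ produces local trivializations over the rest of the base. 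The principal technical challenge --- and the reason hypothesis (i) is imposed at this level of generality --- is to control the $U_K$-ambiguity of the section uniformly in $A$, so that these local trivializations patch together coherently.
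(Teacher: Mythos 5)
Your treatment of claim 3 is nicer than the paper's: applying Lemma \ref{lem:tau-car} together with $E_{+\mathrm{i}}(K)^\perp = E_{-\mathrm{i}}(K)$ to get $A = A^\perp = \tau_\mathrm{car}(\sigma(A))\cdot E_{-\mathrm{i}}(K)$ directly is clean and bypasses the section-transformation computation. Your claim 1 is also essentially correct (modulo spelling out the commutations). The trouble lies in claims 2 and 4.

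For claim 2, the assertion that ``the section must agree along the geodesic with $\sigma_t$'' is false. Hypothesis (ii) constrains $\sigma(B)$ for $B\in\widetilde{R}_{s,1}$ but does not determine it: if $\sigma(B)=e^X u$ with $u\in U_K$, (ii) only forces $\tau_\mathrm{car}(u)=e^{X} u^{-1} e^{-X}$, which leaves plenty of freedom for $u$ (e.g.\ any $\tau_\mathrm{car}$-anti-fixed $u$ commuting with $e^X$). So $\sigma(\beta_t(A))$ and $\sigma_t$ can differ by a nontrivial element of $U_K$, and $\tau_\mathrm{car}(\sigma(\beta_t(A)))^{-1}\cdot\beta_t(A) = \sigma_t^2\cdot E_{+\mathrm{i}}(K)$ is not immediate. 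The fix is exactly where hypothesis (i) is needed: taking $A=A'$ in (i) makes $\sigma(B)$ commute with $\theta(\sigma(B))$, which (with $\sigma(B)=e^X u$) forces $u$ to commute with $e^{2X}$, and this is precisely what is required to collapse $\sigma(B)^2\cdot E_{+\mathrm{i}}(K)$ to $e^{2X}\cdot E_{+\mathrm{i}}(K)$. The paper never asserts uniqueness of the lift; it asserts only the weaker fact $p(B)=\sigma(B)^2\cdot E_{+\mathrm{i}}(K)$ from (ii) and then invokes the normal-coordinate form using (i).

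For claim 4 the proposed ``left $G$-invariance $p(g\cdot A)=p(A)$'' is simply wrong. Writing $\sigma(g\cdot A)=g\,\sigma(A)\,u$ with $u\in U_K$ and $g\in G$, one gets $p(g\cdot A)=\tau_\mathrm{car}(u)^{-1}\cdot p(A)$; this equals $p(A)$ only when $\tau_\mathrm{car}(u)^{-1}$ happens to fix $p(A)$. The paper's proof of claim 3 uses precisely this computation, but only at $A=A_\ast$ where $p(A_\ast)=E_{-\mathrm{i}}(K)$ \emph{is} $U_K$-fixed. Generically $p(A)$ is not a $K$-eigenspace, so the argument fails. The concrete case $s=2$ makes this explicit: there $C\simeq\mathrm{U}_n$, $p(u)=\tau_\mathrm{Sp}(u)^{-1}u$, and $p(g_1 u g_2^{-1})=g_2\,p(u)\,g_2^{-1}\neq p(u)$ in general; $p$ is only left-$\mathrm{Sp}_n$-invariant, not $\mathrm{Sp}_n\times\mathrm{Sp}_n$-invariant. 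Consequently the fibers of $p$ are \emph{not} $G$-orbits, and the local-trivialization scheme you propose (translating the distinguished fiber by $L$ after using $G$-invariance) does not go through. The paper avoids this entirely by constructing an explicit horizontal lift of tangent vectors --- the assignment $v\mapsto\tilde v$ built from $\sigma$ near the end of the proof --- and verifying $p_\ast(\tilde v)=v$ pointwise; this gives the homotopy lifting property without ever establishing local triviality of $p$. Your high-level strategy (exhibit $p$ as a fiber bundle, then cite the covering homotopy theorem) is viable in principle since all spaces are compact, but you would need a different argument for the bundle structure --- e.g.\ showing $p$ is a submersion and invoking Ehresmann --- rather than the $G$-invariance route, which fails.
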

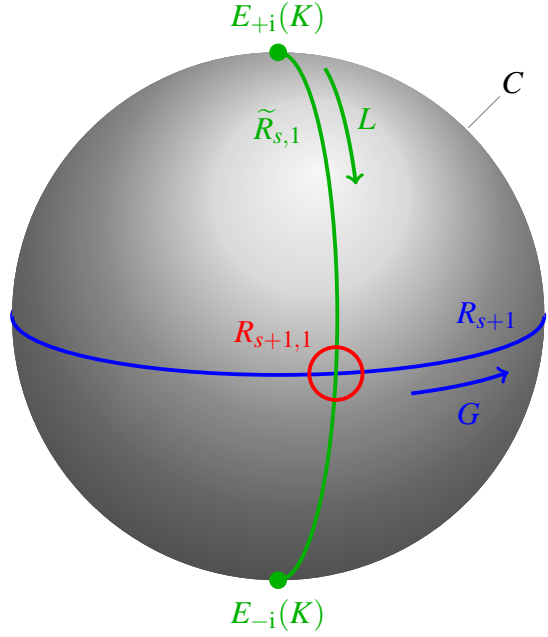
\begin{figure}
\begin{center}
\begin{tikzpicture}
\def\rad{3.5cm }
\def\radsmall{\rad/10}
\def\lw{0.05cm}
\def\arrowwidth{0.05cm}
\def\slope{\rad/4.5 }
\def\green{black!30!green}

\path [name path=bh] (-\rad,0) arc (180:0:\rad and \slope);
\path [name path=bv] (0,\rad) arc (90:270:\slope and \rad);
\path [name path=fh] (-\rad,0) arc (180:360:\rad and \slope);
\path [name path=fv] (0,\rad) arc (90:-90:\slope and \rad);
\path [name intersections={of=bh and bv, by=x1}];
\path [name intersections={of=fh and fv, by=x2}];

\draw[dashed,red,line width=\lw] (x1) circle (\rad/10);
\draw[dashed,blue,line width=\lw] (-\rad,0) arc (180:0:\rad and \slope);
\draw[dashed,\green,line width=\lw] (0,\rad) arc (90:270:\slope and \rad); 

\shade[transform canvas={rotate=-60},ball color=black!20!white,opacity=0.90] (0,0) circle (\rad);

\draw[blue,line width=\lw] (-\rad,0) arc (180:360:\rad and \slope);
\draw[\green,line width=\lw] (0,\rad) arc (90:-90:\slope and \rad);
\draw[label=test,red,line width=\lw] (x2) circle (\rad/10);

\node[anchor=south east] at ($(x2)+(-\radsmall/2.5,\radsmall/2.5)$) {${\color{red} R_{s+1,1}}$};
\node[anchor=east] at (\rad-0.2cm,0) {${\color{blue} R_{s+1}}$};
\node[anchor=north] at (0,\rad-0.6cm) {${\color{\green} \widetilde{R}_{s,1}}$};

\fill[\green] (0,\rad) circle (\rad/30);
\fill[\green] (0,-\rad) circle (\rad/30);
\node[anchor=south] at (0,\rad+0.1cm) {${\color{\green} E_{+\mathrm{i}}(K)}$};
\node[anchor=north] at (0,-\rad-0.1cm) {${\color{\green} E_{-\mathrm{i}}(K)}$};

\draw [->,\green,line width=\arrowwidth,domain=70:30] plot ({\slope*cos(\x)+\radsmall}, {\rad*sin(\x)});
\draw [->,blue,line width=\arrowwidth,domain=300:330] plot ({\rad*cos(\x)}, {\slope*sin(\x)-\radsmall});

\node[anchor=south] at (\rad/3,\rad/1.5) {${\color{\green} L}$};
\node[anchor=north] at (\rad/1.4,-\rad/3.5) {${\color{blue} G}$};

\draw[gray] (\rad/1.4,\rad/1.4) -- (\rad/1.2,\rad/1.2) node[anchor=south west,inner sep=1]{${\color{black} C}$};
\end{tikzpicture}
\end{center}
\caption{Schematic visualization of the setting in Lemma \ref{lem:p-props}: The orbits of $E_{+\mathrm{i}}(K)$ and $A_\ast$ under the groups $L$ and $G$ are the spaces $\widetilde{R}_{s,1}$ and $R_{s+1}$ (green and blue) respectively. Their intersection (red) contains the space $R_{s+1,1}$. The projection $p :\; C \to \widetilde{R}_{s,1}$ squares (``doubles'') in the green direction (property 1 and 2) and thereby sends the blue part to the south pole $E_{-\mathrm{i}}(K)$ (property 3).}
\label{fig:rgb}
\end{figure}

\begin{proof}
We first show that $p$ is into $\widetilde{R}_{s,1}\,$. For this we write $p(A) = \Sigma(A) \cdot E_{+\mathrm{i}}(K)$ with $\Sigma(A) = \tau_\mathrm{car}(\sigma(A))^{-1} \sigma(A)$ and send $p(A)$ to its image under the Cartan embedding:
\begin{equation*}
    p(A) \mapsto \Sigma(A) \theta(\Sigma(A))^{-1} \equiv \ell \,.
\end{equation*}
Let $\Sigma(A) \equiv \Sigma$ for short, and notice that $\tau_\mathrm{car} (\Sigma) = \Sigma^{-1}$. Applying $\widetilde{\tau}_\mathrm{car} = \tau_\mathrm{car} \circ \theta$ to $\ell$ one gets
\begin{align*}
    \widetilde{\tau}_\mathrm{car} (\ell) = \widetilde{\tau}_\mathrm{car} \left(\Sigma \theta(\Sigma)^{-1} \right) = (\theta \circ {\tau}_\mathrm{car})(\Sigma) \tau_\mathrm{car} (\Sigma)^{-1} = \theta(\Sigma)^{-1} \Sigma \,.
\end{align*}
Now a short calculation using the assumption (i) shows that $\Sigma$ commutes with $\theta(\Sigma)^{-1}$. We therefore have $\widetilde{\tau}_\mathrm{car} (\ell) = \ell \in L\,$. This means that $\ell = \tau_\mathrm{car} (\ell)^{-1} = \theta(\ell)^{-1}$ lies in the Cartan embedding $\mathrm{U} (L/H)$, which in turn implies that $p(A) \in L \cdot E_{+\mathrm{i}}(K)$. Thus $p$ is into $\widetilde{R}_{s,1}\,$.

To see that $p : \; C \to \widetilde{R}_{s,1}$ is surjective, let $A =
\sigma(A) \cdot E_{+\mathrm{i}}(K) \in \widetilde{R}_{s,1}$. By our second assumption (ii), the expression for $p(A)$ in this case takes the form
\begin{equation*}
    p(A) = \tau_\mathrm{car}(\sigma(A))^{-1} \cdot A
    = \sigma(A)^2 \cdot E_{+\mathrm{i}}(K) .
\end{equation*}
Thus $p : \; \widetilde{R}_{s,1} \to \widetilde{R}_{s,1}$ is the operation of squaring (or doubling the geodesic distance) from the point $E_{+\mathrm{i}}(K)$: in normal coordinates by the exponential mapping w.r.t.\ $E_{+\mathrm{i}}(K)$ it is the map
\begin{equation}\label{eq:6.37}
    p(A) = p (\exp(X) \cdot E_{+\mathrm{i}}(K))
    = \exp(2X) \cdot E_{+\mathrm{i}}(K) .
\end{equation}
Since the squaring map is surjective, it follows that $p : \; C \to \widetilde{R}_{s,1}$ is onto.

Now recall $R_{s+1,1} \subset \widetilde{R}_{s,1}$ and $\beta_t(A) = \mathrm{e}^{(t\,\pi/2) K J(A)} \cdot E_{+\mathrm{i}} (K)$. The second stated property is then an immediate consequence of the relation (\ref{eq:6.37}):
\begin{equation*}
    p(\beta_t(A)) = \big( \mathrm{e}^{(t\,\pi/2) K J(A)} \big)^2 \cdot E_{+\mathrm{i}}(K) = \beta_{2t}(A) .
\end{equation*}

Turning to the third property, we observe that $\sigma$ as a section of $U \to U/U_K$ satisfies
\begin{equation*}
    \sigma(u \cdot A) = u\, \sigma(A) \, h(u,A)  \qquad (u \in U)
\end{equation*}
for some $h(u,A)$ taking values in the isotropy group $U_K$ of $E_{+ \mathrm{i}}(K)$. By specializing this to $A = g \cdot A_\ast \in R_{s+1}$ for $u = g \in G$ and using $g = \tau_\mathrm{car}(g)$ we obtain
\begin{align*}
    p(A)&= \tau_\mathrm{car}(\sigma(A))^{-1}\sigma(A)\cdot E_{+\mathrm{i}} (K)\cr &= \tau_\mathrm{car}(h)^{-1} \tau_\mathrm{car}(\sigma (A_\ast) )^{-1} \sigma(A_\ast) \cdot E_{+\mathrm{i}}(K) = \tau_\mathrm{car} (h)^{-1} p(A_\ast) \,.
\end{align*}
{}From the second property of $p$ we know that $p(A_\ast) = p(\beta_{1/2} (A_\ast)) = \beta_1(A_\ast) = E_{-\mathrm{i}}(K)$. Now the subgroup $U_K$ of $\theta$-fixed points is stable under $\tau_\mathrm{car}\,$, since $\theta$ and $\tau_\mathrm{car}$ commute. Therefore, $\tau_\mathrm{car} (h)^{-1} \in U_K$ and we conclude that $p(A) = \tau_\mathrm{car}(h)^{-1} E_{-\mathrm{i}}(K) = E_{-\mathrm{i}}(K)$.

To establish the homotopy lifting property, it suffices to show that inside the tangent bundle $T C$ there exists a smooth distribution of ``horizontal'' spaces (in short: a horizontal sub-bundle of $T C$) assigning to every pair of a point $A \in C$ and a tangent vector $v \in T_{p(A)} \widetilde{R}_{s,1}$ a lifted tangent vector, i.e.\ a vector $\tilde{v} \in T_A C$ such that $p_\ast(\tilde{v}) = v$. Indeed, such a distribution associates with every curve $\gamma_k : \; [0,1] \to \widetilde{R}_{s,1}$, $t \mapsto f(k,t)$, a dynamical system in the form of a first-order ordinary differential equation that may be solved (using the initial condition given by $F_0(k) \in C$) to obtain a lifted curve $\tilde\gamma_k$ in $C$. The collection $\{ \tilde\gamma_k \}_{k \in M}$ of these curves yields the lifted map $F : \; M \times [0,1] \to C$.

Thus, in view of the stated goal of lifting tangent vectors, let
\begin{equation*}
    v_0 = \frac{d}{dt}\bigg\vert_{t = 0} \gamma_t \,, \quad \gamma_t = \mathrm{e}^{t Y} \cdot E_{+\mathrm{i}}(K) , \quad Y = - \theta_\ast(Y) = - (\tau_\mathrm{car})_\ast (Y) \in \mathrm{Lie}(L) ,
\end{equation*}
be a vector tangent to $\widetilde{R}_{s,1}$ at $E_{+\mathrm{i}}(K)$. This translates to a vector $v$ tangent to $\widetilde{R}_{s,1}$ at any point $p(A) = \tau_\mathrm{car}( \sigma(A) )^{-1} \cdot A$ by
\begin{equation*}
    v = \frac{d}{dt}\bigg\vert_{t = 0} \ell \cdot \gamma_t \,, \quad
    \ell = \tau_\mathrm{car}( \sigma(A) )^{-1} \theta( \sigma(A) )^{-1} .
\end{equation*}
Indeed, by our assumption (i) the product $\ell = \widetilde{ \tau}_\mathrm{car}(\ell)$ is an element of the subgroup $L$, and we have $\theta( \sigma(A) )^{-1} \cdot E_{+\mathrm{i}}(K) = A$ since $\sigma(A)^{-1} \theta( \sigma(A) )^{-1} \in U_K$ by the same assumption.

Consider now the vector
\begin{equation*}
    \tilde{v} = \frac{d}{dt}\bigg\vert_{t=0} \sigma(\gamma_{t/2})\cdot A
\end{equation*}
tangent to $C$ at $A$. We claim that the differential $p_\ast : \; T_A C \to T_{p(A)} \widetilde{R}_{s,1}$ takes $\tilde{v}$ to $v$. To verify this claim, we may assume that $\sigma(\sigma(\gamma_{t/2}) \cdot A) = \sigma( \gamma_{ t/2}) \sigma(A)$, since the bundle $U \to C$ is trivial. We then do the following computation:
\begin{align*}
    p \big( \sigma(\gamma_{t/2}) \cdot A \big) &= \tau_\mathrm{car}( \sigma(\gamma_{t/2}) \sigma(A) )^{-1} \sigma( \gamma_{t/2}) \sigma(A) \cdot E_{+\mathrm{i}}(K) \cr
    &= \tau_\mathrm{car}(\sigma(A))^{-1} \tau_\mathrm{car}( \sigma(\gamma_{t/2}) )^{-1} \sigma( \gamma_{t/2}) \theta(\sigma(A))^{-1} \cdot E_{+\mathrm{i}}(K) \cr
    &= \tau_\mathrm{car}(\sigma(A))^{-1} \theta(\sigma(A))^{-1} \tau_\mathrm{car}( \sigma(\gamma_{t/2}) )^{-1} \cdot \gamma_{t/2} = \ell \cdot \gamma_t \,.
\end{align*}
In the last step we used that $\tau_\mathrm{car}( \sigma(\gamma_{t/2}) )^{-1} \cdot \gamma_{t/2} = p(\gamma_{t/2}) = \gamma_t\,$. By taking the derivative at $t = 0$ we obtain the desired result $p_\ast(\tilde{v}) = v$.

In summary, we have a right inverse for $p_\ast :\; T_A C \to T_{p(A)} \widetilde{R}_{s,1}$ by the assignment
\begin{equation*}
    \frac{d}{dt}\bigg\vert_{t = 0} \tau_\mathrm{car}( \sigma(A) )^{-1} \theta( \sigma(A) )^{-1} \mathrm{e}^{t Y} \cdot E_{+\mathrm{i}}(K) \mapsto \frac{d}{dt}\bigg\vert_{t = 0} \sigma\big( \mathrm{e}^{tY / 2} \cdot E_{+\mathrm{i}}(K) \big) \cdot A
\end{equation*}
with $Y$ in the $\theta$-odd part of $\mathrm{Lie}(L)$. By the general reasoning sketched above, it follows that the mapping $p : \; C \to \widetilde{R}_{s,1}$ has the homotopy lifting property.
\end{proof}
\begin{remark}
The section $\sigma$, whose existence is a necessary condition for the statement of Lemma \ref{lem:p-props} to hold, exists if and only if $s \in 4 \mathbb{N} - 2$.
\end{remark}
\begin{proposition}\label{prop:6.2mz}
The map $\beta_\ast^\prime$ of (\ref{eq:rel-hom1}) is bijective for $s \in \{ 2 \,, 6 \}$ and $1 \leq d \ll \dim R_s(n)_0\,$.
\end{proposition}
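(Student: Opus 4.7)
The plan is to leverage the projection $p : C \to \widetilde{R}_{s,1}$ supplied by Lemma \ref{lem:p-props} to reduce the bijectivity of $\beta_\ast^\prime$ to the classical real Bott isomorphism $\widetilde{\beta}_\ast$ of Proposition \ref{prop:real-Bott}. Since $s \in \{2,6\} \subset 4\mathbb{N}-2$, the hypotheses of Lemma \ref{lem:p-props} are met; I would invoke its four consequences: $p$ is surjective onto $\widetilde{R}_{s,1}$, it collapses the subspace $R_{s+1}$ to the single point $E_{-\mathrm{i}}(K)$, it implements the ``squaring'' identity $p(\beta_t(A)) = \beta_{2t}(A)$ for $A \in R_{s+1,1}$, and it enjoys the homotopy lifting property with respect to maps from any space $M$, which is to say that $p$ is a Serre fibration.

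Next, viewing $p$ as a map of pointed pairs $(C, R_{s+1}) \to (\widetilde{R}_{s,1}, E_{-\mathrm{i}}(K))$, I would appeal to the standard principle that for a Serre fibration $E \to B$ with pointed fiber $F$, the projection induces an isomorphism $\pi_n(E,F) \stackrel{\cong}{\to} \pi_n(B)$ for $n \geq 1$. This requires identifying $R_{s+1}(2n)_0$ with the connected component of the genuine fiber $p^{-1}(E_{-\mathrm{i}}(K))$ containing $A_\ast$; property (3) of Lemma \ref{lem:p-props} already gives the inclusion $R_{s+1}(2n)_0 \subseteq p^{-1}(E_{-\mathrm{i}}(K))$, and the reverse direction can be checked from the symmetric-space description $C \simeq U/U_K$ together with the $L$- and $G$-orbit analysis that underlies Lemma \ref{lem:p-props}. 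Granting this identification, we conclude that the induced map
\begin{equation*}
    p_\ast : \; \pi_{d+1}(C_{s+1}(2n), R_{s+1}(2n), A_\ast) \to \pi_{d+1}(\widetilde{R}_{s,1}(2n), E_{-\mathrm{i}}(K))
\end{equation*}
is a bijection in the stated range of $d$.

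The third step is to check the factorization $p_\ast \circ \beta_\ast^\prime = \widetilde{\beta}_\ast$. Unravelling Lemma \ref{lem:rel-hom}, a class in $\pi_d(R_s(n), A_\ast)$ represented by $f : \mathrm{S}^d \to R_{s+1,1}(2n)$ is sent by $\beta_\ast^\prime$ to the relative homotopy class represented by $(x,t) \mapsto \beta_t(f(x))$, viewed as a map $(\mathrm{D}^{d+1},\mathrm{S}^d) \to (C_{s+1}(2n), R_{s+1}(2n))$ on the hemisphere $t \in [0,1/2]$. Post-composing with $p$ and invoking property (2) of Lemma \ref{lem:p-props} gives $(x,t) \mapsto \beta_{2t}(f(x))$, which maps $\mathrm{S}^d$ at $t=1/2$ to $E_{-\mathrm{i}}(K)$ and therefore descends on $\mathrm{D}^{d+1}/\partial \mathrm{D}^{d+1} = \mathrm{S}^{d+1}$. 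After the reparametrization $u = 2t \in [0,1]$ this is precisely $\widetilde{\beta}\circ f$, so the induced map in homotopy is $\widetilde{\beta}_\ast([f])$, as required. Since $\widetilde{\beta}_\ast$ is an isomorphism by Proposition \ref{prop:real-Bott} (an avatar of real Bott periodicity) and $p_\ast$ is a bijection, $\beta_\ast^\prime$ must be a bijection as well.

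The main obstacle I anticipate is the clean identification of the connected component $R_{s+1}(2n)_0$ with the relevant component of the fiber $p^{-1}(E_{-\mathrm{i}}(K))$; nothing in Lemma \ref{lem:p-props} rules out a priori that $p^{-1}(E_{-\mathrm{i}}(K))$ is strictly larger or disconnected in a way that would spoil the fibration long-exact-sequence argument. Resolving this requires a case-by-case inspection of the symmetric spaces appearing in Table \ref{table:CsRs} for $s \in \{2,6\}$, using the fact that $R_{s+1}$ is the $G$-orbit of $A_\ast$ and $\widetilde{R}_{s,1}$ is the $L$-orbit of $E_{+\mathrm{i}}(K)$, together with the explicit formula $p(A) = \tau_\mathrm{car}(\sigma(A))^{-1} \cdot A$. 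Once this geometric point is settled, the remainder of the proof is the straightforward assembly of the fibration isomorphism with the real Bott isomorphism, as outlined above.
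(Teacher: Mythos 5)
Your strategy matches the paper's: produce the fibration via Lemma \ref{lem:p-props}, use the standard fact that the projection of a fibration induces an isomorphism $\pi_{d+1}(\text{total},\text{fiber}) \to \pi_{d+1}(\text{base})$, and identify $p_\ast \circ \beta_\ast^\prime$ with the real Bott map $\widetilde{\beta}_\ast$ of Proposition \ref{prop:real-Bott} via property (2) of that Lemma. Your spelled-out factorization and your concern about the fiber identification $R_{s+1} = p^{-1}(E_{-\mathrm{i}}(K))$ are both on target; the paper simply asserts the latter, and you are right that it merits a check through the explicit orbit structure $R_{s+1} = G \cdot A_\ast$ versus the preimage under the explicit formula for $p$.

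The genuine gap is the step where you write that ``since $s \in \{2,6\} \subset 4\mathbb{N}-2$, the hypotheses of Lemma \ref{lem:p-props} are met.'' The remark after that Lemma says only that the global section $\sigma$ exists precisely when $s \in 4\mathbb{N}-2$; it does \emph{not} say that conditions (i) and (ii) of the Lemma hold. Verifying those two conditions is, in fact, the bulk of the paper's proof. Concretely, for $s = 2$ one has $U = \mathrm{U}_n \times \mathrm{U}_n$ with $U_K$ the diagonal, $\tau_\mathrm{car}$ acting factor-wise by $\tau_\mathrm{Sp}$, and $\theta$ swapping the two factors; taking $\sigma(A) = (u, \mathrm{Id})$, condition (i) holds because $\sigma(A)$ and $\tau_\mathrm{car}(\sigma(A))$ are supported in the first factor while $\theta(\sigma(A^\prime))$ and $\widetilde{\tau}_\mathrm{car}(\sigma(A^\prime))$ are supported in the second, and condition (ii) follows because a point of $\widetilde{R}_{2,1}$ lifts to $\sigma(A) = (u\,\tau_\mathrm{Sp}(u)^{-1}, \mathrm{Id})$, whose $\tau_\mathrm{car}$-image is $\sigma(A)^{-1}$. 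The case $s = 6$ is identical with $\mathrm{Sp}$ replaced by $\mathrm{O}$. Without this verification, Lemma \ref{lem:p-props} cannot legitimately be invoked and the remainder of your argument has no foundation to rest on.
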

\begin{proof}
For definiteness, let $s = 2$. Then $U = \mathrm{U}_n \times \mathrm{U}_n \,$. The Cartan involution $\theta$ has the effect of exchanging the two factors of $U = \mathrm{U}_n \times \mathrm{U}_n\,$, so the subgroup $\mathrm{Fix}(\theta) = U_K$ is the diagonal subgroup $\mathrm{U}_n\,$. The involution $\tau_\mathrm{car}$ acts by $\tau_\mathrm{Sp}$ in each factor, with $\tau_\mathrm{Sp} : \; \mathrm{U}_n \to \mathrm{U}_n$ such that $\mathrm{Fix}(\tau_\mathrm{Sp}) = \mathrm{Sp}_n\,$. Hence $G = \mathrm{Fix} (\tau_\mathrm{car}) = \mathrm{Sp}_n \times \mathrm{Sp}_n$ and $L = \mathrm{Fix} (\widetilde{\tau}_\mathrm{car}) = \mathrm{U}_n\,$, with intersection $H = G \cap L = \mathrm{Sp}_n\,$. The orbit of $L$ on $E_{+\mathrm{i}}(K)$ is $\widetilde{R}_{2,1} = L / H = \mathrm{U}_n / \mathrm{Sp}_n\,$.

The principal bundle $U \to U/U_K = C$ is trivial, and we may take $\sigma$ to be of the form $\sigma(A) = (u,\mathrm{Id})$, with the second factor being the neutral element. The involution $\tau_\mathrm{car}$ does not mix the two factors; therefore, the second factor of $\tau_\mathrm{car} (\sigma(A))$ is still neutral. Because the Cartan involution $\theta$ swaps factors and thus moves the neutral element to the first factor, $\theta(\sigma(A))$ and $\widetilde{\tau}_\mathrm{car} (\sigma(A))$ commute with $\sigma(A^\prime)$ and $\tau_\mathrm{car} (\sigma(A^\prime))$ for any pair $(A,A^\prime) \in C \times C$, as is required in order for the first condition of Lemma \ref{lem:p-props} to be met. Moreover, an element $A \in \widetilde{R}_{2,1}$ lifts to $\sigma(A) = (u \tau_\mathrm{Sp} (u)^{-1}, \mathrm{Id})$ for some $u \in \mathrm{U}_n\,$. In this case one has $\tau_\mathrm{car}(\sigma(A)) = (\tau_\mathrm{Sp} (u)\, u^{-1}, \mathrm{Id}) = \sigma(A)^{-1}$, which means that also the second condition of Lemma \ref{lem:p-props} is satisfied. The case of $s = 6$ is the same but for the substitutions $n \to n/4$ and $\mathrm{Sp} \to \mathrm{O}$.

Thus Lemma \ref{lem:p-props} applies, and from the properties stated there it follows that for $s \in \{ 2 \,, 6 \}$ we have a short exact sequence
\begin{equation}\label{eq:fibration}
    R_{s+1} \hookrightarrow C \stackrel{p}{\longrightarrow} \widetilde{R}_{s,1} ,
\end{equation}
where the first map is simply the inclusion of $R_{s+1} = p^{-1} (E_{-\mathrm{i}}(K))$ into $C$ and the second map, $p$, has the homotopy lifting property. By the latter property, (\ref{eq:fibration}) is a fibration.

Now it is a standard result of homotopy theory (see Thm.\ 4.41 of \cite{Hatcher-AT}) that the mapping $p_\ast$ induced by the projection $p$ of a fibration -- in the concrete setting at hand, that's the map
\begin{equation*}
    p_\ast : \; \pi_{d+1}(C,R_{s+1},A_\ast) \to \pi_{d+1} (\widetilde{R}_{s,1}, E_{-\mathrm{i}}(K)) ,
\end{equation*}
is an isomorphism of homotopy groups for all $d \geq 1$. By composing $p_\ast$ with the mapping $\beta_\ast^\prime$ of Eq.\ (\ref{eq:rel-hom1}), we arrive at the map
\begin{equation*}
    p_\ast \circ \beta_\ast^\prime : \; \pi_d(R_s(n),A_\ast) \to \pi_{d+1}(\widetilde{R}_{s,1}(2n), E_{-\mathrm{i}}(K)) .
\end{equation*}
By the second property of $p$ stated in Lemma \ref{lem:p-props}, this composition is identical to the standard Bott map of Prop.\ \ref{prop:real-Bott}. Since the latter is an isomorphism for $1 \leq d \ll \dim R_s(n)_0$ and $p_\ast$ is an isomorphism for all $d \geq 1$, it follows that so is $\beta_\ast^\prime$ for $1 \leq d \ll \dim R_s(n)_0 \,$.
\end{proof}
\begin{remark}
To draw the same conclusion for all classes $s$, one would need eight fibrations of the following type:
\begin{align*}
    \mathrm{U}/\mathrm{Sp} \hookrightarrow
    &\; (\mathrm{U} \times \mathrm{U}) / \mathrm{U} \longrightarrow (\mathrm{O} \times \mathrm{O}) / \mathrm{O} \,, \cr
    \mathrm{Sp}/(\mathrm{Sp} \times \mathrm{Sp}) \hookrightarrow
    &\; \mathrm{U} / (\mathrm{U} \times \mathrm{U})
    \longrightarrow \mathrm{O} / \mathrm{U} \,,\cr
    (\mathrm{Sp} \times \mathrm{Sp})/\mathrm{Sp} \hookrightarrow
    &\; (\mathrm{U} \times \mathrm{U})/\mathrm{U} \longrightarrow
    \mathrm{U} / \mathrm{Sp} \,, \cr
    \mathrm{Sp} / \mathrm{U} \hookrightarrow
    &\; \mathrm{U} / (\mathrm{U} \times \mathrm{U})
    \longrightarrow \mathrm{Sp} / (\mathrm{Sp} \times \mathrm{Sp}) \,,\cr
    \mathrm{U} / \mathrm{O} \hookrightarrow
    &\; (\mathrm{U} \times \mathrm{U})/\mathrm{U} \longrightarrow
    (\mathrm{Sp} \times \mathrm{Sp}) / \mathrm{Sp} \,,\cr
    \mathrm{O}/(\mathrm{O} \times \mathrm{O}) \hookrightarrow
    &\; \mathrm{U} / (\mathrm{U} \times \mathrm{U})
    \longrightarrow \mathrm{Sp} / \mathrm{U} \,, \cr
    (\mathrm{O} \times \mathrm{O})/\mathrm{O} \hookrightarrow
    &\; (\mathrm{U} \times \mathrm{U})/\mathrm{U} \longrightarrow
    \mathrm{U} / \mathrm{O} \,, \cr
    \mathrm{O} / \mathrm{U} \hookrightarrow
    &\; \mathrm{U} / (\mathrm{U} \times \mathrm{U})
    \longrightarrow \mathrm{O} / (\mathrm{O} \times \mathrm{O}) \,.
\end{align*}
The third ($s = 2$) and seventh ($s = 6$) of these are the fibrations
discussed in the proof of Proposition \ref{prop:6.2mz}. While the others are available \cite{Giffen} in the $K$-theory limit of infinitely many bands ($n \to \infty$), they do not seem to exist at finite $n$.
\end{remark}

Anticipating the further developments of the next section, the fruit of all our labors in this paper will be Theorem \ref{thm:7.2}, which applies to all symmetry classes $s$. Here we state and prove that result in a preliminary version restricted to $s \in \{ 2, 6 \}$.
\begin{proposition}\label{thm:6.1}
Let $M$ be a path-connected $\mathbb{Z}_2$-CW complex, and let $s = 2 $ or $s = 6$. Then the mapping (\ref{eq:prop5.1}) between homotopy classes of $\mathbb{Z}_2$-equivariant maps,
\begin{equation*}
    \beta_\ast^{\mathbb{Z}_2} :\; [M , C_s (n)]_\ast^{\mathbb{Z}_2} \to [\tilde{S} M , C_{s+1}(2n)]_\ast^{\mathbb{Z}_2} \,,
\end{equation*}
which increases the symmetry index and the momentum-space dimension of a symmetry-protected topological phase by one, is bijective for $1 \le \dim M \ll \dim C_s(n)_0\,$.
\end{proposition}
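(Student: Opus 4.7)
The plan is to package two connectivity results already established—Proposition \ref{prop:beta} and Proposition \ref{prop:6.2mz}—through the $\mathbb{Z}_2$-Whitehead Theorem. First I would recast $\beta_\ast^{\mathbb{Z}_2}$ via the suspension--loop adjunction: the correspondence (\ref{eq:5.12}), $F(k,t) = \beta_t(f(k))$, identifies $\mathbb{Z}_2$-equivariant base-point preserving maps $F : \tilde{S} M \to C_{s+1}(2n)$ with $\mathbb{Z}_2$-equivariant base-point preserving maps $M \to \Omega_K C_{s+1}(2n)$, provided the non-trivial element of $\mathbb{Z}_2$ acts on $\Omega_K C_{s+1}(2n)$ by $\gamma \mapsto (t \mapsto \gamma(1-t)^\perp)$. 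With respect to this action, $\beta$ is $\mathbb{Z}_2$-equivariant by Corollary \ref{cor:4.1}, and $\beta_\ast^{\mathbb{Z}_2}$ becomes the map $[M, C_s(n)]_\ast^{\mathbb{Z}_2} \to [M, \Omega_K C_{s+1}(2n)]_\ast^{\mathbb{Z}_2}$ induced by post-composition with $\beta$.

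Next I would apply the $G$-Whitehead Theorem with $G = \mathbb{Z}_2$, source $X = M$, and target map $\beta : C_s(n) \to \Omega_K C_{s+1}(2n)$. This requires $\beta$ to be $m$-connected in the sense of Definition \ref{def:connected}, with $m(H) > \dim M^H$ for each subgroup $H \subset \mathbb{Z}_2$. For the trivial subgroup this is exactly Proposition \ref{prop:beta}, yielding $m(\{e\})$ of order $\dim C_s(n)_0$. For $H = \mathbb{Z}_2$ the restricted map is $\beta' : R_s(n) \to (\Omega_K C_{s+1}(2n))^{\mathbb{Z}_2}$, whose induced map on $\pi_d$ is identified by Lemma \ref{lem:rel-hom} with the relative-homotopy map $\beta_\ast'$ of (\ref{eq:rel-hom1}). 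Proposition \ref{prop:6.2mz} shows this latter map is an isomorphism for $s \in \{2,6\}$ and $1 \le d \ll \dim R_s(n)_0$, which gives $m(\mathbb{Z}_2)$ of order $\dim R_s(n)_0$. Since $\dim M^{\mathbb{Z}_2} \le \dim M$ and $\dim R_s(n)_0$ is of the same order as $\dim C_s(n)_0$ (compare Table \ref{table:CsRs}), the single hypothesis $1 \le \dim M \ll \dim C_s(n)_0$ satisfies both Whitehead conditions simultaneously, and bijectivity of $\beta_\ast^{\mathbb{Z}_2}$ follows.

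The genuinely hard part—the construction of the fibration $R_{s+1} \hookrightarrow C \stackrel{p}{\to} \widetilde{R}_{s,1}$ in Lemma \ref{lem:p-props}, whose projection $p$ squares the diagonal map and reduces it to the classical real Bott map—has already been carried out in the proof of Proposition \ref{prop:6.2mz}. The remaining delicate point is merely bookkeeping: one must verify that the suspension--loop adjunction respects the two $\mathbb{Z}_2$-actions in the correct way, which is the content of the compatibility $\tau_{s+1}(\beta_t(A)) = \beta_{1-t}(\tau_s(A))$ already checked in Lemma \ref{lem:4.1}. With this in hand, Proposition \ref{thm:6.1} is the Whitehead-theoretic transcription of Proposition \ref{prop:6.2mz} from homotopy groups of the base spaces $R_s(n), C_{s+1}(2n)$ to $\mathbb{Z}_2$-equivariant homotopy classes of maps out of $M$.
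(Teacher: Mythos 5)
Your proposal follows exactly the paper's own proof: recast $\beta_\ast^{\mathbb{Z}_2}$ via the suspension--loop adjunction, then feed Propositions \ref{prop:beta} (trivial subgroup) and \ref{prop:6.2mz} (full group $\mathbb{Z}_2$, via Lemma \ref{lem:rel-hom}) into the $G$-Whitehead Theorem. The structure, the references, and the observation that the $\dim R_s(n)_0$-bound is subsumed by the $\dim C_s(n)_0$-bound all match.

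However, you have skipped one step that the paper explicitly handles and that is genuinely needed before the $G$-Whitehead Theorem can be invoked. Propositions \ref{prop:beta} and \ref{prop:6.2mz} give isomorphisms only in degrees $d \geq 1$; at $\pi_0$ the map $\beta$ (and likewise $\beta^\prime$) is \emph{not} a bijection -- for $s \in \{2,6\}$ the source $C_s(n)$ is a disjoint union of Grassmannians with finitely many components while $\pi_0(\Omega_K C_{s+1}(2n)) \simeq \pi_1(C_{s+1}(2n)) = \mathbb{Z}$, and similarly on the fixed-point sets. Under the paper's own Definition \ref{def:connected}, $m$-connectedness with $m \geq 1$ demands an isomorphism on $\pi_0$, so as stated your application of the Whitehead theorem would fail. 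The paper's proof resolves this by noting that, because $M$ is path-connected and all maps are base-point preserving, everything factors through the base-point connected components, to which one may harmlessly restrict before applying the theorem. You do record the path-connectedness hypothesis, but never say why it is needed; supplying this restriction-to-components argument is what actually closes the proof.
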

\begin{proof}
After the identification $[\tilde{S} M, C_{s+1}(2n)]_\ast^{\mathbb{Z}_2} = [M, \Omega_K C_{s+1}(2n)]_\ast^{\mathbb{Z}_2}$, our statement is an immediate consequence of the $G$-Whitehead Theorem as explained in Section \ref{sect:5}. Recall that in order for that theorem to apply in the case of a $\mathbb{Z}_2$-equivariant mapping $\beta :\; Y \to Z$, one has to show that $\beta^H : \; Y^H \to Z^H$ is highly connected for all subgroups $H$ of $\mathbb{Z}_2\,$. We have done so (with the identifications $Y = C_s(n)$ and $Z = \Omega_K C_{s+1}(2n)$, and for $s \in \{ 2, 6 \}$) for $H = \{ e \}$ (by Proposition \ref{prop:beta}, which holds if $1 \leq d \ll \dim C_s(n)_0$) and for $H = \mathbb{Z}_2$ (by Proposition \ref{prop:6.2mz}, valid for $1 \leq d \ll R_s(n)_0$). By inspection of Table \ref{table:CsRs} one sees that the condition $d \ll R_s(n)_0$ is already implied by the condition $d \ll C_s(n)_0\,$.

In both cases, $H = \{ e \}$ and $H = \mathbb{Z}_2\,$, the fact that (for $s = 2, 6)$ there is no bijection between $\pi_0 (C_s(n))$ and $\pi_0 (\Omega_K C_{s+1}(2n))$ (resp.\ between $\pi_0 (R_s(n))$ and $\pi_0 \big( (\Omega_K C_{s+1}(2n))^{\mathbb{Z}_2} \big)$) is remedied by the assumption that $M$ is path-connected. Indeed, under that condition the image of the base-point preserving map $\beta$ (resp.\ $\beta^{\mathbb{Z}_2}$) lies entirely in the connected component of $\Omega_K C_{s+1}(2n)$ (resp.\ $(\Omega_K C_{s+1}(2n))^{\mathbb{Z}_2}$) containing the base point and we may simply restrict to that single connected component. With this detail in mind, the $G$-Whitehead Theorem indeed applies to give the stated result.
\end{proof}

\section{Bijection in homotopy for all $s$}\label{sect:7}
\setcounter{equation}{0}

In this section we extend the statement of Proposition \ref{thm:6.1} to all symmetry classes $s$. In order to do so, we prove an analog of Prop.\ \ref{thm:6.1} for the position-type suspension. This analog holds true for all $s$, and we use this fact, in conjunction with $S^r \tilde{S} M = \tilde{S} S^r M$, to extend Prop.\ \ref{thm:6.1} to all symmetry classes. For a corollary, we specialize to the case of $M = \mathrm{S}^{d_x , \,d_k}$ (a sphere with $d_x$ position-like and $d_k$ momentum-like coordinates) to reproduce the generalized Periodic Table for topological insulators and superconductors in the presence of a defect with codimension $d_x+1$ \cite{TeoKane}. Such a defect may be enclosed by a large sphere $\mathrm{S}^{d_x}$ and at every point of this sphere, the setting of \cite{kitaev} applies, giving a configuration space $M = \mathrm{S}^{d_x} \times \mathrm{T}^{d_k}$ if the system without defect has discrete translational symmetry, or $M = \mathrm{S}^{d_x} \times \mathrm{S}^{d_k}$ for continuous translation invariance. In \cite{KG14} it is proved that one may replace $\mathrm{S}^{d_x} \times \mathrm{T}^{d_k}$ (resp.\ $\mathrm{S}^{d_x} \times \mathrm{S}^{d_k}$) by $\mathrm{S}^{d_x ,\, d_k}$ at the expense of losing ``weak'' invariants. The resulting sets $[\mathrm{S}^{d_x,\, d_k}, C_s(n)]_\ast^{\mathbb{Z}_2}$ of equivariant homotopy classes are listed in Table \ref{table:periodictable}.

\begin{table}
\begin{center}
\begin{tabular}{|c|c|c|c|c|c|}
\hline
index 	& symmetry	& \multicolumn{4}{c|}{$d_k-d_x$}\\
\cline{3-6}
$s$	& label 	& $0$ & $1$& $2$ & $3$ \\
\hline
0 & $A$ 		&  ${\mathbb Z}$		& $0$			& ${\mathbb Z}$	& 0\\
1 & $\AIII$ 	& $0$				& ${\mathbb Z}$ 	& $0$			& ${\mathbb Z}$\\
\hline
0 & $D$ 		& ${\mathbb Z}_2$		& ${\mathbb Z}_2$	& ${\mathbb Z} $	 & $0$\\
1 & $\DIII$	&  $0$				& ${\mathbb Z}_2$ & ${\mathbb Z}_2$	& ${\mathbb Z}$ \\
2 & $\AII$ 	& ${\mathbb Z}$		& $0$			& ${\mathbb Z}_2$	& ${\mathbb Z}_2$\\
3 & $\CII$ 	& $0$				& ${\mathbb Z}$ 	& $0$ 			& ${\mathbb Z}_2$\\
4 & $C$ 		& $0$				& $0$ 			& ${\mathbb Z}$ 	& $0$\\
5 & $\CI$ 	& $0$				& $0$ 			& $0$			& ${\mathbb Z}$ \\
6 & $\AI$ 	& ${\mathbb Z}$		& $0$			& $0$ 			& $0$  \\
7 & $\BDI$ 	& ${\mathbb Z}_2$		&${\mathbb Z}$	& $0$			& $0$\\
\hline
\end{tabular}
\end{center}
\vspace{10pt}
\caption{The sets $[\mathrm{S}^{d_x,\,d_k},C_s(n)]_\ast^{\mathbb{Z}_2}$ for $1\le d_x + d_k\ll \dim C_s(n)_0\,$, also known as the Periodic Table of topological insulators and superconductors. The classes $A$ (Sect.\ \ref{sect:2.3.1}) and $\AIII$ (Sect.\ \ref{sect:AIII}) are included with trivial $\mathbb{Z}_2$-actions. The entries $0$, $\mathbb{Z}_2$ and $\mathbb{Z}$ mean sets with one, two and (countably) infinitely many elements, respectively. They are groups only when $d_x \geq 1$. For $d_x = d_k = 0$, the three entries of $\mathbb{Z}$ change to the sets $\mathbb{Z}_{2n+1}$ (class $A$), $\mathbb{Z}_{n/2+1}$ (class $\AII$) and $\mathbb{Z}_{n/4+1}$ (class $\AI$), which are in bijective correspondence with the connected components of $C_0(n)$ (class $A$), $R_2(n)$ (class $\AII$) and $R_6(n)$ (class $A$); see Table \ref{table:CsRs}.} \label{table:periodictable}
\end{table}

\subsection{Increasing the position-like dimension}

{}From Definition (\ref{eq:Z2-Bottmap}) recall the map $\beta$ given by
\begin{equation*}
    \beta_t(A) = \mathrm{e}^{(t\, \pi /2) K J(A)} \cdot A .
\end{equation*}
In the following we use the same definition, albeit with $A \in C_{s}(n)$ (rather than the previous $A \in C_{s+2}(2n)$) and with $\tau_\mathrm{car} (K) = K$ (rather than $\tau_\mathrm{car}(K) = -K$). The latter change, i.e., replacing the imaginary generator $K$ by a real one, has an important consequence: the second property listed in Lemma \ref{lem:4.1} changes from $\beta_t(A)^\perp = \beta_{1-t}(A^\perp)$ to $\beta_t (A)^\perp = \beta_{t}(A^\perp)$. Hence, the additional coordinate $t$ is now position-like rather than momentum-like. This means that the modified curve $t \mapsto \beta_t(A)$ agrees with the original Bott map \cite{milnor}: all $\mathbb{Z}_2$-fixed points $A \in R_s(n) \subset C_s(n)$ are now mapped to $\mathbb{Z}_2$-fixed points $\beta_t(A) \in R_{s-1}(n) \subset C_{s-1}(n)$ for all $t$.
\begin{theorem}\label{thm:7.1}
For a path-connected $\mathbb{Z}_2$-CW complex $M$ with $1\le\dim M \ll \dim C_s(n)_0\,$, the original Bott map $\beta$ induces a bijection
\begin{equation*}
    [M , C_s (n)]_\ast^{\mathbb{Z}_2} \stackrel{\sim}{\to} [S M , C_{s-1} (n)]_\ast^{\mathbb{Z}_2} \,,
\end{equation*}
where $SM$ is the position-type suspension (i.e., the $\mathbb{Z}_2$-action on the suspension coordinate is trivial).
\end{theorem}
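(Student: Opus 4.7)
The plan is to mirror the proof of Proposition \ref{thm:6.1} but exploit the fact that the position-type setup makes the restricted map radically simpler than before, eliminating the need for the squaring construction of Lemma \ref{lem:p-props}.

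First, I would use the standard exponential-law adjunction to rewrite
\[
    [SM, C_{s-1}(n)]_\ast^{\mathbb{Z}_2} \;\simeq\; [M, \Omega_K C_{s-1}(n)]_\ast^{\mathbb{Z}_2},
\]
where $\Omega_K C_{s-1}(n)$ is the space of paths joining the two $K$-eigenspaces $E_{\pm\mathrm{i}}(K)$ in $C_{s-1}(n)$ (both lie in $C_{s-1}(n)$ since $J_1,\ldots,J_{s-1}$ anti-commute with $K$). Under this identification the map induced by $\beta$ becomes the map $f \mapsto \beta \circ f$ already discussed in Section \ref{sect:5}. The goal is to show that this map is a bijection of base-point preserving $\mathbb{Z}_2$-equivariant homotopy classes. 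I would apply the $G$-Whitehead Theorem with $G=\mathbb{Z}_2$, $X=M$, $Y=C_s(n)$, and $Z=\Omega_K C_{s-1}(n)$. The task reduces to checking that the restrictions of $\beta$ to the fixed-point sets of the two subgroups $H \subseteq \mathbb{Z}_2$ are sufficiently connected.

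For $H=\{e\}$, the unrestricted map $\beta : C_s(n) \to \Omega_K C_{s-1}(n)$ is the standard complex Bott map obtained by assigning to each $A$ the minimal geodesic through $A$. By Proposition \ref{prop:beta} (or rather its analog for the reduction $C_s(n) \to \Omega_K C_{s-1}(n)$, which is the same construction without the preliminary $(1,1)$-doubling) this induces isomorphisms $\pi_d(C_s(n)) \to \pi_d(\Omega_K C_{s-1}(n))$ in the stable range $1 \leq d \ll \dim C_s(n)_0$. For $H=\mathbb{Z}_2$, the crucial new observation is that because $K$ is now \emph{real} (i.e., $\tau_\mathrm{car}(K)=K$), the $\mathbb{Z}_2$-action on $\Omega_K C_{s-1}(n)$ does not reverse the path parameter; hence
\[
    \bigl(\Omega_K C_{s-1}(n)\bigr)^{\mathbb{Z}_2} \;=\; \Omega_K R_{s-1}(n),
\]
the space of ordinary paths inside the real classifying space. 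Moreover, $A \in R_s(n)$ implies $J(A)$ is real and so $\mathrm{e}^{(t\pi/2)KJ(A)}$ lies in the real symmetric-space group; thus $\beta^{\mathbb{Z}_2}: R_s(n) \to \Omega_K R_{s-1}(n)$ is precisely the classical real Bott map joining pode and antipode by a minimal geodesic in $R_{s-1}(n)$. By the real Bott periodicity theorem \cite{Bott1959,milnor}, this map induces isomorphisms $\pi_d(R_s(n)) \to \pi_d(\Omega_K R_{s-1}(n))$ in the stable range $1 \leq d \ll \dim R_s(n)_0$.

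With both connectivity estimates in hand, the $G$-Whitehead Theorem delivers the desired bijection, provided $\dim M^H < m(H)$ for $H \in \{\{e\},\mathbb{Z}_2\}$. Since the stable range for $R_s(n)$ is always at least as restrictive as that for $C_s(n)$ (by inspection of Table \ref{table:CsRs}), the single condition $\dim M \ll \dim C_s(n)_0$ suffices. As in Proposition \ref{thm:6.1}, the discrepancy between $\pi_0$ of source and target is irrelevant because $M$ is path-connected, so all maps under consideration land in one connected component of the target determined by the base point. The only step requiring any care is the identification of the fixed-point space, which hinges on the real nature of $K$; once this is recognized, there is no need for the fibration-and-squaring trick that was essential in Section \ref{sect:6}, and the argument works uniformly for every symmetry class $s$.
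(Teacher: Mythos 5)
Your proposal is correct and takes essentially the same route as the paper's proof: the adjunction $[SM,C_{s-1}(n)]_\ast^{\mathbb{Z}_2}\simeq[M,\Omega_K C_{s-1}(n)]_\ast^{\mathbb{Z}_2}$, the $\mathbb{Z}_2$-Whitehead Theorem, complex Bott connectivity for $H=\{e\}$ and real Bott connectivity for $H=\mathbb{Z}_2$, with path-connectedness of $M$ handling the $\pi_0$ mismatch. The only difference is cosmetic: you spell out (helpfully) why $(\Omega_K C_{s-1}(n))^{\mathbb{Z}_2}=\Omega_K R_{s-1}(n)$ -- namely that with $K$ real the $\mathbb{Z}_2$-action no longer reverses the path parameter -- whereas the paper leaves this implicit.
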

\begin{proof}
After the identification  $[S M , C_{s-1}(n)]_\ast^{\mathbb{Z}_2} = [M , \Omega_K C_{s-1}(n)]_\ast^{\mathbb{Z}_2}$, we can apply the $\mathbb{Z}_2$-Whitehead Theorem \cite{Greenlees-May}. For the trivial subgroup $\{e\} \subset \mathbb{Z}_2\,$, the map $\beta :\; C_s(n) \to \Omega_K C_{s-1}(n)$ is the complex Bott map and therefore highly connected. Similarly, for the full group $\mathbb{Z}_2\,$, the map $\beta$ restricts to the real Bott map $R_s(n) \to \Omega_K R_{s-1}(n)$, which is also highly connected. The obstruction that there may be a mismatch between $\pi_0$ for $C_s(n)$ resp.\ $R_s(n)$ and $\Omega_K C_{s-1}(n)$
resp.\ $\Omega_K R_{s-1}(n)$, is avoided by the reasoning described in the proof of Proposition \ref{thm:6.1}.
\end{proof}

By specializing the result above to the case of $M = \mathrm{S}^{d_x ,\, d_k}$ (which is path-connected unless $d_x = d_k = 0$) and using $S M = S( \mathrm{S}^{d_x ,\, d_k}) = \mathrm{S}^{d_x + 1 ,\, d_k}$ we immediately get the following:
\begin{corollary}\label{cor:7.1}
There exists a bijection
\begin{equation*}
    [\mathrm{S}^{d_x ,\, d_k} , C_s (n) ]_\ast^{\mathbb{Z}_2} \stackrel{\sim}{\to} [\mathrm{S}^{d_x + 1 ,\, d_k} , C_{s-1} (n) ]_\ast^{\mathbb{Z}_2}
\end{equation*}
for $1 \le d_x + d_k \ll \dim C_s(n)_0\,$.
\end{corollary}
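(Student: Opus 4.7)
The plan is to deduce the statement directly from Theorem \ref{thm:7.1} by taking $M = \mathrm{S}^{d_x ,\, d_k}$ and verifying the hypotheses of that theorem.

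First I would check the hypotheses. Since $d_x + d_k \geq 1$, the sphere $\mathrm{S}^{d_x ,\, d_k}$ is path-connected. The sphere carries a natural $\mathbb{Z}_2$-CW structure: write $\mathrm{S}^{d_x ,\, d_k}$ as the join of a $d_x$-sphere (on which $\mathbb{Z}_2$ acts trivially) and a $d_k$-sphere (with the antipodal $\mathbb{Z}_2$-action, interpreted as momentum inversion), and equip it with any factor-wise $\mathbb{Z}_2$-CW decomposition; alternatively, regard it as a product of two representation spheres and use the standard equivariant cell structure mentioned in Section \ref{sect:5}. The dimensional bound $\dim M = d_x + d_k \ll \dim C_s(n)_0$ is precisely the hypothesis we assume. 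Finally, both the trivial coordinate sphere and the antipodal sphere may be given $\tau$-fixed base points (for example, a pole of $\mathrm{S}^{d_x}$ together with a $\tau$-fixed point on $\mathrm{S}^{d_k}$), so that the base point lies in the fixed locus and pulls back to an element of $R_s(n)$ as required.

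Second I would identify the suspension. The position-type suspension $S M$ of Theorem \ref{thm:7.1} adds a coordinate that is acted upon trivially by the extended involution. Writing $\mathrm{S}^{d_x ,\, d_k}$ as coordinates $(x, k) \in \mathbb{R}^{d_x} \times \mathbb{R}^{d_k}$ with $\tau(x,k) = (x,-k)$, the suspension introduces an extra factor $t \in [0,1]$ with $t \mapsto t$, which together with the old $x$-coordinates assembles into $d_x + 1$ position-like coordinates. Collapsing the two caps gives
\[
    S\,\mathrm{S}^{d_x ,\, d_k} \;=\; \mathrm{S}^{d_x + 1 ,\, d_k},
\]
as $\mathbb{Z}_2$-spaces with base point. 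Substituting this identification into the bijection of Theorem \ref{thm:7.1} yields exactly the claimed bijection
\[
    [\mathrm{S}^{d_x ,\, d_k} , C_s (n) ]_\ast^{\mathbb{Z}_2} \stackrel{\sim}{\to} [\mathrm{S}^{d_x + 1 ,\, d_k} , C_{s-1} (n) ]_\ast^{\mathbb{Z}_2}.
\]

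There is no serious obstacle: the entire content lies in Theorem \ref{thm:7.1}, and the only genuinely non-routine verification is the identification of the position-type suspension of $\mathrm{S}^{d_x,\,d_k}$ with $\mathrm{S}^{d_x+1,\,d_k}$, which is a direct check from the definition of $\tilde{S} M$ versus $S M$ given at the end of Section \ref{sect:5} (in the present case, since the suspension coordinate carries the trivial $\mathbb{Z}_2$-action, it is labeled as position-like and hence increases $d_x$ rather than $d_k$). Path-connectedness of $M$ is what prevents specializing to the excluded case $d_x = d_k = 0$ treated separately in Remark \ref{rem:5.1}.
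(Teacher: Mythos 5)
Your argument is exactly the paper's: specialize Theorem \ref{thm:7.1} to $M = \mathrm{S}^{d_x,\,d_k}$ (path-connected for $d_x+d_k\geq 1$) and use the identification $S\,\mathrm{S}^{d_x,\,d_k} = \mathrm{S}^{d_x+1,\,d_k}$ of the position-type suspension. The additional hypothesis-checking you carry out (equivariant CW structure, fixed base point) is correct and merely makes explicit what the paper leaves implicit.
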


\subsection{Increasing the momentum-like dimension}

We now state and prove for all symmetry classes $s$ an analog of Theorem \ref{thm:7.1} for the step of increasing the momentum-like dimension:
\begin{theorem}\label{thm:7.2}
For a path-connected $\mathbb{Z}_2$-CW complex $M$ with $1\le\dim M \ll \dim C_s(n)_0$ there is, for any symmetry class $s$, a bijection
\begin{equation*}
    [M , C_s (n)]_\ast^{\mathbb{Z}_2} \simeq [\tilde{S} M , C_{s+1}(2n)]_\ast^{\mathbb{Z}_2} \,,
\end{equation*}
where $\tilde{S} M$ is the momentum-type suspension (i.e., the $\mathbb{Z}_2$-action on $\tilde{S}M$ reverses the suspension coordinate).
\end{theorem}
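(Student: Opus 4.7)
The plan is to bootstrap Proposition \ref{thm:6.1} (which establishes the bijection for the special indices $s \in \{2,6\}$) to the general case using Theorem \ref{thm:7.1} as a dialing device, together with the elementary identity
\begin{equation*}
    S^r \tilde{S} M = \tilde{S} S^r M,
\end{equation*}
valid because the position-like and momentum-like suspension coordinates are independent and their $\mathbb{Z}_2$-actions are fixed from the outset.

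Suppose first that $s \geq 2$, and pick an integer $r \in \{0,1,\ldots,s-2\}$ such that $s' := s - r \in \{2,6\}$. Iterating Theorem \ref{thm:7.1} $r$ times produces a bijection
\begin{equation*}
    [M, C_s(n)]_\ast^{\mathbb{Z}_2} \simeq [S^r M, C_{s'}(n)]_\ast^{\mathbb{Z}_2}.
\end{equation*}
Proposition \ref{thm:6.1}, applied to the path-connected $\mathbb{Z}_2$-CW complex $S^r M$, then yields
\begin{equation*}
    [S^r M, C_{s'}(n)]_\ast^{\mathbb{Z}_2} \simeq [\tilde{S}(S^r M), C_{s'+1}(2n)]_\ast^{\mathbb{Z}_2}
    = [S^r(\tilde{S} M), C_{s'+1}(2n)]_\ast^{\mathbb{Z}_2}.
\end{equation*}
Finally, applying Theorem \ref{thm:7.1} a further $r$ times in the reverse direction strips off the $r$ position-type suspensions while raising the symmetry index by $r$:
\begin{equation*}
    [S^r \tilde{S} M, C_{s'+1}(2n)]_\ast^{\mathbb{Z}_2} \simeq [\tilde{S} M, C_{s'+1+r}(2n)]_\ast^{\mathbb{Z}_2} = [\tilde{S} M, C_{s+1}(2n)]_\ast^{\mathbb{Z}_2}.
\end{equation*}
Composing these three bijections proves the claim for $s \geq 2$.

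The cases $s \in \{0,1\}$ cannot be treated directly, since Theorem \ref{thm:7.1} cannot be iterated downward past $s = 1$. Here I would first invoke the $8$-fold real Clifford algebra periodicity, realized concretely by four successive applications of the $(1,1)$ periodicity of Proposition \ref{prop:2.1X}, to obtain a $\mathbb{Z}_2$-equivariant homeomorphism $C_s(n) \simeq C_{s+8}(16n)$ in which the two \emph{standard} Fermi involutions $A \mapsto A^\perp$ match up. After four rounds of $(1,1)$-doubling the accumulated ``imaginary'' twists of the CAR pairing cancel, because the product of the four imaginary generators $K$ introduced at successive steps is again real with respect to the CAR pairing on the $16$-fold doubled space. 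The argument of the preceding paragraph, applied to $s + 8$ in place of $s$, then delivers the result.

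The main obstacle is the $s \in \{0,1\}$ step: one must carefully track, through four rounds of $(1,1)$-doubling, how the involution $A \mapsto A^\perp$ transforms, and verify that it returns to its standard form rather than producing a modified $\widetilde{\tau}$-type involution as occurred at intermediate steps. This is a bookkeeping exercise morally identical to the classical isomorphism $\mathrm{Cl}_{n+8} \simeq \mathrm{Cl}_n \otimes \mathrm{Cl}_8$ of real Clifford algebras. The dimension bounds $\dim M + r \ll \dim C_{s-r}(n)_0$ required for each use of Theorem \ref{thm:7.1} are automatic: $r$ is bounded by a universal constant (at most $13$ after the eightfold shift), while $\dim C_{s-r}(n)_0$ stays commensurate with $\dim C_s(n)_0$ by $2$-fold periodicity, so the stable range is preserved throughout.
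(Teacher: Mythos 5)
Your proof follows the same strategy as the paper's: iterate Theorem \ref{thm:7.1} to dial the symmetry index down to $s'\in\{2,6\}$, apply Proposition \ref{thm:6.1}, exploit $\tilde S S^r M = S^r\tilde S M$, and undo the dialing, with the low cases $s\in\{0,1\}$ handled via eightfold periodicity $C_{s+8}(16n)=C_s(n)$, $R_{s+8}(16n)=R_s(n)$. The only cosmetic difference is that the paper always dials to $s'=2$ and simply cites the eightfold periodicity (after Table \ref{table:CsRs}, referring to \cite{ABS}) rather than re-deriving it from four rounds of $(1,1)$-doubling as you sketch; both routes are valid, though your ``imaginary generators cancel'' heuristic really needs the algebra isomorphism $\mathrm{Cl}_{4,0}\simeq\mathrm{Cl}_{0,4}$ rather than just reality of the product $K_1K_2K_3K_4$.
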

\begin{proof}
The idea of the proof is to first apply Theorem \ref{thm:7.1} repeatedly in order to adjust the symmetry index $s$ to be either 2 or 6 (for concreteness, we settle on the arbitrary choice of 2 here), then use the statement of Proposition \ref{thm:6.1} to increase the momentum-like dimension by one unit, and finally go to the symmetry index $s+1$ by retracing the initial steps.

To spell out the details, let $s = 2 + r$. Then Theorem \ref{thm:7.1} implies that there is a bijection
\begin{equation*}
    [M , C_{s} (n)]_\ast^{\mathbb{Z}_2} \simeq [S^r M , C_{2}(n) ]_\ast^{\mathbb{Z}_2}\,,
\end{equation*}
where $S^r M$ is the $r$-fold suspension of $M$. Here we made use of the fact that if $M$ is path-connected, then so is its suspension. We next apply Proposition \ref{thm:6.1} to obtain a bijection
\begin{equation*}
    [S^r M, C_{2} (n)]_\ast^{\mathbb{Z}_2} \simeq [\tilde{S} S^r M , C_{3}(2n)]_\ast^{\mathbb{Z}_2} \,.
\end{equation*}
Finally, we observe that $\tilde{S} S^r M = S^r \tilde{S} M$ and carry out $r$ applications of Thm.\ \ref{thm:7.1} in reverse:
\begin{equation*}
    [S^r \tilde{S} M , C_{3}(2n)]_\ast^{\mathbb{Z}_2} \simeq [\tilde{S} M , C_{s+1} (2n)]_\ast^{\mathbb{Z}_2} \,,
\end{equation*}
which completes the proof. Note that the cases $s = 0$ and $s = 1$ are included in this treatment via $s = 8$ and $s = 9$ by the 8-fold periodicity $C_{s+8}(16n) = C_s(n)$ and $R_{s+8}(16n) = R_s(n)$.
\end{proof}

Specializing once more to $M = \mathrm{S}^{d_x ,\, d_k}$ we have
\begin{corollary}\label{cor:7.2}
For $1 \le d_x + d_k\ll \dim C_s(n)_0\,$, there is a bijection
\begin{equation*}
    [\mathrm{S}^{d_x ,\, d_k} , C_s (n)]_\ast^{\mathbb{Z}_2} \simeq [\mathrm{S}^{d_x ,\, d_k+1} , C_{s+1}(2n)]_\ast^{\mathbb{Z}_2} \,.
\end{equation*}
\end{corollary}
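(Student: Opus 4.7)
The plan is a direct specialization of Theorem~\ref{thm:7.2} to $M = \mathrm{S}^{d_x,\,d_k}$. There are essentially two things to verify: that the hypotheses of the theorem are met, and that the momentum-type suspension $\tilde{S}$ shifts the bidegree of the sphere in the advertised way.

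First, I would check the hypotheses. The space $\mathrm{S}^{d_x,\,d_k}$ is naturally a $\mathbb{Z}_2$-CW complex: realize it as the unit sphere in $\mathbb{R}^{d_x}\oplus\mathrm{i}\,\mathbb{R}^{d_k}$ with $\mathbb{Z}_2$ acting trivially on the first summand and by sign-reversal on the second, and equip it with an equivariant cell decomposition (e.g.\ the join decomposition $\mathrm{S}^{d_x - 1}\ast\mathrm{S}^{d_k - 1}$ with the appropriate $\mathbb{Z}_2$-action on each factor, which is clearly a $\mathbb{Z}_2$-CW structure). It is path-connected whenever $d_x + d_k\ge 1$, which is our hypothesis. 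The dimension bound $\dim M \ll \dim C_s(n)_0$ becomes $d_x + d_k \ll \dim C_s(n)_0$, matching the stated range.

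Second, I would identify the momentum-type suspension $\tilde{S}\,\mathrm{S}^{d_x,\,d_k}$ with $\mathrm{S}^{d_x,\,d_k+1}$. By definition, $\tilde{S}M = M\times[0,1]/(M\times\{0,1\})$ carries the involution $(k,t)\mapsto(\tau(k),1-t)$. Reparametrizing $[0,1]$ by $t\mapsto k_0 := \cos(\pi t)$, the suspension coordinate becomes an extra Cartesian coordinate on which $\mathbb{Z}_2$ acts by sign-reversal. Thus the adjoined direction is momentum-like, which increments $d_k$ by one while $d_x$ stays fixed. Combined with the ordinary (non-equivariant) identification of the topological suspension of a sphere as a sphere of one dimension higher, this yields $\tilde{S}\,\mathrm{S}^{d_x,\,d_k}\cong\mathrm{S}^{d_x,\,d_k+1}$ as $\mathbb{Z}_2$-spaces.

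Feeding these two observations into Theorem~\ref{thm:7.2} gives the claimed bijection. There is no real obstacle here: the corollary is a packaging of the theorem, and all of the genuine work has already been done in proving Theorem~\ref{thm:7.2} (which itself rests on Proposition~\ref{thm:6.1}, Theorem~\ref{thm:7.1}, and the commutation relation $S^r \tilde{S} M = \tilde{S} S^r M$). The only point that requires any thought is the equivariant identification of the suspension coordinate as momentum-like, which is immediate from how $\tilde{S}$ is defined.
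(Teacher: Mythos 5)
Your proposal is correct and coincides with what the paper does: the paper likewise states that the corollary follows directly from Theorem~\ref{thm:7.2} (after noting that $\mathrm{S}^{d_x,\,d_k}$ is a path-connected $\mathbb{Z}_2$-CW complex and that $\tilde{S}\,\mathrm{S}^{d_x,\,d_k}=\mathrm{S}^{d_x,\,d_k+1}$), and then merely spells out the resulting chain of bijections through $C_2$ and $C_3$ for pedagogical clarity. Your explicit verification of the $\mathbb{Z}_2$-CW structure and the equivariant suspension identification is careful and accurate; there is no gap.
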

\begin{proof}
Although this result follows directly from the more general one in Theorem \ref{thm:7.2}, it may be instructive to repeat the proof:
\begin{align*}
    [\mathrm{S}^{d_x ,\, d_k} , C_s (n)]_\ast^{\mathbb{Z}_2} &\simeq [\mathrm{S}^{d_x+s-2 ,\, d_k} , C_2 (n)]_\ast^{\mathbb{Z}_2} \cr
    &\simeq [\mathrm{S}^{d_x+s-2 ,\, d_k+1} , C_3 (2n) ]_\ast^{ \mathbb{Z}_2} \simeq [\mathrm{S}^{d_x ,\, d_k+1} , C_{s+1} (2n)]_\ast^{\mathbb{Z}_2} \,,
\end{align*}
as it clearly shows our chain of reasoning for a special case of importance in physics.
\end{proof}
\begin{remark}
For $d_x = d_k = 0$ the maps of Cor.\ \ref{cor:7.1} and Cor.\ \ref{cor:7.2} are injective but not surjective in general. In the latter case, some indication of this can be found in Section \ref{sect:4.3} and Remark \ref{rem:5.1} of the present paper and also in Sections 4.3 and 4.5 of \cite{RK-MZ-Nobel}.
\end{remark}

{}From the combination of the Corollaries \ref{cor:7.1} and \ref{cor:7.2}, the entries of Table \ref{table:periodictable} are determined by just specifying one column of entries for variable symmetry index $s$ but fixed values of the dimensions $d_x$ and $d_k$, subject to $d_x + d_k \ge 1$. For example, one may take $(d_x \,, d_k) = (1,0)$, in which case $[\mathrm{S}^{1,0} , C_s(n) ]_\ast^{\mathbb{Z}_2}$ is none other than the well-known fundamental group $\pi_1(R_s(n))$ (or $\pi_1(C_s(n))$ for the classes $A$ and $\AIII$).

\section{Stability bounds}\label{sect:8}
\setcounter{equation}{0}

In stating our theorems, \ref{thm:7.1} and \ref{thm:7.2}, we simply posed the qualitative condition $d = \dim M \ll \dim C_s(n)_0\,$, leaving their range of validity unspecified. To fill this quantitative void, we are now going to formulate precise conditions on $d$  (as a function of $\dim C_s(n)_0$) in order for the theorems to apply.

\subsection{Connectivity of inclusions}

In the definition of the space $C_s(n)$ with involution $\tau_s$ fixing the subspace $R_s(n)$, the number $n$ takes values in $m_s \mathbb{N}$ for a minimal integer $m_s \ge 1$ which depends on the symmetry class $s$. This restriction $n \in m_s \mathbb{N}$ stems from the requirement that $W = \mathbb{C}^{2n}$ must carry a representation of the Clifford algebra generated by the pseudo-symmetries $J_1, \ldots, J_s$. The numbers $m_s$ can be obtained by choosing the minimal parameters in Table \ref{table:CsRs} and are related to the ones found in \cite[Table~2]{ABS} and \cite[Table~V]{StoneEtAl}. The result is shown in the following list, which can be continued beyond $s=8$ by the relation $m_{s+8} = m_s/16$:
\begin{center}
\begin{tabular}{c|ccccccccc}
$s$	& 0 & 1 & 2 & 3 & 4 & 5 & 6 & 7 & 8 \\
\hline
$m_s$	& 1 & 2 & 2 & 4 & 4 & 4 & 4 & 8 & 16
\end{tabular}
\end{center}

Let the Clifford generators in the definition of $C_s(n)$ be denoted by $J_l$ and those of $C_s(m_s)$ by $J_l^\prime$ ($l = 1, \ldots, s$). For any symmetry class $s$, let a fixed element $A_0 \in R_s(m_s) \subset C_s(m_s)$ be given. We then have a natural inclusion
\begin{equation}
    i_s : \; C_s(n) \hookrightarrow C_s(n+m_s), \quad A \mapsto A \oplus A_0\,,
\end{equation}
where $C_s (n + m_s)$ is defined with Clifford generators $J_l \oplus J_l^\prime$ (for $l = 1, \ldots, s$). The map $i_s$ has the property of being equivariant with respect to the $\mathbb{Z}_2$-action on its image and domain:
\begin{equation}
    i_s(A)^\perp = A^\perp\oplus A_0^\perp = A^\perp\oplus A_0 = i_s(A^\perp) .
\end{equation}
In particular, its restriction $i_s^{\mathbb{Z}_2}$ to the subspace $C_s(n)^{\mathbb{Z}_2} = R_s(n)$ has image in $R_s(n + m_s)$.

The goal of this section is to prove the following theorem:
\begin{theorem}\label{theorem:8.1}
Given a path-connected $\mathbb{Z}_2$-CW complex $M$ and a number (of bands) $n = m_s\, r$ for some integer $r \in \mathbb{N}$, the induced map
\begin{equation*}
    (i_s)_\ast : \; [M , C_s(n)]_\ast^{\mathbb{Z}_2} \to [M, C_s (n + m_s) ]_\ast^{\mathbb{Z}_2}
\end{equation*}
is bijective if $\dim M < \nm$ and $\dim M^{\mathbb{Z}_2} < \nmZ\,$, and remains surjective under the weakened conditions $\dim M \le \nm$ and $\dim M^{\mathbb{Z}_2} \le \nmZ\,$. The values of $\nm$ and $\nmZ$ are given in the following table (where the complex classes are included by replacing the $\mathbb{Z}_2$-actions on $M$, $C_s(n)$ and $C_s(n+m_s)$ by the trivial one and omitting the conditions on $M^{\mathbb{Z}_2}$):
\begin{center}
\begin{tabular}{c|c|c|c|c}
s 	    &$C_s(m_s\, r)_0$	 &--     &$\nm$     &Case  \\
\hline
even	&$\mathrm{U}_{p+q} / \mathrm{U}_p \times \mathrm{U}_q$	
        &--    &$\min(2p+1,2q+1)$ 			 &(iv) \\
odd	    &$\mathrm{U}_r$  &--     &$2r$		 &(i) \\
\multicolumn{5}{c}{} \\
s       &$C_s(m_s\, r)_0$	&$C_s(m_s\, r)^{\mathbb{Z}_2}_0$
        &$\nmZ$	&Case \\
\hline
0       &$\mathrm{U}_{2r}/\mathrm{U}_r\times \mathrm{U}_r$	
        &$\mathrm{O}_{2r}/\mathrm{U}_{r}$	&$2r-1$	  & (ii)\\
1       &$\mathrm{U}_{2r}$	&$\mathrm{U}_{2r}/\mathrm{Sp}_{2r}$
        &$4r$				& (ii) \\
2       &$\mathrm{U}_{2p+2q}/\mathrm{U}_{2p}\times \mathrm{U}_{2q}$	
        &$\mathrm{Sp}_{2p+2q}/\mathrm{Sp}_{2p}\times \mathrm{Sp}_{2q}$	
        &$\min(4p+3,4q+3)$	&(iv) \\
3       &$\mathrm{U}_{2r}$	&$\mathrm{Sp}_{2r}$	 &$4r+2$     &(i) \\
4       &$\mathrm{U}_{2r}/\mathrm{U}_r\times \mathrm{U}_r$	
        &$\mathrm{Sp}_{2r}/\mathrm{U}_{r}$	     &$2r+1$     &(iii) \\
5       &$\mathrm{U}_r$		&$\mathrm{U}_{r}/\mathrm{O}_{r}$ &$r$
        &(iii)\\
6       &$\mathrm{U}_{p+q}/\mathrm{U}_p\times \mathrm{U}_q$	
        &$\mathrm{O}_{p+q}/\mathrm{O}_{p}\times \mathrm{O}_{q}$	
        &$\min(p,q)$		&(iv)\\
7       &$\mathrm{U}_r$		&$\mathrm{O}_{r}$	 &$r-1$		&(i)
\end{tabular}
\end{center}

\noindent For the complex symmetry classes with even $s$ (class $A$) as well as the real classes $s = 2, 6$ (classes $\AII$ and $\AI$), the single parameter $r$ is refined to $r = p + q$ in order to accommodate the possibility of the base point lying in different connected components of $C_s(m_s r)$.
\end{theorem}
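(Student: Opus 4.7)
The strategy is to reduce the theorem to a connectivity calculation for classical stable inclusions and then invoke the $\mathbb{Z}_2$-Whitehead Theorem quoted in Section~\ref{sect:5}. Concretely, observe that $i_s$ is $\mathbb{Z}_2$-equivariant and base-point preserving, so by the Whitehead Theorem it suffices to verify two connectivity estimates: that $i_s : C_s(n) \to C_s(n+m_s)$ is $\nm$-connected, and that its restriction to $\mathbb{Z}_2$-fixed points, $i_s^{\mathbb{Z}_2} : R_s(n) \to R_s(n+m_s)$, is $\nmZ$-connected. Granting these, the bijection and surjection statements follow immediately from the inequalities $\dim M < \nm$, $\dim M^{\mathbb{Z}_2} < \nmZ$ (and the weakened versions). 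Path-connectedness of $M$ ensures, exactly as in the proof of Proposition~\ref{thm:6.1}, that one may restrict to the connected components of the target classifying spaces that contain the base point, so no obstruction arises from $\pi_0$ mismatches.

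The core of the proof is therefore the case-by-case connectivity analysis, using the identification of $C_s(n)_0$ and $R_s(n)_0$ as the homogeneous spaces listed in Table~\ref{table:CsRs}. For each relevant symmetric pair $G/H \hookrightarrow G'/H'$ arising from $n \mapsto n + m_s$, one exhibits a fibration whose fiber is highly connected. The prototypical cases are as follows: (i) when $C_s(m_s r)_0$ is a classical group, say $\mathrm{U}_r$, the inclusion $\mathrm{U}_r \hookrightarrow \mathrm{U}_{r+1}$ fits into the sphere fibration $\mathrm{U}_r \hookrightarrow \mathrm{U}_{r+1} \to \mathrm{S}^{2r+1}$, yielding $2r$-connectivity by the long exact sequence of homotopy groups; the quaternionic and orthogonal analogues give the bounds listed for $\mathrm{Sp}_r$ and $\mathrm{O}_r$. (ii) For inclusions of the form $\mathrm{U}_{2r}/\mathrm{Sp}_{2r} \hookrightarrow \mathrm{U}_{2r+2}/\mathrm{Sp}_{2r+2}$ and similar real-structure quotients, one uses the fibration with base a Stiefel-type manifold and reads off the connectivity from the dimension of that base. (iii) For $\mathrm{Sp}_{2r}/\mathrm{U}_r \hookrightarrow \mathrm{Sp}_{2r+2}/\mathrm{U}_{r+1}$ and $\mathrm{U}_r/\mathrm{O}_r \hookrightarrow \mathrm{U}_{r+1}/\mathrm{O}_{r+1}$, a similar fibration argument applies with different Stiefel fibers. (iv) For Grassmannian-type inclusions $\mathrm{U}_{p+q}/(\mathrm{U}_p\times\mathrm{U}_q) \hookrightarrow \mathrm{U}_{(p+1)+(q+1)}/(\mathrm{U}_{p+1}\times\mathrm{U}_{q+1})$, and the analogous symplectic and orthogonal Grassmannians, the fibration is $\mathrm{U}_p \times \mathrm{U}_q$-equivariant with fiber a product of two Stiefel manifolds, giving a connectivity of $\min(2p+1,2q+1)$ in the complex case and the parallel bounds $\min(4p+3,4q+3)$ and $\min(p,q)$ for the symplectic and orthogonal Grassmannians.

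For each symmetry class $s$, I would carry out step (a): identify $C_s(n)_0$ and $R_s(n)_0$ from Table~\ref{table:CsRs}, substituting $n = m_s r$; step (b): write down the appropriate fibration over a Stiefel manifold or sphere that carries out the stable inclusion; step (c): read off the connectivity of this fibration's fiber and thus the connectivity of $i_s$ (resp.\ $i_s^{\mathbb{Z}_2}$) from the long exact sequence. Since the $\tau_s$-action is compatible with the symmetric-space structure (it is induced by a Cartan involution in each case), the restrictions $i_s^{\mathbb{Z}_2}$ again fit into fibrations of the same type, just for a different classical group. Assembling these eight cases (together with the two complex classes $A$ and $\AIII$, handled trivially) gives exactly the values of $\nm$ and $\nmZ$ tabulated in the statement.

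The main obstacle is bookkeeping: one must accurately match the inclusion $A \mapsto A \oplus A_0$ of $\IQPV$-subspaces with the correct inclusion of homogeneous spaces, taking care that the base point $A_0 \in R_s(m_s)$ is chosen so that $i_s$ preserves the connected component containing the base point. This is most delicate in the Grassmannian cases (classes $s = 0, 2, 6$), where the disconnected structure parametrized by $p+q = \mathrm{const}$ forces a refinement $r = p+q$ and makes the connectivity depend on the smaller of $p, q$. A clean way to handle this is to check directly at the level of representatives that $i_s(A) = A \oplus A_0$ changes the parameters of the connected component by $(p,q) \mapsto (p+\alpha, q+\beta)$ with $\alpha + \beta = m_s$ determined by the choice of $A_0$, and then to verify that the induced map between fixed connected components is the standard stable inclusion whose connectivity is classical.
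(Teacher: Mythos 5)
Your overall strategy -- reduce Theorem~\ref{theorem:8.1} to connectivity estimates for $i_s$ and $i_s^{\mathbb{Z}_2}$ and then invoke the $\mathbb{Z}_2$-Whitehead Theorem, with path-connectedness of $M$ taking care of the $\pi_0$ issue -- is exactly the paper's, and your case (i) (sphere fibrations $\mathrm{O}_r\hookrightarrow\mathrm{O}_{r+1}\to\mathrm{S}^r$, etc.) coincides with the paper's case (i). Where you diverge is in cases (ii)--(iv): the paper compares the two long exact sequences of $H_r\hookrightarrow G_r\to G_r/H_r$ and $H_{r+1}\hookrightarrow G_{r+1}\to G_{r+1}/H_{r+1}$ via the five-lemma (for bijectivity) and the four-lemma (for the edge surjectivity), carefully splitting into case (ii) (inclusion of $G$ at most as connected as inclusion of $H$, giving $\nmZ=m$) and case (iii) (the reverse, giving $\nmZ=m+1$). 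You instead propose a factorization through an intermediate homogeneous space -- implicitly $G_r/H_r\hookrightarrow G_{r+1}/H_r\to G_{r+1}/H_{r+1}$, where the first map is the fiber inclusion over $G_{r+1}/G_r$ and the second is a fibration with fiber $H_{r+1}/H_r$ -- and reads off connectivity of the composite from the connectivities of the two pieces. This is correct and yields the same table: the first map is exactly as connected as $G_r\hookrightarrow G_{r+1}$, the second is one more connected than $H_r\hookrightarrow H_{r+1}$, and taking $\min$ reproduces the paper's case (ii)/(iii) dichotomy automatically. However, your write-up is vaguer than it should be: you never name the intermediate space $G_{r+1}/H_r$ explicitly, you do not make clear which of the two fibrations is the bottleneck (which is precisely what determines whether $\nmZ$ equals $m$ or $m+1$), and you refer loosely to ``the fibration with base a Stiefel-type manifold'' when in fact one needs \emph{two} fibrations to factor the inclusion. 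The factorization route is arguably slightly cleaner than the five-lemma ladder because it avoids checking commutativity of a large diagram, but it demands the same case-by-case bookkeeping; without that bookkeeping being spelled out, the proposal does not yet establish the exact values of $\nm$ and $\nmZ$ in the table.
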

\begin{remark}
The choice of $p$ and $q$ in the refinement $r=p+q$ amounts to choosing a Fermi energy and thus declaring the number of valence bands to be $p$ and the number of conduction bands to be $q$ (or vice versa).
\end{remark}
\begin{proof}
Since $M$ is path-connected and all maps are base-point preserving, we may replace $C_s(n) = C_s(m_s\, r)$ by its connected component (denoted by $C_s(m_s\, r)_0$ in the table) containing the base point $A_\ast\,$. Then, by applying the $\mathbb{Z}_2$-Whitehead Theorem, we obtain the desired statements provided that $i_s$ is $\nm$-connected and $i_s^{\mathbb{Z}_2}$ is $\nmZ$-connected (after restriction to $C_s(m_s\, r)_0$ resp.\ $C_s(m_s\, r)_0^{\mathbb{Z}_2}$), with numbers $\nm$ and $\nmZ$ that are yet to be determined. The latter is done in the remainder of the proof, where we distinguish between four cases.

\subsection*{Case (i)}

We start with the three rows attributed to case (i) in the tables. These enjoy the property of having Lie groups for their target spaces and we can make use of the following three fiber bundles:
\begin{align*}
    \mathrm{O}_r \hookrightarrow \mathrm{O}_{r+1} &\to \mathrm{O}_{r+1} / \mathrm{O}_r = \mathrm{S}^r , \\
    \mathrm{U}_r \hookrightarrow \mathrm{U}_{r+1} &\to \mathrm{U}_{r+1} / \mathrm{U}_r = \mathrm{S}^{2r+1} , \\
    \mathrm{Sp}_{2r} \hookrightarrow \mathrm{Sp}_{2r+2} &\to \mathrm{Sp}_{2r+2} / \mathrm{Sp}_{2r} = \mathrm{S}^{4r+3} ,
\end{align*}
each of which gives rise to a long exact sequence in homotopy.
By using $\pi_j (\mathrm{S}^d) = 0$ for $j < d$, we infer from these sequences the following values of $d_1$ and $d_2:$
\begin{alignat*}{2}
    \nmZ &= r-1 \quad &&\text{for} \quad \mathrm{O}_r \hookrightarrow \mathrm{O}_{r+1} \,, \\
    \nm  &= 2r \quad &&\text{for} \quad \mathrm{U}_r \hookrightarrow \mathrm{U}_{r+1} \,, \\
    \nmZ &= 4r+2 \quad &&\text{for} \quad \mathrm{Sp}_{2r} \hookrightarrow \mathrm{Sp}_{2r+2} \,.
\end{alignat*}

For the next two cases, (ii) and (iii), the target spaces are quotients $G_r / H_r$ with $G_r$ and $H_r$ being either an orthogonal, a unitary, or a symplectic group. The strategy in the following will be to apply the result of case (i) to the exact sequence associated to the fiber bundle
\begin{align*}
    H_r \hookrightarrow G_r \to G_r / H_r \,.
\end{align*}
We distinguish between case (ii) where the inclusion $G_r \hookrightarrow G_{r+1}$ is at most as connected as the inclusion $H_r \hookrightarrow H_{r+1}$, and case (iii) where it is more connected.

\subsection*{Case (ii)}

Let $G_r \hookrightarrow G_{r+1}$ be $m$-connected, where $m$ is less than or equal to the connectivity of $H_r \hookrightarrow H_{r+1}$. Then for all $j\in\mathbb{N}$ with $1\le j\le m-1$ there is the following commutative diagram:

\begin{center}
\begin{tikzcd}[column sep=0.85em]
\pi_{j}(H_r)  \arrow{r} \arrow{d}{\simeq}
& \pi_{j}(G_r)  \arrow{r} \arrow{d}{\simeq}
& \pi_{j}(G_r/H_r)  \arrow{r} \arrow{d}{(i_s^{\mathbb{Z}_2})_\ast}
& \pi_{j-1}(H_r)  \arrow{r} \arrow{d}{\simeq}
& \pi_{j-1}(G_r)  \arrow{d}{\simeq}\\
\pi_{j}(H_{r+1})  \arrow{r}
& \pi_{j}(G_{r+1})  \arrow{r}
& \pi_{j}(G_{r+1}/H_{r+1})  \arrow{r}
& \pi_{j-1}(H_{r+1})  \arrow{r}
& \pi_{j-1}(G_{r+1})
\end{tikzcd}
\end{center}

\noindent By the five-lemma $(i_s^{\mathbb{Z}_2} )_\ast$ is an isomorphism for all $j$ with $1\le j\le m-1$. The map $(i_s^{\mathbb{Z}_2})_\ast: \pi_0 (G_r/H_r) \to \pi_0 (G_{r+1}/H_{r+1})$ needs to be investigated separately. This task is facilitated by the fact that domain or codomain contain more than one element only for $G_r/H_r = \mathrm{O}_{2r} / \mathrm{U}_r$ ($s=0$). In that case, $\pi_0(\mathrm{O}_{2r}/\mathrm{U}_r) = \mathbb{Z}_2 = \pi_0(\mathrm{O}_{2r+2}/\mathrm{U}_{r+1})$. In the realization of $\mathrm{O}_{2r}/\mathrm{U_r}$ as an orbit, all elements $A \in \mathrm{O}_{2r}/\mathrm{U}_r$ may be written as $A = gA_\ast$ for a fixed $A_\ast \in \mathrm{O}_{2r}/\mathrm{U}_r$ and $g \in \mathrm{O}_{2r} \,$. The two connected components are distinguished by $\det(g) = \pm 1$ and we compute
\begin{align}
    i_0^{\mathbb{Z}_2}(A) = i_0^{\mathbb{Z}_2}(gA_\ast) = gA_\ast\oplus A_0 = (g\oplus\mathrm{Id})(A_\ast\oplus A_0) .
\end{align}
Since $\det(g\oplus\mathrm{Id}) = \det(g)$, it follows that the map $(i_0^{\mathbb{Z}_2})_\ast$ is a bijection on $\pi_0\,$.

By considering the part further left in the long exact sequences, we obtain the commutative diagram

\begin{center}
\begin{tikzcd}
\pi_{m}(G_r)  \arrow{r} \arrow{d}{\text{surjective}}
& \pi_{m}(G_r/H_r)  \arrow{r} \arrow{d}{(i_s^{\mathbb{Z}_2})_\ast}
& \pi_{m-1}(H_r)  \arrow{r} \arrow{d}{\simeq}
& \pi_{m-1}(G_r)  \arrow{d}{\simeq}\\
\pi_{m}(G_{r+1})  \arrow{r}
& \pi_{m}(G_{r+1}/H_{r+1})  \arrow{r}
& \pi_{m-1}(H_{r+1})  \arrow{r}
& \pi_{m-1}(G_{r+1})
\end{tikzcd}
\end{center}

\noindent Here, the second four-lemma leads to the conclusion that $(i_s^{\mathbb{Z}_2})_\ast$ is surjective. Combining all results, it follows that the inclusion $i_s^{\mathbb{Z}_2}$ is $m$-connected. Hence, $\nmZ = m$.

\subsection*{Case (iii)}

Consider now the complementary case, where $H_r \hookrightarrow H_{r+1}$ is $m$-connected with $m$ less than the connectivity of $G_r \hookrightarrow G_{r+1}$. We again use parts of the long exact sequence associated to the bundle $H_r \hookrightarrow G_r \to G_r / H_r$ in order to determine the connectivity of the inclusion $i_s^{\mathbb{Z}_2}$. Similar to the previous case, consider the following commutative diagram for $1\le j\le m$:

\begin{center}
\begin{tikzcd}[column sep=small]
\pi_{j}(H_r)  \arrow{r} \arrow{d}{\text{surjective}}
& \pi_{j}(G_r)  \arrow{r} \arrow{d}{\simeq}
& \pi_{j}(G_r/H_r)  \arrow{r} \arrow{d}{(i_s^{\mathbb{Z}_2})_\ast}
& \pi_{j-1}(H_r)  \arrow{r} \arrow{d}{\simeq}
& \pi_{j-1}(G_r)  \arrow{d}{\simeq}\\
\pi_{j}(H_{r+1})  \arrow{r}
& \pi_{j}(G_{r+1})  \arrow{r}
& \pi_{j}(G_{r+1}/H_{r+1})  \arrow{r}
& \pi_{j-1}(H_{r+1})  \arrow{r}
& \pi_{j-1}(G_{r+1})
\end{tikzcd}
\end{center}

\noindent Again, by the five-lemma $(i_s^{\mathbb{Z}_2})_\ast$ is an isomorphism for all $j$ with $1\le j\le m$. Notice that a difference to the previous case is the fact that the leftmost vertical map is only surjective. The extension to $j=0$, where the diagram above is not defined, is trivial here since all spaces involved are path-connected. Further to the left in the exact sequence we find the commutative diagram

\begin{center}
\begin{tikzcd}
\pi_{m+1}(G_r)  \arrow{r} \arrow{d}{\simeq}
& \pi_{m+1}(G_r/H_r)  \arrow{r} \arrow{d}{(i_s^{\mathbb{Z}_2})_\ast}
& \pi_{m}(H_r)  \arrow{r} \arrow{d}{\text{surjective}}
& \pi_{m}(G_r)  \arrow{d}{\simeq}\\
\pi_{m+1}(G_{r+1})  \arrow{r}
& \pi_{m+1}(G_{r+1}/H_{r+1})  \arrow{r}
& \pi_{m}(H_{r+1})  \arrow{r}
& \pi_{m}(G_{r+1})
\end{tikzcd}
\end{center}

\noindent The second four-lemma again implies that $(i_s^{\mathbb{Z}_2})_\ast$ is surjective. Therefore, in this case $i_s^{\mathbb{Z}_2}$ is $(m+1)$-connected, leading to the result that $\nmZ = m + 1$.

\subsection*{Case (iv)}

In the remaining three rows of the table, the target space has the form of a quotient $G_{p+q} / G_p \times G_q\,$. For the product of any two spaces $Y$ and $Z$, one has a natural isomorphism \cite{Hatcher-AT}
\begin{equation}
    \pi_j(Y\times Z) \simeq \pi_j(Y) \times \pi_j(Z)
\end{equation}
for all $j \ge 0$. Setting $Y = G_p$ and $Z = G_q$ and using this isomorphism, it follows that the inclusions $G_p\hookrightarrow G_{p+1}$ and $G_q \hookrightarrow G_{q+1}$ give rise to a commutative diagram

\begin{center}
\begin{tikzcd}
\pi_{j}(G_p\times G_q)  \arrow{r} \arrow{d}{\simeq}
& \pi_{j}(G_{p+1}\times G_{q+1}) \arrow{d}{\simeq}\\
\pi_{j}(G_p)\times\pi_j(G_q)  \arrow{r}
& \pi_{j}(G_{p+1})\times\pi_j(G_{q+1})
\end{tikzcd}
\end{center}
\noindent Hence, if $G_p\hookrightarrow G_{p+1}$ is $m$-connected and $G_q \hookrightarrow G_{q+1}$ $m^\prime$-connected, then $G_p \times G_q \hookrightarrow G_{p+1}\times G_{q+1}$ is $\min(m,m^\prime)$-connected. In particular, excluding the trivial case where $p=0$ or $q=0$, the inclusion $G_p \times G_q \hookrightarrow G_{p+1}\times G_{q+1}$ is always less connected than $G_{p+q} \hookrightarrow G_{p+q+2}$ and we can follow the steps of case (iii) with $H_r$ replaced by $G_p \times G_q\,$. As a result, $\nm = \min(m,m^\prime) + 1 = \min(m+1,m^\prime+1)$ (and the same for $\nmZ$). This completes the determination of $d_1$ and $d_2$ and, hence, the proof of the theorem.
\end{proof}

Specializing to the physically most relevant case of $M = \mathrm{S}^{d_x ,\, d_k}$, we obtain
\begin{corollary}\label{cor:8.1}
The induced map
\begin{equation*}
    (i_s)_\ast:\; [\mathrm{S}^{d_x ,\, d_k} , C_s(n)]_\ast^{\mathbb{Z}_2} \to [\mathrm{S}^{d_x ,\, d_k}, C_s (n+m_s)]_\ast^{\mathbb{Z}_2}
\end{equation*}
is bijective if $1 \le d_x + d_k < \nm$ and $d_x < \nmZ$ and surjective if $1 \le d_x + d_k \le \nm$ and $d_x \le \nmZ$.
\end{corollary}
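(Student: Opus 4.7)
The plan is to derive Corollary~\ref{cor:8.1} as a direct specialisation of Theorem~\ref{theorem:8.1} to the $\mathbb{Z}_2$-space $M = \mathrm{S}^{d_x,\,d_k}$; the work reduces to verifying the three hypotheses of that theorem (path-connectedness, $\mathbb{Z}_2$-CW structure, correct dimension counts) and then translating its conclusion.

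First, I would realise $\mathrm{S}^{d_x,\,d_k}$ concretely as the unit sphere in $\mathbb{R}^{d_x+1} \oplus \mathbb{R}^{d_k}$ with $\mathbb{Z}_2$ acting by $-\mathrm{Id}$ on the second summand. This immediately yields $\dim M = d_x + d_k$ and identifies the fixed-point set $M^{\mathbb{Z}_2}$ with the unit sphere in $\mathbb{R}^{d_x+1}$, namely $\mathrm{S}^{d_x}$, so that $\dim M^{\mathbb{Z}_2} = d_x$. Under the standing hypothesis $d_x + d_k \ge 1$, the space $M$ is path-connected. A standard $\mathbb{Z}_2$-CW structure on $M$, in which $M^{\mathbb{Z}_2}$ is a subcomplex, can be built iteratively from the base case $\mathrm{S}^{0,0}$ (two $\mathbb{Z}_2$-fixed points) using the suspension identities $S\, \mathrm{S}^{d_x,\,d_k} = \mathrm{S}^{d_x+1,\,d_k}$ and $\tilde{S}\, \mathrm{S}^{d_x,\,d_k} = \mathrm{S}^{d_x,\,d_k+1}$: each step attaches either a pair of position-type cells (fixed by $\mathbb{Z}_2$, joining $M^{\mathbb{Z}_2}$) or a pair of momentum-type cells permuted freely by $\mathbb{Z}_2$.

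With these hypotheses in hand, I would then invoke Theorem~\ref{theorem:8.1} verbatim. Its bijectivity criterion $\dim M < \nm$ and $\dim M^{\mathbb{Z}_2} < \nmZ$ becomes $d_x + d_k < \nm$ and $d_x < \nmZ$; its surjectivity criterion with non-strict inequalities translates in the same way. Taken together with the standing lower bound $d_x + d_k \ge 1$ needed to ensure path-connectedness of $M$, these are precisely the hypotheses stated in the corollary, and the bijection (respectively, surjection) of $(i_s)_\ast$ follows.

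The main obstacle is conceptually nil: the deep content, namely the connectivity estimates for the stable inclusions $i_s$ and $i_s^{\mathbb{Z}_2}$ and the appeal to the $\mathbb{Z}_2$-Whitehead theorem, has already been invested in Theorem~\ref{theorem:8.1}. The only subtle point worth flagging is the degenerate case $d_x = d_k = 0$, which is explicitly excluded by the assumption $d_x + d_k \ge 1$: here $\mathrm{S}^{0,0}$ fails to be path-connected, and in accord with the remark following Theorem~\ref{thm:7.2} the induced map would be only injective in general, not bijective.
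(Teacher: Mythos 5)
Your proposal is correct and takes the same route as the paper, which states Corollary~\ref{cor:8.1} as an immediate specialization of Theorem~\ref{theorem:8.1} to $M = \mathrm{S}^{d_x,\,d_k}$ without further proof. You simply make explicit the routine verifications the paper leaves tacit: the identifications $\dim M = d_x + d_k$ and $\dim M^{\mathbb{Z}_2} = d_x$, path-connectedness for $d_x + d_k \ge 1$, and the $\mathbb{Z}_2$-CW structure (which the paper already flags in Section~\ref{sect:5} as holding for spheres with coordinatewise $\mathbb{Z}_2$-action).
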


Once the conditions for $(i_s)_\ast$ to be bijective are met, we are in the \textit{stable regime} mentioned in Section \ref{sect:equiv-class}. In that case, Theorem \ref{theorem:8.1} can be applied repeatedly to give a bijection
\begin{equation*}
    [M,C_s(n)]_\ast^{\mathbb{Z}_2} \to [M,C_s(\infty)]_\ast^{\mathbb{Z}_2} \,,
\end{equation*}
where $C_s(\infty)$ is the direct limit under $i_s\,$. This is the limit where $K$-theory applies for arbitrary path-connected $\mathbb{Z}_2$-CW complexes $M$ of finite dimension. For example, taking the complex class $A$ (even $s$ and trivial $\mathbb{Z}_2$-actions), the right-hand side is often written as $[M,\mathrm{BU}]_\ast$ and is in bijection with $\tilde K_\mathbb{C}(M)$.

Given a configuration space $M$, Theorem \ref{theorem:8.1} spells out the exact boundary to the stable regime of $K$-theory. However, as discussed in Section \ref{sect:equiv-class}, on the unstable side there is a further distinction in some symmetry classes between homotopy classes and isomorphism classes of vector bundles. This is the case for the real symmetry classes $s = 2$ (class $\AII$) and $s = 6$ (class $\AI$) as well as the complex symmetry class with even $s$ (class $A$), all three of which have been handled in case (iv) in the proof of Theorem \ref{theorem:8.1}. In these symmetry classes, there is a $\mathrm{U}_1$-symmetry leading to a decomposition of the fibers $A_k \in C_s(n)$ ($k\in M$) as $A_k = A^\mathrm{p}_k \oplus A^\mathrm{h}_k$, where $\mathrm{p}$ stands for particles or conduction bands and $\mathrm{h}$ for holes or valence bands. Recall from Section \ref{sect:2.3.1} that $A_k$ is already determined by $A^\mathrm{h}_k$. The bundle with fiber $A^\mathrm{h}_k$ over $k \in M$ is a Quaternionic vector bundle in the sense of \cite{dupont} (class $\AII$), a Real vector bundle in the sense of \cite{atiyah} (class $\AI$) or an ordinary complex vector bundle (class $A$) over $M$. In \cite{deNittis-AI} and \cite{deNittis-AII}, these vector bundles have been classified up to isomorphism for $M = \mathrm{S}^{d_x ,\, d_k}$ with $d_k \le 4$ and $d_x \le 1$. However, as was emphasized in Section \ref{sect:equiv-class}, in the situation at hand, where we have \textit{sub}vector bundles, isomorphism classes agree with homotopy classes only when $\dim A^\mathrm{p}_k$ is large compared to $\dim M$ and $\dim M^{\mathbb{Z}_2}$. It is the goal of the following to specify precisely what is meant by ``large'' in each of the three distinguished symmetry classes ($A$, $\AI$, $\AII$).

The inclusion $i_s$ adds dimensions to both $A^{\rm h}$ and $A^{\rm p}$, corresponding to the addition of valence bands \textit{and} conduction bands. This increases $p$ to $p+1$ and $q$ to $q+1$, as was considered in case (iv) of Theorem \ref{theorem:8.1} above. This inclusion can be refined by two separate inclusions: given a fixed $A_0 = A_0^{\rm p} \oplus A_0^{\rm h} \in C_s(m_s)$, one may add valence bands,
\begin{equation}
    i_s^{\rm h} :\; C_s(n) \hookrightarrow C_s (n+m_s/2) , \quad A \mapsto A \oplus A_0^{\rm h} \,,
\end{equation}
or conduction bands,
\begin{equation}
    i_s^{\rm p} :\; C_s(n) \hookrightarrow C_s (n + m_s/2) , \quad
    A \mapsto A \oplus A_0^{\rm p} \,.
\end{equation}
Since the situation is entirely symmetric, we will focus on $i_s^{\rm p}$ for the remainder of this section. In the realization of $C_s(n)$ and $R_s(n)$ as (unions of) homogeneous spaces, we have (restricting to one connected component as in Theorem \ref{theorem:8.1})
\begin{equation}\label{eq:conduction-band-inclusion}
    \begin{split}
    i_2^\mathrm{p} : \mathrm{U}_{2p+2q}/\mathrm{U}_{2p}\times \mathrm{U}_{2q}&\hookrightarrow \mathrm{U}_{2p+2q+2}/\mathrm{U}_{2p}\times \mathrm{U}_{2q+2}, \\
    (i_2^\mathrm{p})^{\mathbb{Z}_2} : \mathrm{Sp}_{2p+2q}/\mathrm{Sp}_{2p}\times \mathrm{Sp}_{2q}&\hookrightarrow \mathrm{Sp}_{2p+2q+2}/\mathrm{Sp}_{2p}\times \mathrm{Sp}_{2q+2},\\
    i_6^\mathrm{p} : \mathrm{U}_{p+q}/\mathrm{U}_{p}\times \mathrm{U}_{q}&\hookrightarrow \mathrm{U}_{p+q+1}/\mathrm{U}_{p}\times \mathrm{U}_{q+1}, \\
    (i_6^\mathrm{p})^{\mathbb{Z}_2} : \mathrm{O}_{p+q}/\mathrm{O}_{p}\times \mathrm{O}_{q}&\hookrightarrow \mathrm{O}_{p+q+1}/\mathrm{O}_{p}\times \mathrm{O}_{q+1}.
    \end{split}
\end{equation}
Note that the complex class $A$ may be included in this treatment by taking the inclusion $i_6^{\rm p}$ with $\mathbb{Z}_2$-action ignored.

All of these maps have the form
\begin{equation}
    G_{p+q} / G_p \times G_q \hookrightarrow G_{p+q+1} / G_p \times G_{q+1} \,.
\end{equation}
Since the inclusion $G_{p+q} \hookrightarrow G_{p+q+1}$ (for $p > 0$) is always more connected than the inclusion $G_q \hookrightarrow G_{q+1}\,$, we find ourselves in the setting of case (iii) in the proof of Theorem \ref{theorem:8.1}. Thus, if $G_{q}\hookrightarrow G_{q+1}$ is $m$-connected, then the inclusion $i_s^\mathrm{p}$ is $(m+1)$-connected, independent of the parameter $p$. Using the $\mathbb{Z}_2$-Whitehead Theorem once more, we can now prove the following:
\begin{corollary}\label{cor:8.2}
For a path-connected $\mathbb{Z}_2$-CW complex $M$, the induced map adding a conduction band,
\begin{equation*}
    (i_s^{\rm p})_\ast :\; [M , C_s(n)]_\ast^{\mathbb{Z}_2} \to [M , C_s(n + m_s/2)]_\ast^{\mathbb{Z}_2}\,,
\end{equation*}
is bijective or surjective according to the following table:
\begin{center}
\begin{tabular}{l|c|c}
            &bijective	     &surjective \\
\hline
class $A$   &$\dim M < 2q+1$    &$\dim M \le 2q+1$ \\
class $\AI$  &$\dim M < 2q+1$ and $\dim M^{\mathbb{Z}_2} < q$	
            &$\dim M \le 2q+1$ and $\dim M^{\mathbb{Z}_2} \le q$ \\
class $\AII$	&$\dim M < 4q+3$	 &$\dim M \le 4q+3$ \\
\end{tabular}
\end{center}
\end{corollary}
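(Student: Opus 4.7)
The approach is to apply the $\mathbb{Z}_2$-Whitehead Theorem to the base-point preserving and $\mathbb{Z}_2$-equivariant inclusion $i_s^{\rm p}$; the new work is only to compute the connectivities of $i_s^{\rm p}$ and of its restriction $(i_s^{\rm p})^{\mathbb{Z}_2}: R_s(n)\hookrightarrow R_s(n+m_s/2)$ to the $\mathbb{Z}_2$-fixed parts. Inspection of Eq.\ (\ref{eq:conduction-band-inclusion}) shows that each of the six maps to be analysed is a quotient inclusion of the form
\begin{equation*}
    G_{p+q}/(G_p\times G_q)\;\hookrightarrow\; G_{p+q+1}/(G_p\times G_{q+1}),\qquad G\in\{\mathrm{O},\mathrm{U},\mathrm{Sp}\},
\end{equation*}
in which only the second factor of the stabilizer grows while the first is left untouched.

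The main step is to reduce each such quotient inclusion to case (iii) of the proof of Theorem \ref{theorem:8.1}. Because the stabilizer inclusion $G_p\times G_q\hookrightarrow G_p\times G_{q+1}$ is the identity on the first factor times $G_q\hookrightarrow G_{q+1}$ on the second, its connectivity equals that of the single-factor inclusion $G_q\hookrightarrow G_{q+1}$; and the values from case (i) of Theorem \ref{theorem:8.1} show that this connectivity is always strictly less than the connectivity of the ambient inclusion $G_{p+q}\hookrightarrow G_{p+q+1}$. The five-lemma applied to the morphism of long exact homotopy sequences of the two fibrations $G_p\times G_q\hookrightarrow G_{p+q}\to G_{p+q}/(G_p\times G_q)$ then yields, exactly as in case (iii), that our quotient inclusion is $(m+1)$-connected, where $m$ is the connectivity of $G_q\hookrightarrow G_{q+1}$. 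Substituting the standard values $m=2q$ for $\mathrm{U}_q\hookrightarrow \mathrm{U}_{q+1}$, $m=q-1$ for $\mathrm{O}_q\hookrightarrow \mathrm{O}_{q+1}$, and $m=4q+2$ for $\mathrm{Sp}_{2q}\hookrightarrow\mathrm{Sp}_{2q+2}$ produces the bounds $\nm$ and $\nmZ$ displayed in the table.

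Feeding these connectivities into the $\mathbb{Z}_2$-Whitehead Theorem immediately delivers the asserted bijectivity and surjectivity criteria. As in the proof of Proposition \ref{thm:6.1}, path-connectedness of $M$ handles the potential obstruction coming from $\pi_0$: the image of any base-point preserving map lies inside a single connected component of the target, so only the component pinned down by the base point (corresponding to a fixed splitting $n=p+q$ into valence and conduction blocks) is relevant, and the parameter $q$ in the table is thereby well-defined. The main point of care is this bookkeeping of components across the three classes $A$, $\AI$, $\AII$; the rest of the argument is entirely parallel to the proof of Theorem \ref{theorem:8.1} and requires no new ingredient.
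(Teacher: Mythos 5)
Your proposal is correct and follows essentially the same route as the paper: reduce to case (iii) of Theorem \ref{theorem:8.1} by observing that the stabilizer inclusion $G_p\times G_q\hookrightarrow G_p\times G_{q+1}$ has the connectivity of the single-factor step $G_q\hookrightarrow G_{q+1}$, conclude that $i_s^{\rm p}$ and $(i_s^{\rm p})^{\mathbb{Z}_2}$ are $(m+1)$-connected, feed these connectivities into the $\mathbb{Z}_2$-Whitehead Theorem, and handle the $\pi_0$-issue via path-connectedness of $M$ exactly as in Proposition \ref{thm:6.1}. One minor point you leave implicit: the assertion that the ambient inclusion $G_{p+q}\hookrightarrow G_{p+q+1}$ is strictly more connected than $G_q\hookrightarrow G_{q+1}$ requires $p>0$, as the paper notes in the paragraph preceding the Corollary.
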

\begin{proof}
The proof is analogous to that of Theorem \ref{theorem:8.1}. For class $A$, the fact that $i_6^{\rm p}$ is $(2q+1)$-connected leads to the result. Proceeding to class $\AI$, we have a non-trivial $\mathbb{Z}_2$-action and therefore the additional requirement on $\dim M^{\mathbb{Z}_2}$ due to the fact that $(i_6^{\rm p})^{\mathbb{Z}_2}$ is $q$-connected. For class $\AII$, there is a slight change in the requirement for $\dim M$ because of the factor of two in the indices ($q\to 2q$, see Eq.\ \eqref{eq:conduction-band-inclusion}). Furthermore, since $(i_2^{\rm p})^{ \mathbb{Z}_2}$ is $(4q+3)$-connected while $i_2^{\rm p}$ is only $(4q+1)$-connected, the additional requirement on $\dim M^{\mathbb{Z}_2}$ is always fulfilled owing to $\dim M^{\mathbb{Z}_2} \le \dim M$.
\end{proof}

For $M = \mathrm{S}^{d_x ,\, d_k}$, the table in the Corollary simplifies to the following:
\begin{center}
\begin{tabular}{l|c|c}
            &bijective					 &surjective\\\hline
class $A$	&$d_x + d_k < 2q+1$	         &$d_x+d_k<2q+1$\\
class $\AI$	&$d_x + d_k < 2q+1$ and $d_x < q$				
            &$d_x + d_k \le 2q+1$ and $d_x \le q$ \\
class $\AII$	&$d_x+d_k<4q+3$	             &$d_x + d_k \le 4q+3$ \\
\end{tabular}
\end{center}

Notice the difference to the result in Theorem \ref{theorem:8.1}: rather than requiring both $p$ and $q$ to be large, only one of the two indices is required to be large. In fact, if the configuration space $M$ meets the conditions for bijectivity as listed above, the set of (equivariant) homotopy classes is in bijection with the set of isomorphism classes of rank-$p$ complex vector bundles (class $A$), rank-$p$ Real vector bundles (class $\AI$) and rank-$2p$ Quaternionic vector bundles (class $\AII$) with fixed fibers over the base point $k_\ast \in M$. Thus, we have derived the exact boundary, within the unstable regime, below which isomorphism classes of vector bundles may differ from homotopy classes.
\begin{remark}
The restrictive condition of holding the fiber over $k_\ast \in M$ fixed can be removed by applying the free version \cite{TomDieck} (instead of the one with fixed base points) of the $\mathbb{Z}_2$-Whitehead Theorem for a connected component of $C_s(n)$. The alternative offered by this free version is of relevance for the case of $M = \mathrm{T}^d$ \cite{KG14}, where fixing a base point may be unnatural from the physics perspective.
\end{remark}

We now list all potentially unstable cases violating the conditions of bijectivity in Corollary \ref{cor:8.1} and Corollary \ref{cor:8.2}. There are infinitely many possibilities in general if $d_x$ and $d_k$ are unrestricted. However, the physically most relevant cases are those with $d_k \le 3$ and $d_x < d_k\,$. The latter inequality is needed on physical grounds since the dimension of the defect is $d_k-d_x-1 \ge 0$. Table \ref{table:unstable} lists all cases which are not in the stable regime and may therefore differ from the stable classification.
\begin{table}
\begin{center}
\begin{tabular}{|c|c|c|c|c||c|c||c|}
\hline
complex 	& symmetry	 &\multicolumn{3}{c||}{$d_x=0$}&\multicolumn{2}{c||}{$d_x=1$}&\multicolumn{1}{c|}{$d_x=2$}\\\cline{3-8}
class $s$	& label	& $d_k=1$ & $d_k=2$ & $d_k=3$& $d_k=2$ & $d_k=3$&	 $d_k=3$\\\hline
even	& $A$		&&&$q=1$&$q=1$&$q=1$&$q\le2$\\
odd	& $\AIII$	&&$r=1$&$r=1$&$r=1$&$r\le2$&$r\le2$\\\hline
\multicolumn{7}{c}{}\\\hline
real		& symmetry	 &\multicolumn{3}{c||}{$d_x=0$}&\multicolumn{2}{c||}{$d_x=1$}&\multicolumn{1}{c|}{$d_x=2$}\\\cline{3-8}
class $s$	& label	& $d_k=1$ & $d_k=2$ & $d_k=3$& $d_k=2$ & $d_k=3$&	 $d_k=3$\\\hline
0 & $D$	&&&$r=1$&$r=1$&$r=1$&$r\le2$\\
1 & $\DIII$	&&&&&$r=1$&$r=1$\\
2 & $\AII$	&&&&&&\\
3 & $\CII$	&&&&&$r=1$&$r=1$\\
4 & $C$	&&&$r=1$&$r=1$&$r=1$&$r\le2$\\
5 & $\CI$	&&$r=1$&$r=1$&$r=1$&$r\le2$&$r\le2$\\
6 & $\AI$	&&&$q=1$&$q=1$&$q=1$&$q\le2$\\
7 & $\BDI$	&$r=1$&$r=1$&$r=1$&$r\le2$&$r\le2$&$r\le3$\\\hline
\end{tabular}
\vspace{10pt}
\caption{All potentially unstable cases for $d_k\le 3$ and $d_x < d_k\,$.} \label{table:unstable}
\end{center}
\end{table}

In Table \ref{table:unstable}, the cases in which isomorphism classes of vector bundles give the same classification as homotopy classes are included. In order to leave this intermediate regime (i.e.\ in order for the classification by homotopy to differ from that by isomorphism), the conditions for $q$ need to be met additionally by $p$. For instance, neither the stable classification nor the classification of complex vector bundles give any non-trivial topological phases for $d_k + d_x = 3$ in class $A$, but the Hopf insulator \cite{MRW} with $q = p = 1$ has a homotopy classification by $\mathbb{Z}$. It may also happen that non-trivial phases disappear in the unstable regime: in class $\AIII$ with $d_k + d_x = 3$, the stable $\mathbb{Z}$ classification is lost for $r = 1$ since $[\mathrm{S}^{d_x ,\, d_k},\mathrm{U}_1]_\ast = \pi_3 (\mathrm{U}_1) = 0$.

For $d_x = 0$, there is at most one exception for all entries which is neither in the stable regime nor in the ``intermediate'' regime of vector bundle isomorphism classes (since for the latter $p = q = 1$). The resulting change of the classification is shown in Table \ref{table:unstablechanges}. The changes in the first two rows for $d_k = 3$ are the ones described before. There are only two additional changes in the remainder of the table: for $s = 5$ (class $\CI$) all non-trivial topological phases vanish in dimension $d_k = 3$ for similar reasons as in class $\AIII$. However, there is an important change for $s = 4$ (class $C$) from trivial ($0$) to non-trivial ($\mathbb{Z}_2$) by a superconducting analog of the class-$A$ Hopf insulator. We are planning to discuss this class-$C$ Hopf superconductor in more detail in a future publication.

\begin{table}
\begin{center}
\begin{tabular}{c|c|c|c}
&\multicolumn{3}{|c}{$d_x=0$}\\\hline
Class	& $d_k=1$ & $d_k=2$ & $d_k=3$\\\hline
$A$	&&&$0\to\mathbb{Z}$\\
$\AIII$	&&$0\to0$&$\mathbb{Z}\to0$\\
\multicolumn{4}{c}{}\\
&\multicolumn{3}{|c}{$d_x=0$}\\\hline
Class $s$	& $d_k=1$ & $d_k=2$ & $d_k=3$\\\hline
0	&&&$0\to0$\\
1	&&&\\
2	&&&\\
3	&&&\\
4	&&&$0\to\mathbb{Z}_2$\\
5	&&$0\to0$&$\mathbb{Z}\to0$\\
6	&&&$0\to0$\\
7	&$\mathbb{Z}\to\mathbb{Z}$&$0\to0$&$0\to0$
\end{tabular}
\end{center}
\vspace{10pt}
\caption{Instances of potential changes to the stable classification in Table \ref{table:periodictable} which are captured neither by $K$-theory nor by isomorphism classes of vector bundles. Entries here are for the case of $r = q = 1$ in Table \ref{table:unstable}.} \label{table:unstablechanges}
\end{table}

\section{Appendix: proof of Proposition \ref{prop:2.1-new}}
\setcounter{equation}{0}

Recall the mathematical setting of $s \geq 4$ pseudo-symmetries $J_1, \ldots, J_s$ constraining the vector spaces $A_k$ by Eqs.\ (\ref{eq-mz:2.32}). We must show that the solutions $A_k$ of (\ref{eq-mz:2.32}) are in bijection with the solutions $a_k$ of Eqs.\ (\ref{eq:red-psym}) for the reduced system of generators $j_1, j_2, j_5, \ldots, j_s\,$.

Thus, let there be on $\widetilde{W} = \mathbb{C}^2 \otimes W$ a set of $s \geq 4$ orthogonal unitary operators $J_1, \ldots, J_s$ subject to the relations (\ref{Eq:Cliff-s}). Forming the two operators
\begin{equation*}
    K = \mathrm{i} J_1 J_2 J_3\, , \quad I = J_4 \,,
\end{equation*}
where $K$ is seen to be imaginary, let the shortened system $J_5, \ldots, J_s, I , K$ define complex and real classifying spaces $C_{s-2}(2n)$ and $R_{s-3,1}(2n)$ by the exact analog of Eqs.\ (\ref{eq:def-Cq2}) and (\ref{eq:def-Rq2}) with $s$ replaced by $s-4$. We then know from Proposition \ref{prop:2.1X} that there exist bijections
\begin{equation*}
    C_{s-2}(2n) \to C_{s-4}(n), \quad R_{s-3,1}(2n) \to R_{s-4}(n), \quad A_k \mapsto a_k \,,
\end{equation*}
which are given by intersecting $A_k$ with $E_{+1}(L)$ for $L \equiv J_1 J_2 J_3 J_4$ and applying the projector $\Pi = \frac{1}{2}(\mathrm{Id} - \mathrm{i} K)$ to obtain $a_k\,$. The spaces on the right-hand side are determined by Eqs.\ (\ref{eq:def-Cq}) and (\ref{eq:def-Rq}) via the system $j_l = L\, J_l \big\vert_W$ ($l = 4, \ldots, s$) defined as in (\ref{eq:4.15-mz}). Note that the restricted generators $j_l$ ($5 \leq l \leq s$) satisfy the third set of relations in (\ref{eq:red-CliffAlg}).

It remains to take into account the presence of the additional generators $J_1$, $J_2$, and $J_3\,$. These commute with $K = \mathrm{i} J_1 J_2 J_3$ and thus preserve the decomposition $\widetilde{W} = W_+ \oplus W_- = E_{+\mathrm{i}} (K) \oplus E_{-\mathrm{i}}(K)$. Simply restricting them to the subspace $W = W_+$ as
\begin{equation*}
    j_l = J_l \big\vert_W \quad (1 \leq l \leq 3) ,
\end{equation*}
we obtain the relations stated in the first and second line of Eqs.\ (\ref{eq:red-CliffAlg}). We also observe that the process of reduction to $W$ makes $j_3$ and $j_4$ redundant as $j_3 = j_2 j_1$ and $j_4 = -\mathrm{Id}_W$.

To prove Proposition \ref{prop:2.1-new}, we have to show that the conditions on $A_k$ due to the pseudo-symmetries $J_1, J_2$ are equivalent to the conditions on $a_k$ due to the symmetries $j_1, j_2\,$. The key observation here is that the pseudo-symmetry relations $J_l \, A_k = A_k^\mathrm{c}$ for $l = 1, 2, 3$ have the following refinement:
\begin{equation*}
    J_l (A_k \cap E_{+1}(L)) = A_k^\mathrm{c} \cap E_{-1}(L) \quad (1 \leq l \leq 3) ,
\end{equation*}
because $J_1$, $J_2$, $J_3$ anti-commute with $L = J_1 J_2 J_3 J_4$ and hence exchange the two eigenspaces $E_{+1}(L)$ and $E_{-1}(L)$. By applying the projector $\Pi = {\textstyle{\frac{1}{2}}} (\mathrm{Id} - \mathrm{i} K)$ to this equation in order to distill $a_k = \Pi (A_k \cap E_{+1}(L))$, it follows from Corollary \ref{cor:2.1} that
\begin{equation*}
    j_l \, a_k = (\Pi \circ J_l) (A_k \cap E_{+1}(L)) = \Pi (A_k^\mathrm{c} \cap E_{-1}(L)) = a_k \quad (1 \leq l \leq 3) ,
\end{equation*}
 owing to the fact that the operators $J_l$ ($l = 1, 2, 3$) preserve the decomposition $W = W_+ \oplus W_-\,$.

Conversely, the conditions $j_l \, a_k = a_k$ transform into the conditions $J_l \, A_k = A_k^\mathrm{c}$ ($l = 1, 2, 3$) by the inverse map $a_k \mapsto A_k$ given in (\ref{eq-mz:2.31}). This proves the said proposition.

\bigskip\noindent\textbf{Acknowledgment.} ---
Financial support by the Deutsche Forschungsgemeinschaft via the Sonderforschungsbereich/Transregio 12 is acknowledged. The senior author is supported by DFG grant ZI 513/2-1, the junior author by a scholarship of the Deutsche Telekom Stiftung and a stipend of the Bonn-Cologne Graduate School of Physics \& Astronomy. Both authors are grateful for the warm hospitality of the Erwin-Schr\"odinger International Institute for Mathematical Physics (Vienna) where this article reached its final form.

\end{document}